\documentclass[12pt,onecolumn]{IEEEtran}

\usepackage{amsmath,amsfonts,amssymb,amsthm}
\usepackage{mathtools,bm}
\usepackage{algorithmic}
\usepackage{algorithm}
\usepackage{array}
\usepackage[caption=false,font=normalsize,labelfont=sf,textfont=sf]{subfig}
\usepackage{textcomp}
\usepackage{stfloats}
\usepackage{url}
\usepackage{verbatim}
\usepackage{graphicx}
\usepackage{booktabs}
\usepackage{cite}
\usepackage{color}

\theoremstyle{plain}
\newtheorem{theorem}{Theorem}[section]
\newtheorem{lemma}[theorem]{Lemma}

\newtheorem{corollary}[theorem]{Corollary}

\newtheorem*{main}{Main Theorem}
\newtheorem*{lemma*}{Lemma}
\newtheorem*{corollary*}{Corollary}
\theoremstyle{definition}
\newtheorem{definition}[theorem]{Definition}

\newcommand{\repname}{}
\newtheorem*{repinner}{\repname}

\newenvironment{rep}[2]
{%
	\renewcommand{\repname}{#1~\ref{#2}}%
	\begin{repinner}%
	}
	{%
	\end{repinner}%
}

\theoremstyle{remark}
\newtheorem{remark}[theorem]{Remark}

\allowdisplaybreaks

\newcommand{\cC}{{\mathcal{C}}}

\newcommand{\bc}{{\mathbf{c}}}
\newcommand{\bg}{{\mathbf{g}}}

\newcommand{\bv}{{\mathbf{v}}}

\newcommand{\bx}{{\mathbf{x}}}
\newcommand{\by}{{\mathbf{y}}}
\newcommand{\bz}{{\mathbf{z}}}
\newcommand{\bh}{{\mathbf{h}}}
\newcommand{\be}{{\mathbf{e}}}

\newcommand{\tH}{{\mathbf{H}}}
\newcommand{\tI}{{\mathbf{I}}}
\newcommand{\tG}{{\mathbf{G}}}

\newcommand{\tP}{{\mathbf{P}}}

\newcommand{\tD}{{\mathbf{D}}}

\newcommand{\tM}{{\mathbf{M}}}
\newcommand{\tN}{{\mathbf{N}}}

\newcommand{\bzero}{{\mathbf{0}}}

\newcommand{\PAut}{{\operatorname{PAut}}}
\newcommand{\MAut}{{\operatorname{MAut}}}
\newcommand{\Gr}{{\operatorname{Gr}}}

\newcommand{\ord}{{\operatorname{ord}}}
\newcommand{\diag}{{\operatorname{diag}}}
\newcommand{\im}{{\operatorname{Im}}}
\newcommand{\Ker}{{\operatorname{Ker}}}
\newcommand{\Char}{{\operatorname{char}}}
\newcommand{\gf}{{\mathbb{F}}}
\newcommand{\wt}{{\operatorname{wt}}}
\newcommand{\Aut}{{\operatorname{Aut}}}

\def\qi#1 {\fbox {\footnote {\ }}\ \footnotetext { From Qi: {\color{red}#1}}}

\begin{document}
	
	\title{The automorphism groups of random linear codes}
	
	\author{Xiaoru Li, Qi Wang, and Yue Zhou \thanks{X. Li and Q. Wang were in part supported by the National Key Research and Development Program of China (No. 2025YFA1017700) and the National Natural Science Foundation of China (No. 12371522). Y. Zhou was supported by the National Natural Science Foundation of China (No.~12371337). X. Li is with the Department of Computer Science and Engineering, Southern University of Science and Technology, Shenzhen 518055, China (e-mail:
	lixr2024@mail.sustech.edu.cn), Q. Wang is with the Department of Computer Science and Engineering, Southern University of Science and Technology, Shenzhen 518055, China, and also with National Center for Applied Mathematics Shenzhen, Southern University of Science and Technology, Shenzhen 518055, China (e-mail: wangqi@sustech.edu.cn), Y. Zhou is with the College of Science, National University of Defense Technology, Changsha 410073, China (e-mail: yue.zhou.ovgu@gmail.com). (\emph{Corresponding author: Qi Wang})}}

	\maketitle
	
	\begin{abstract}
		The matching codewords framework is a key tool in recent algorithms for solving
		the Linear Code Equivalence (LCE) problem and in security analyses of
		LCE-based cryptographic schemes such as LESS. These analyses often rely
		on the assumption that a random \(q\)-ary linear code has no monomial
		automorphisms other than scalar multiples of the identity. For binary
		codes, Lefmann, Phelps, and R\"odl established the corresponding
		rigidity phenomenon in the relevant logarithmic dimension range. For
		general \(q\), Hou established an averaged result over all dimensions,
		whereas the recent prescribed-dimension result of Di Giusto and
		Ravagnani applies only in a restricted regime near \(n/2\).
		
		For every fixed prime power $q$ and every fixed real number $\varepsilon>0$, we prove that a uniformly random $k$-dimensional code $\mathcal{C}\subseteq\mathbb{F}_q^n$ has a trivial monomial automorphism group with probability tending to $1$ as $n\to\infty$, provided that $m:=\min\{k,n-k\}\geq(2+\varepsilon)\log_q n$.
		Furthermore, when $m \le 2 \log_q n + C$, where $C$ is a constant independent of $n$, we also show that the probability that the automorphism group of $\cC$ is nontrivial is at least $\frac{1}{2} - \varepsilon$ for large enough $n$. 
	\end{abstract}
	
	\begin{IEEEkeywords}
		linear code, random linear code, automorphism group, linear code equivalence problem, monomial equivalence
	\end{IEEEkeywords}
	
	\section{Introduction}\label{sec1}
	
	Let $\gf_q$ denote the finite field with $q$ elements, where $q$ is a prime power. Let $n$ be a positive integer, and $[n] = \{1, 2, \ldots, n\}$. An $[n,k]_q$ linear code $\cC$ is defined as a $k$-dimensional $\gf_q$-linear subspace of $\gf_q^n$. The vectors $\bc=(c_1, c_2, \dots, c_n) \in \cC$ are called {\em codewords} of $\cC$. The equivalence relation of linear codes is usually defined by linear isometries that preserve the distance. When linear codes are endowed with the Hamming metric, there exist three main notions of equivalence, each corresponding to a group action: permutation equivalence, monomial equivalence, and semilinear equivalence. Note that for binary linear codes, these three notions coincide. 
	
	The symmetric group $S_n$ is the group of all permutations on $[n]$, and acts on the vectors of $\gf_q^n$ by permuting the coordinate positions. All monomial isometries from $\gf_q^n$ to $\gf_q^n$ form a group $M_{n,q}=(\mathbb F_q^*)^n \rtimes S_n$, which is called the {\em monomial group}. In particular, for any $\sigma=(\bv; \pi) \in M_{n,q}$ and for any codeword $\bx=(x_1, x_2, \dots, x_n) \in \gf_q^n$, 
	\[\sigma(\bx)=(v_1x_{\pi^{-1}(1)}, v_2x_{\pi^{-1}(2)}, \dots, v_nx_{\pi^{-1}(n)}),\] 
	where $\bv=(v_1, v_2, \dots, v_n)$.
	For two linear codes $\cC_1$ and $\cC_2$, if there exists a linear isometry $\sigma \in M_{n,q}$ (resp., $\sigma \in S_n$) such that $\cC_2=\sigma(\cC_1)$, then $\cC_1$ and $\cC_2$ are said to be {\em monomially equivalent} (resp., {\em permutation equivalent}). More generally, two linear codes $\cC_1$ and $\cC_2$ are called {\em semilinearly equivalent}, if there exists a semilinear isometry $(\bv; (\alpha, \pi)) \in (\gf_q^*)^n \rtimes (\Aut(\gf_q) \times S_n)$ that maps $\cC_1$ onto $\cC_2$, i.e., 
	\[\cC_2=(\bv; (\alpha, \pi))(\cC_1)=\{(\bv; (\alpha, \pi))(\bc): \bc=(c_1, c_2, \dots, c_n) \in \cC_1\},\] 
	where 
	\[(\bv; (\alpha, \pi))(\bc)=(v_1\alpha(c_{\pi^{-1}(1)}), \dots, v_n\alpha(c_{\pi^{-1}(n)})).\] 
	In the contexts of cryptography, semilinear equivalence are usually not considered since the automorphism group $\Aut(\gf_q)$ is trivial for $q$ prime. For a linear code $\cC$, the permutation automorphism group is defined as $\PAut(\cC)=\{\pi \in S_n: \pi(\cC)=\cC\}$, and the monomial automorphism group is defined as $\MAut(\cC)=\{\sigma \in M_{n,q}: \sigma(\cC)=\cC\}$. For any $\sigma_{\alpha}=\alpha\tI_n \in M_{n,q}$ with $\alpha \in \gf_q^*$, it is clear that any linear code $\cC$ is $\sigma_{\alpha}$-stable, and  $\MAut_0:=\gf_q^*\tI_n \subseteq \MAut(\cC)$ for any linear code $\cC$.
	We say that a linear code $\cC \subseteq \gf_q^n$ has a {\em trivial} monomial automorphism group if $\MAut(\cC)=\MAut_0$. For simplicity, if not stated otherwise, hereafter the automorphism group is referred to as the monomial automorphism group.
	
	The interaction between codes and group actions is a central theme in coding theory because an automorphism group can reveal structural properties of a code~\cite[Ch. 17]{Huffman98}.  Automorphism groups also play important roles in code classification~\cite{Leon, Feulner}, as well as in the design of efficient decoding algorithms~\cite[Ch. 17]{Huffman98,GEECB21a}. Leon~\cite{Leon} gave an algorithm for determining the automorphism groups of codes and deciding whether two codes are equivalent. Another algorithm for computing the automorphism groups of codes was given by Feulner~\cite{Feulner}, and was able to calculate a canonical representative of a semilinear equivalence class of codes. 
	In addition, the automorphism groups of special classes of codes, such as cyclic codes~\cite{BK10, GG13, GG17, FHLX26}, generalized Reed-Muller codes~\cite{BC93}, and polar codes~\cite{GEECB21, MG24}, have been extensively studied. 
	
	The Linear Code Equivalence (LCE) problem has received considerable attention in post-quantum cryptography in recent years. The decision version of the LCE problem asks, given two linear codes $\cC_1$ and $\cC_2$, whether there exists $\sigma \in M_{n,q}$ such that $\cC_2=\sigma(\cC_1)$. Its search version asks to explicitly find such an isometry $\sigma$. The search version of the LCE problem is closely related to public key cryptosystems, such as the McEliece framework~\cite{McE}, and the LESS scheme~\cite{LESS}. Algorithms for solving the LCE problem are important for analyzing the security of the LESS scheme and especially the choice of its parameters. Currently, the matching codewords framework has become a key tool for analyzing both the concrete and asymptotic complexities of the LCE problem~\cite{Leon, Ward20, BBPS23, BEFN26, BE26, BBOSS26}. The core idea of this framework is to identify matching codewords between two equivalent codes using the Information Set Decoding (ISD) algorithm~\cite{Prange62}, thereby gradually recovering the underlying isometry through codeword matching. This framework often relies on the assumption that $q$-ary random codes have trivial automorphism groups~\cite{Ward20, BBPS23}, i.e., the LCE instances have a unique solution (up to scalars)~\cite{BEFN26, BE26, BBOSS26}.
	
	Although in the literature, it is widely assumed that random $q$-ary codes have trivial automorphism groups, this assumption has not been formally proved~\cite{BBPS23, HMSW26}. In the binary case, Oral and Phelps showed in~\cite{HP92} that the automorphism group of a binary self-dual code is trivial with high probability. Subsequently, Lefmann, Phelps, and R\"odl~\cite{LPR93} proved that the automorphism group of a binary random code is trivial with high probability. For $q$-ary random linear codes, Hou~\cite{Hou05} studied the asymptotic number of non-equivalent codes by employing Burnside's lemma, and showed that the proportion of all subspaces of $\mathbb{F}_q^n$ with a nontrivial automorphism group tends to zero as $n$ goes to infinity. However, this is an averaged result over all dimensions, and it does not directly imply that a uniformly random linear code of $k$-dimension has a trivial automorphism group with high probability. Very recently, Di Giusto and Ravagnani~\cite{GR26} investigated the asymptotic number of equivalence classes of linear codes with a given dimension. Their results imply that a random $k$-dimensional subspace of $\gf_q^n$ has a trivial automorphism group with high probability. However, this conclusion only holds when the dimension $k$ is within a restricted regime near $n/2$, i.e., $|k-\frac{n}{2}|=O(\sqrt{n})$. In general, the asymptotic behavior of the automorphism groups of random $q$-ary linear codes remains an open problem.
	
	In this paper, we address this problem by estimating the asymptotic probability of a random $k$-dimensional subspace of $\mathbb{F}_q^n$ having a nontrivial automorphism group. Our main result is as follows.
	\begin{main}
		Fix a prime power $q$ and a real number $\varepsilon>0$. Let $\cC \subseteq \gf_q^n$ be a uniformly random $k$-dimensional linear code, and let $m :=\min\{k, n-k\}$. If $m\ge (2+\varepsilon)\log_q n$,
		then $\cC$ has a trivial automorphism group with probability tending to one as $n$ goes to infinity.
	\end{main}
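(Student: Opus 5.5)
A first-moment (union-bound) argument over monomial maps, grouped by conjugacy type, is the natural route.

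\textbf{Reduction to the dual and to prime order.} Since $\sigma(\cC)=\cC$ iff $\sigma^{-T}$ (a monomial map) fixes $\cC^\perp$, we may assume $k\le n/2$, so $m=k$. If $\MAut(\cC)\neq\MAut_0$, then the quotient $\MAut(\cC)/\MAut_0$ is nontrivial and contains an element of prime order $p$. Thus there is $\sigma\in M_{n,q}$ with $\sigma\notin\MAut_0$ and $\sigma^p\in\MAut_0$ that fixes $\cC$. By the union bound,
\[
\Pr[\MAut(\cC)\ne\MAut_0]\le\sum_{\sigma}\Pr[\sigma(\cC)=\cC],
\]
where the sum runs over such $\sigma$ taken modulo scalars.

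\textbf{Probability that a fixed $\sigma$ stabilizes $\cC$.} Regard $\sigma$ as a linear map $T$ on $\gf_q^n$. The number of $T$-invariant $k$-subspaces is $\sum$ over invariant subspaces, which is controlled by the rational canonical form of $T$. The key quantity is the \emph{support} $s(\sigma)$: the number of coordinates not fixed up to the global scalar. More precisely, after multiplying by a scalar, let $s$ be the minimal number of coordinates $i$ with $\sigma$ acting on $e_i$ other than identity. I would prove a bound of the form
\[
\Pr[\sigma(\cC)=\cC]\le c_q\, q^{-\,k\cdot \lceil s/2\rceil\,(1-o(1))}\quad\text{or at least}\quad \le c_q\, q^{-k\min(s,\,n-k)/C}
\]
via a standard estimate: $T$ has a large eigenspace $E$ (the fixed part) of dimension $n-r$ with $r\ge s/2$ roughly (for prime order, $T$ restricted to the moved part has no eigenvalue-$\lambda$ vectors beyond half of them). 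An invariant subspace $\cC$ splits along the primary decomposition of $T$ when $\gcd(p,q)=1$, and is counted accordingly. When $p=\Char\gf_q$ one uses Jordan blocks instead, with a similar count. Comparing with $\binom{n}{k}_q\approx q^{k(n-k)}$ gives the decay $q^{-\Omega(k r)}$.

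\textbf{Counting and summing.} The number of $\sigma$ modulo scalars with support size $s$ is at most $\binom{n}{s}s!\,(q-1)^{s}\le (n(q-1))^{s}$. The dangerous terms are small $s$. For $s=2$ (a transposition times scalars, or a two-coordinate diagonal twist), the precise probability is about $q^{-k}$ times constants. For example, for a diagonal map scaling a single coordinate, invariance forces $\cC$ to contain $e_i$ or lie in $e_i^\perp$, which has probability about $q^{-(n-k)}+q^{-k}$. The $s=2$ class has about $n^2$ elements, giving a total of $n^2q^{-k}\le n^{-\varepsilon}\to0$ exactly when $k\ge(2+\varepsilon)\log_q n$; this explains the threshold. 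For general $s$, the contribution is at most $(nq)^{s}q^{-c k s}$ for $s\le n/2$, a geometric series dominated by its small-$s$ terms. For large $s$, the count $|M_{n,q}|\le n!q^n$ must be beaten by $q^{-\Omega(kn)}$, which holds since $k\ge(2+\varepsilon)\log_q n$ gives $q^{kn/C}\gg n^n$ only if $C$ is small. So the constants must be done carefully.

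\textbf{Main obstacle.} The hard step is the uniform bound $\Pr[\sigma(\cC)=\cC]\le q^{-(1-o(1))k\lceil s/2\rceil}$ with the sharp constant, needed so that the small-$s$ terms $n^{s}q^{-ks/2}$ sum to $o(1)$ at the $2\log_q n$ threshold. The large-$s$ regime also needs it, with a comparable constant, to beat $n!$. I would first try to get this bound via Gaussian-binomial counts of $T$-invariant subspaces, using the decomposition into eigenspaces over the splitting field of $x^p-\lambda$. I would then bound the number of invariant $k$-subspaces by $\max_{\sum k_j=k}\prod_j\binom{n_j}{k_j}_{q^{d_j}}$ and optimize the exponent $\sum k_j(n_j-k_j)d_j$ against $k(n-k)$.
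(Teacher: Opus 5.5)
Your outline has the same architecture as the paper: duality to get $k\le n/2$, Cauchy's theorem in $\MAut(\cC)/\MAut_0$ to reduce to prime-order cosets, a union bound, counting $\tM$-invariant $k$-subspaces via the primary decomposition (with the $\gf_{q^{d}}$-structure on non-split components), and comparison with $q^{k(n-k)}$. The gap is in the estimate you single out as the key step. As stated it is false, and the idea that makes the argument work is missing.

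\textbf{The uniform bound fails.} The bound $\Pr[\sigma(\cC)=\cC]\le c_q\,q^{-(1-o(1))k\lceil s/2\rceil}$ breaks once $k$ is proportional to $n$. Take $q$ odd, $4\mid n$, $k=n/2$, and $\sigma$ a product of $n/2$ disjoint transpositions, so $s=n$. Then $\gf_q^n=V_+\oplus V_-$ with $\dim V_\pm=n/2$. The subspaces $A\oplus B$ with $A\le V_+$, $B\le V_-$ and $\dim A=\dim B=n/4$ already give probability of order at least $q^{-n^2/8}$, not $q^{-n^2/4}$.

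Your fallback $c_q\,q^{-k\min(s,n-k)/C}$ with a single constant $C$ cannot work either when $\varepsilon\le 2/3$. Single transpositions (about $n^2$ of them, each with probability about $q^{-k}$) force $C<2+\varepsilon$ at $k\approx(2+\varepsilon)\log_q n$. But for $k=n/2$, a product of $n/4$ transpositions has probability of order at least $q^{-3n^2/32}$, which forces $C\ge 8/3$.

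\textbf{The missing idea: split by regime.} The sharp constant is needed, and true, only when $k$ is small. The paper does this split in Theorem~\ref{thm-I} and Subcase~1.1 of Theorem~\ref{thm-primeorder}: it separates $k\le(\log_q n)^2$ from $k>(\log_q n)^2$, and $k+\bar n\le n/2$ from $k+\bar n>n/2$, where $\bar n$ is the codimension of the largest eigenspace.
\begin{itemize}
\item If $k+\bar n\le n/2$, one gets $(k+1)q^{-k\bar n}$ up to constants.
\item If $k+\bar n>n/2$ and $k$ is small, then $\bar n\ge n/3$. So the loss $k^2$ in $q^{-k(\bar n-k)}$ and the factor $(c_2(k+1))^{\bar n+1}$ are $q^{o(\bar n\log_q n)}$, and the sharp constant survives.
\item If $k$ is large, a lossy exponent such as $k\bar n/4$ suffices, because $k/4\gg\log_q n$.
\end{itemize}
Your remark that the large-$s$ terms also need a comparable constant is correct only in the small-$k$ regime.

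\textbf{The characteristic case.} You dispatch $\ell=\Char(\gf_q)=p$ with ``Jordan blocks instead, with a similar count.'' After normalizing to $\tM^p=\tI_n$, the map $\tM$ is unipotent and its primary decomposition is trivial. Its invariant subspaces are therefore not products of Grassmannians of fixed summands, so no ``similar'' count is available. Yet this case contains products of boundedly many transpositions in characteristic $2$, which sit exactly at the $2\log_q n$ threshold and so require a sharp count.

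The paper supplies this in Lemma~\ref{lem-Jor}. Write $\tN=\tM-\tI_n$, of Jordan type $(p^t,1^f)$. An invariant $\cC$ is rebuilt along the chain $\cC\tN^{p-1}\subseteq\cdots\subseteq\cC\tN\subseteq\cC$. Each lift from $\cC\tN^{j}$ to $\cC\tN^{j-1}$ is counted by $q^{g_{j+1}(\nu_j-g_j)}\genfrac{[}{]}{0pt}{}{\nu_j-g_{j+1}}{g_j-g_{j+1}}_q$. The resulting exponent $tk+fg_1-\sum_j g_j^2$ is then optimized against $k(n-k)$: via Cauchy--Schwarz with a split at $f=2k$ for $p\ge3$, and via Corollary~\ref{cor-ordertwo} for $p=2$. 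Without such a lemma, your plan has no estimate for this family.

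\textbf{A smaller point.} For $\ell\ne p$, the invariant-subspace count is a sum over dimension vectors, not a maximum. Since $\ell$ can be as large as $n$, the number of such vectors must also be absorbed, for instance as $(k+1)^{\bar n+1}$ using $\ell\le\bar n+1$.
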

	
	As the counterpart of the case of trivial automorphism groups, when $m$ is comparably small, zero columns or proportional columns are more likely to appear in its generator matrix or parity-check matrix. This may lead to the existence of nontrivial automorphism groups. In particular, we further prove that the probability that a random $k$-dimensional linear code $\mathcal{C}$ has a nontrivial automorphism group is at least $\frac{1}{2} - \varepsilon$ whenever $m$ is upper bounded by $2 \log_q n + C$ with $C$ a constant independent of $n$ (see Theorem~\ref{thm-nontrivial}).
	
	The remainder of this paper is organized as follows. In Section \ref{sec2}, we introduce some notations and some important preliminary results that will be used throughout this paper. In Section \ref{sec3}, we deal with the case of nontrivial automorphisms for random $k$-dimensional linear codes. In Section \ref{sec4}, we prove the main theorem by counting all possible cases of invariant subspaces under monomial isometries of prime order with respect to the quotient group. Finally, in Section \ref{sec5}, we conclude the paper with some open problems.

	\section{Preliminaries}\label{sec2}
	
	\subsection{Notations}
	Let $q$ be a prime power, and let $\gf_q$ be the finite field with $q$ elements. We denote by $\gf_q^*$ its multiplicative group, i.e., $\gf_q^*=\gf_q \setminus \{0\}$. The identity matrix of order $n$ is denoted by $\tI_n$. Matrices and vectors are written in bold uppercase and lowercase letters, respectively. Vectors are assumed to be row vectors. For $n \in \mathbb{N}$, let $[n]:=\{1,2,\dots,n\}$. Sets are denoted by uppercase letters and for a set $S$, we denote by $|S|$ its cardinality. Denote by $S_n$ the symmetric group on $n$ elements, i.e., the group of permutations of $[n]$, and $M_{n,q} = (\gf_q^{*})^n \rtimes S_n$ the monomial group, i.e., the group of all monomial isometries from $\gf_q^n$ to $\gf_q^n$. For all asymptotic statements, assume that $q$ is fixed, $n\to\infty$, and $k=k(n)$ may vary with $n$.
	
	\subsection{Linear codes}
	
	In this subsection, we introduce some background in coding theory, and for more details, we refer to~\cite{fundamentalsofECcodes}.
	
	\begin{definition}
		Let $1 \leq k \leq n$ be an integer. An $[n, k]_q$ linear code $\cC$ is a $k$-dimensional linear subspace of $\gf_q^n$. The parameter $n$ is called the {\em length} of $\cC$, $k$ is called the {\em dimension} of $\cC$, and the vectors in the code are called {\em codewords}. A matrix $\tG \in \gf_q^{k \times n}$ is called a {\em generator matrix} of $\cC$ if $\cC=\{\bx\tG: \bx \in \gf_q^k\}$.
	\end{definition}
	
	By convention, we usually write $\langle\tG\rangle$ to denote the linear code $\cC$ generated by the matrix $\tG$. 
	
	\begin{definition}
		Let $1 \leq k \leq n$ be an integer and let $\cC \subseteq \gf_q^n$ be an $[n, k]_q$ linear code. The dual code $\cC^{\perp}$ is an $[n, n-k]_q$ linear code defined as 
		\begin{eqnarray*}
			\cC^{\perp}=\{(v_1,v_2,\ldots, v_n) \in \gf_q^n: \sum_{i=1}^{n}v_ic_i=0, \ \forall \ (c_1,c_2,\ldots, c_n) \in \cC\}.
		\end{eqnarray*}
		A matrix $\tH \in \gf_q^{(n-k) \times n}$ is called a {\em parity-check matrix} of $\cC$ if $\cC^{\perp}=\langle \tH \rangle$.
	\end{definition}
	
	In order to measure how far apart two vectors are, we usually endow $\gf_q^n$ with the Hamming metric.
	
	\begin{definition}
		Let $n$ be a positive integer. The {\em Hamming weight} of a vector $\bx=(x_1, \dots, x_n) \in \gf_q^n$ is defined by the number of its nonzero coordinates, i.e., $\wt_H(\bx)=|\{i \in [n]: x_i \neq 0\}|$. The {\em Hamming distance} between $\bx,\by \in \gf_q^n$ is given by the Hamming weight of their difference vector, i.e., $d_H(\bx, \by)=\wt_H(\bx-\by)$.
	\end{definition}
	
	As the indicator of the error-correcting capability of a linear code  $\cC$, its minimum distance $d_H$ is defined as:
	\begin{eqnarray*}
		d_H(\cC) &:=& \min\{d_H(\bc, \bc'): \bc, \bc' \in \cC \textrm{ and } \bc\neq \bc'\}\\
		& = &\min\{\wt_H(\bc) : \bzero \neq \bc \in \cC\},
	\end{eqnarray*}
	and $\cC$ can correct up to $\lfloor (d_H(\cC)-1)/2\rfloor$ errors.
	
	For $1 \leq k \leq n$, the Grassmannian {$\Gr_q(k, n)$} is the set of all linear codes of dimension $k$ in $\gf_q^n$, and its cardinality is the {\em Gaussian coefficient}, denoted by $\genfrac{[}{]}{0pt}{}{n}{k}_q$. It is well known that
\begin{eqnarray*}
\genfrac{[}{]}{0pt}{}{n}{k}_q = \prod_{j=0}^{k-1} \frac{q^{n-j}-1}{q^{k-j}-1}.
\end{eqnarray*}
 The following lemma provides an estimate for the Gaussian coefficients. 
	
	\begin{lemma}\label{lem-Gau}\cite{I15, NP95}
		For fixed $q$, there exist constants $c_1, c_2$ such that
		\[
		c_1 q^{k(n-k)}
		\le
		\genfrac{[}{]}{0pt}{}{n}{k}_q
		\le
		c_2 q^{k(n-k)}
		\] for all $1 \le k\le n$. 
	\end{lemma}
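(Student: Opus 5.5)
The plan is to start from the product formula
\[
\genfrac{[}{]}{0pt}{}{n}{k}_q=\prod_{j=0}^{k-1}\frac{q^{n-j}-1}{q^{k-j}-1}
\]
and compare it directly with $q^{k(n-k)}$. Since $\genfrac{[}{]}{0pt}{}{n}{k}_q=\genfrac{[}{]}{0pt}{}{n}{n-k}_q$ and the exponent $k(n-k)$ is symmetric in $k$ and $n-k$, we may assume $k\le n-k$, although the argument below does not really need this. Reindexing by $i=k-j$, the numerator factor becomes $q^{n-k+i}-1$ and the denominator factor becomes $q^{i}-1$, for $i=1,\dots,k$. Pulling out $q^{n-k+i}/q^{i}=q^{n-k}$ from each factor gives
\[
\genfrac{[}{]}{0pt}{}{n}{k}_q = q^{k(n-k)}\prod_{i=1}^{k}\frac{1-q^{-(n-k+i)}}{1-q^{-i}}.
\]
The task is then to bound the remaining product above and below by positive constants depending only on $q$.

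For the upper bound, each numerator factor is at most $1$. Hence the product is at most $\prod_{i=1}^{k}(1-q^{-i})^{-1}\le\prod_{i\ge1}(1-q^{-i})^{-1}$. This infinite product converges because $\sum_i q^{-i}<\infty$, and $q\ge2$ ensures every factor is positive. So we may take
\[
c_2=\prod_{i\ge1}(1-q^{-i})^{-1}\le\prod_{i\ge1}(1-2^{-i})^{-1}<\infty .
\]

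For the lower bound, each denominator factor is at most $1$, so the product is at least $\prod_{i=1}^{k}(1-q^{-(n-k+i)})$. Because $n-k+i\ge i$, this is at least $\prod_{i\ge1}(1-q^{-i})$. Again this is a convergent infinite product of positive factors, so it is a positive constant; call it $c_1$. One can make it explicit, for instance using $\log(1-x)\ge -2x$ for $x\le 1/2$, which gives $c_1\ge q^{-2/(q-1)}\cdot$const. Only positivity is needed, however.

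There is no real obstacle here. The only point requiring care is to justify that $\prod_{i\ge1}(1-q^{-i})$ is strictly positive, and this follows from $\sum_i q^{-i}=1/(q-1)<\infty$ together with each factor lying in $(0,1)$. With $c_1=\prod_{i\ge1}(1-q^{-i})$ and $c_2=c_1^{-1}$, both constants depend only on $q$ and the bounds hold uniformly for all $1\le k\le n$. The case $k=n$ is trivial, since then $\genfrac{[}{]}{0pt}{}{n}{n}_q=1=q^{0}$.
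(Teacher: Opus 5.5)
Your proof is correct. The factorization
\[
\genfrac{[}{]}{0pt}{}{n}{k}_q=q^{k(n-k)}\prod_{i=1}^{k}\frac{1-q^{-(n-k+i)}}{1-q^{-i}}
\]
is right. For the upper bound you drop the numerators; for the lower bound you drop the denominators and use $n-k+i\ge i$. This gives $c_1=\prod_{i\ge1}(1-q^{-i})>0$ and $c_2=c_1^{-1}$.

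The paper gives no argument for this lemma and simply cites \cite{I15, NP95}. So your proposal is a self-contained replacement for a citation; it is the standard estimate.

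Its real benefit is the explicit form of the constants, which the paper relies on without saying so. Later the lemma is applied not only in base $q$ but also to $\genfrac{[}{]}{0pt}{}{n/2}{k/2}_{q^2}$ and to $\genfrac{[}{]}{0pt}{}{a_i}{r_i}_{q^{d_i}}$, where the degrees $d_i$ may grow with $n$. Each time a single constant $c_2$ is used, and it is then raised to powers such as $\bar{n}+1$. The hypothesis ``for fixed $q$'' does not by itself justify this, but your formula does. Since $\prod_{i\ge1}(1-Q^{-i})^{-1}$ decreases in $Q$, the universal constant $\prod_{i\ge1}(1-2^{-i})^{-1}$ you already noted works for every base $Q\ge2$ at once.

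One small correction to your aside on an explicit value: with the natural logarithm, $\log(1-x)\ge-2x$ gives $c_1\ge e^{-2/(q-1)}$. This is immaterial, since only positivity of $c_1$ is used.
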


	\subsection{Code equivalence}
	
	In this subsection, we review the definitions of code equivalence, the automorphism group of codes, and some preliminary results that will be used later.
	
	\begin{definition}
		A linear map $\varphi : \gf_q^n \rightarrow \gf_q^n$ is called an {\em isometry} if it preserves the distance of any two codewords. That is, for all $\bx, \by \in \gf_q^n$, we have $d(\bx, \by)=d(\varphi(\bx), \varphi(\by))$.
	\end{definition}
	
	For $n \geq 3$, all the isometries of the Hamming metric in $\mathbb{F}_q^n$ that map subspaces onto subspaces are exactly the semilinear mappings of the form $(\bv; (\alpha, \pi))$, where $(\bv; \pi) \in M_{n,q}$ is a monomial isometry and $\alpha \in \Aut(\mathbb{F}_q)$~\cite{BBFKKW06}. Throughout this paper, we mainly focus on a subset of the Hamming-metric isometries, namely the monomial group $M_{n,q}$. Any permutation $\pi \in S_n$ can be viewed as an $n \times n$ permutation matrix where $\tP(i, j)=1$ if and only if $j=\pi(i)$. Any monomial 
	map $\sigma=(\bv; \pi) \in M_{n,q}$ can be represented as a monomial matrix $\tM=\tD\tP$, where $\tD=\operatorname{diag}(v_{\pi(1)},\dots, v_{\pi(n)})$ is a diagonal matrix and $\tP$ is the $n \times n$ permutation matrix defined by $\tP(i, j)=1$ if and only if $j=\pi(i)$. By abuse of notation, we write $\tP\in S_n$ if $\tP$ is the permutation matrix corresponding to $\pi\in S_n$, and write $\tM=\tD\tP\in M_{n,q}$ if $\tM$ is the monomial matrix corresponding to $\sigma \in M_{n,q}$.

	\begin{definition}
		Two linear codes $\cC_1$ and $\cC_2$ are {\em monomially equivalent} if there exists a map $\sigma \in M_{n,q}$ such that $\cC_2=\sigma(\cC_1)$. Two linear codes $\cC_1$ and $\cC_2$ are {\em permutation equivalent} if there exists a map $\pi \in S_n$ such that $\cC_2=\pi(\cC_1)$. 
	\end{definition}
	
	The monomial automorphism group of a linear code can be accordingly defined as the set of all the monomial isometries that map it to itself.
	
	\begin{definition}
		Let $\cC$ be an $[n,k]_q$ linear code. The (monomial) automorphism group of $\cC$ is given by the monomial isometries that map $\cC$ to $\cC$, i.e.,
		\begin{eqnarray*}
			\MAut(\cC)=\{\sigma \in (\gf_q^{*})^n \rtimes S_n: \sigma(\cC)=\cC\}.
		\end{eqnarray*}
	\end{definition}
	If for a linear code $\cC$ it holds that $\MAut(\cC) = \MAut_0:=\gf_q^*\tI_n$, then we say that the linear code $\cC$ has a {\em trivial} automorphism group. The following lemma established a link on the automorphism groups of a linear code and its dual code.
	
	
	\begin{lemma}\cite{fundamentalsofECcodes}\label{lem-dualag}
		Let $\cC$ be an $[n,k]_q$ linear code. Then $\MAut(\cC^{\perp})=\{\tD^{-1}\tP : \tD\tP \in \MAut(\cC)\}$.
	\end{lemma}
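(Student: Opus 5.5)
The plan is to compute directly with matrices, using the standard description of the dual code. Fix a generator matrix $\tG\in\gf_q^{k\times n}$ of $\cC$, so that $\cC^{\perp}=\{\bx\in\gf_q^n:\tG\bx^{T}=\bzero\}$. A monomial map $\tM=\tD\tP$ acts on row vectors by $\bc\mapsto\bc\tM$, so $\tM\in\MAut(\cC)$ exactly when the row space of $\tG\tM$ equals $\cC$. I will first check that for $\tM=\tD\tP$ the matrix $(\tM^{-1})^{T}=\tD^{-1}(\tP^{-1})^{T}=\tD^{-1}\tP$ is again monomial. This uses two facts: $\tP^{-1}=\tP^{T}$ for a permutation matrix, and the transpose of a diagonal matrix is itself.

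The key step is the identity $\MAut(\cC^{\perp})=\{(\tM^{-1})^{T}:\tM\in\MAut(\cC)\}$. Suppose $\tM\in\MAut(\cC)$ and $\by\in\cC^{\perp}$. For every $\bc\in\cC$ we have $\bc\tM^{-1}\in\cC$, since $\tM^{-1}$ also stabilizes $\cC$. Hence
\[
\langle \bc,\by(\tM^{-1})^{T}\rangle=\bc\,\tM^{-1}\by^{T}=\langle \bc\tM^{-1},\by\rangle=0,
\]
so $\by(\tM^{-1})^{T}\in\cC^{\perp}$. Thus $(\tM^{-1})^{T}$ maps $\cC^{\perp}$ into itself. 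Because it is invertible and $\cC^{\perp}$ is finite-dimensional, it maps $\cC^{\perp}$ onto itself, and so it lies in $\MAut(\cC^{\perp})$. For the reverse inclusion, apply the same argument to $\cC^{\perp}$ in place of $\cC$ and use $(\cC^{\perp})^{\perp}=\cC$. The map $\tM\mapsto(\tM^{-1})^{T}$ is an involution, so the two inclusions give equality.

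Finally, substituting $(\tM^{-1})^{T}=\tD^{-1}\tP$ turns this identity into the stated form $\{\tD^{-1}\tP:\tD\tP\in\MAut(\cC)\}$. The main point needing care is bookkeeping of conventions: the paper writes $\tD=\diag(v_{\pi(1)},\dots,v_{\pi(n)})$, and a monomial matrix can be written as $\tD\tP$ or $\tP\tD'$. I have to make sure the transpose-inverse comes out as $\tD^{-1}\tP$ with the same $\tP$ and the inverted diagonal on the left, which it does by the calculation above. There is no real obstacle; the argument is just the non-degeneracy of the standard bilinear form together with $(\cC^{\perp})^{\perp}=\cC$.
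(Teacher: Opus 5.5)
Your proof is correct and complete. The stabilizer property transfers to $\cC^{\perp}$ via $\bc\tM^{-1}\by^{T}=0$. The reverse inclusion follows from $(\cC^{\perp})^{\perp}=\cC$ together with the fact that $\tM\mapsto(\tM^{-1})^{T}$ is an involution. The identity $(\tM^{-1})^{T}=\tD^{-1}\tP$ is pure matrix algebra, so it holds for any factorization $\tM=\tD\tP$ and the indexing of the diagonal entries does not matter.

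The paper gives no proof of this lemma; it simply quotes it from Huffman--Pless. Your argument is the standard proof of that fact, and it matches the paper's convention that monomial matrices act on row vectors by right multiplication.
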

	
	It then follows from Lemma~\ref{lem-dualag} that $\MAut(\cC) \cong \MAut(\cC^{\perp})$, and it is sufficient to study the automorphism group of linear codes of dimension $k \leq n/2$. The following fundamental theorem in group theory will be used in the proof of the main theorem.
	\begin{lemma}\cite{Algebra}\label{lem-cauchy}
		If $G$ is a finite group whose order is divisible by a prime $p$, then $G$ contains an element of order $p$.
	\end{lemma}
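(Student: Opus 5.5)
The plan is McKay's counting proof of Cauchy's theorem. It avoids any induction on $|G|$ and any use of the class equation. Let $X$ be the set of $p$-tuples $(g_1,\dots,g_p)\in G^p$ with $g_1g_2\cdots g_p=1$. The first step is to count $X$. The entries $g_1,\dots,g_{p-1}$ may be chosen freely, and then $g_p=(g_1\cdots g_{p-1})^{-1}$ is forced, so $|X|=|G|^{p-1}$. Since $p$ divides $|G|$ and $p-1\ge 1$, this gives $|X|\equiv 0 \pmod p$.

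The second step is to put a group action on $X$. The cyclic group $\mathbb{Z}/p\mathbb{Z}$ acts by cyclic shift, $(g_1,\dots,g_p)\mapsto(g_2,\dots,g_p,g_1)$. We must check that $X$ is preserved. If $g_1a=1$ with $a=g_2\cdots g_p$, then $a=g_1^{-1}$, so $ag_1=1$, and the shifted tuple again has product $1$. By orbit–stabilizer, every orbit has size dividing $p$, so each orbit has size $1$ or $p$. An orbit of size $1$ is a constant tuple $(g,\dots,g)$ with $g^p=1$.

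The third step is to count modulo $p$. Let $N$ be the number of fixed points, that is, the number of $g\in G$ with $g^p=1$. Since $X$ is the disjoint union of its orbits, $|X|\equiv N \pmod p$, and hence $p\mid N$. The identity contributes $1$ to $N$, so $N\ge p\ge 2$. Therefore some $g\ne 1$ satisfies $g^p=1$. As $p$ is prime, the order of $g$ divides $p$ and is not $1$, so it equals $p$.

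No step here is a real obstacle. The only point needing care is checking that the shift preserves the product-one condition, which is exactly the short conjugation argument given in the second step.
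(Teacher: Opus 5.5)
Your argument is correct and complete: it is McKay's counting proof of Cauchy's theorem. Each step checks out.
\begin{itemize}
\item The count $|X|=|G|^{p-1}$ holds and is divisible by $p$.
\item $X$ is closed under the cyclic shift, since $g_2\cdots g_pg_1=g_1^{-1}(g_1g_2\cdots g_p)g_1=1$.
\item A tuple fixed by the shift must be constant, so the fixed points correspond exactly to the solutions of $g^p=1$.
\item The congruence $|X|\equiv N \pmod p$ together with $N\ge 1$ forces $N\ge p$.
\end{itemize}

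The paper does not prove this lemma. It quotes it from a standard algebra text \cite{Algebra} and uses it as a black box. It applies the lemma to the finite quotient $H_{\cC}=\MAut(\cC)/\MAut_0$: if $\MAut(\cC)$ is nontrivial, some coset has prime order, which justifies the union bound over prime-order elements.

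Common textbook proofs instead go by induction on $|G|$, treating the abelian case first and then using the class equation. Your route is shorter and treats abelian and nonabelian groups uniformly. It also gives slightly more: the number of solutions of $g^p=1$ is divisible by $p$, so $G$ has at least $p-1$ elements of order $p$. For the paper's purposes only existence is needed, so either proof serves equally well.
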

	
	The following lemma, well known as the {\em primary decomposition theorem}~\cite{basic-algebra}, will be used to determine the number of $\tM$-invariant subspaces, where $\tM \in M_{n,q}$ is a monomial isometry.
	
	\begin{lemma}\cite{basic-algebra}\label{lem-PD}
		Let \(\mathcal{L} \colon V \to V\) be linear on a finite-dimensional vector space $V$ over
		\(\mathbb{K}\), and let
		\[
		m(X)=f_1(X)^{l_1}\cdots f_s(X)^{l_s}
		\]
		be the unique factorization of the minimal polynomial \(m(X)\) of \(\mathcal{L}\)
		into the product of powers of distinct monic prime polynomials \(f_j(X)\).
		Define
		$U_j=\Ker\bigl(f_j(\mathcal{L})^{l_j}\bigr)$ for $1\le j\le s$. Then:
		\begin{enumerate}
			\item
			$
			V=U_1\oplus\cdots\oplus U_s.
			$
			
			\item Each vector subspace \(U_j\) is invariant under \(\mathcal{L}\).
			
			\item Any linear map from \(V\) to itself that commutes with \(\mathcal{L}\)
			carries each \(U_j\) into itself.
			
			\item Any vector subspace \(W\) invariant under \(\mathcal{L}\) has the property that
			\[
			W=(W\cap U_1)\oplus\cdots\oplus(W\cap U_s).
			\]
		\end{enumerate}
	\end{lemma}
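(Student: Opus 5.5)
The plan is the standard Bézout-idempotent argument: build, from the factorization of $m(X)$, polynomials in $\mathcal{L}$ that act as the projections onto the $U_j$. Every one of the four claims then follows because these projections are polynomials in $\mathcal{L}$.

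First I will construct the idempotents. For each $j$ put $g_j(X)=m(X)/f_j(X)^{l_j}=\prod_{i\neq j}f_i(X)^{l_i}$. The $f_i$ are distinct monic irreducibles, so no irreducible polynomial divides all of $g_1,\dots,g_s$. Hence $\gcd(g_1,\dots,g_s)=1$, and since $\mathbb{K}[X]$ is a PID there are polynomials $a_j(X)$ with $\sum_{j}a_j(X)g_j(X)=1$. Define $E_j=a_j(\mathcal{L})g_j(\mathcal{L})$. Then $\sum_j E_j=\mathrm{id}_V$. For $i\neq j$ the product $g_ig_j$ is divisible by $m$, so $E_iE_j=0$. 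Multiplying $\sum_i E_i=\mathrm{id}$ by $E_j$ gives $E_j^2=E_j$.

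Next I will identify $\im E_j$ with $U_j$. Since $f_j^{l_j}g_j=m$ and $m(\mathcal{L})=0$, we get $f_j(\mathcal{L})^{l_j}E_j=0$, so $\im E_j\subseteq U_j$. Conversely, let $u\in U_j$. For $i\neq j$ the factor $f_j^{l_j}$ divides $g_i$, so $E_iu=0$. Then $u=\sum_iE_iu=E_ju\in\im E_j$. Thus $U_j=\im E_j$, and $V=\sum_j U_j$ follows from $\sum_jE_j=\mathrm{id}$. For directness, suppose $\sum_j u_j=0$ with $u_j\in U_j$. Applying $E_i$ kills every $u_j$ with $j\neq i$ and fixes $u_i$, so $u_i=0$. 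This proves item 1.

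Items 2 and 3 follow from commutation. If $\mathcal{T}$ commutes with $\mathcal{L}$, it commutes with $f_j(\mathcal{L})^{l_j}$, and therefore maps $\Ker f_j(\mathcal{L})^{l_j}$ into itself. Taking $\mathcal{T}=\mathcal{L}$ gives item 2. For item 4, let $W$ be $\mathcal{L}$-invariant. Then $W$ is invariant under every polynomial in $\mathcal{L}$, in particular under each $E_j$. Hence any $w\in W$ decomposes as $w=\sum_jE_jw$ with $E_jw\in W\cap U_j$. Directness of this sum is inherited from item 1.

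There is no real obstacle here. The only point needing care is the coprimality of the $g_j$ together with the divisibility facts ($m\mid g_ig_j$ for $i\neq j$, and $f_j^{l_j}\mid g_i$ for $i\neq j$). Both follow directly from the distinctness of the irreducible factors $f_j$.
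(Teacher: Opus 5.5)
Your argument is correct and complete. The coprimality of $g_1,\dots,g_s$, the divisibilities $m\mid g_ig_j$ and $f_j^{l_j}\mid g_i$ for $i\neq j$, and the resulting identities $\sum_j E_j=\mathrm{id}$, $E_iE_j=0$ and $\im E_j=U_j$ all check out. The degenerate case $s=1$ (where $g_1=1$, $E_1=\mathrm{id}$, $U_1=V$) is covered automatically.

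The paper gives no proof of this lemma; it simply quotes it from \cite{basic-algebra}. What you wrote is the standard B\'ezout-idempotent proof found in such texts.

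One useful observation: your argument uses only $m(\mathcal{L})=0$ and the pairwise coprimality of the blocks, never the minimality of $m$. So it also directly justifies the coarser two-block splittings the paper uses later. An example is $\gf_q^n=\Ker(\tM-\alpha_0\tI_n)\oplus\Ker(h(\tM))$, obtained from $X^\ell-\lambda=(X-\alpha_0)h(X)$ in Subcase~1.1 of Theorem~\ref{thm-primeorder}, which the paper attributes to ``B\'ezout's identity'' without further comment.
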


	Note that the decomposition in 1) of Lemma~\ref{lem-PD} is called the {\em primary decomposition} of $V$ under $\mathcal{L}$, and the vector subspaces $U_j$ are called the {\em primary subspaces} of $V$ under $\mathcal{L}$.
	The following lemma is important to tackle complicated subcases in the proof of the main theorem later. For the sake of readability, its proof is postponed in Appendix \ref{appendix}.

	\begin{lemma}\label{lem-Jor}
		Let $V:=\gf_q^n$, $p:=\Char(\gf_q)$, and let $\tN: V \to V$ be a nilpotent linear transformation whose Jordan type is $(p^t, 1^f)$. For $1 \leq j \leq p$, define $V_j:=V\tN^{j-1}$, $V_{p+1}=\bzero$, and define the map 
		\begin{eqnarray*}
			\tN_{j} : V_j &\rightarrow& V_{j+1},\\
			\bx &\mapsto& \bx\tN.
		\end{eqnarray*} Let $\cC_j:=\cC\tN^{j-1}$ for $1 \leq j \leq p$, and $\cC_{p+1}=\bzero$. Let $K_j:=\Ker(\tN_{j})$, $\nu_j:=\dim(K_j)$, and $g_j:=\dim(\cC_j \cap K_j)$. 
		Let $T$ denote the number of $k$-dimensional subspaces $\cC \subseteq \gf_q^n$ satisfying $\cC\tN \subseteq \cC$. Then
		\[
		T \leq \sum_{\substack{
				\nu_j\ge g_j\ge g_{j+1}\ge 0\\
				\sum g_j=k
		}}
		\prod_{j=1}^p
		q^{g_{j+1}(\nu_j-g_j)}
		\left[\begin{matrix}
			\nu_j-g_{j+1}\\
			g_j-g_{j+1}
		\end{matrix}\right]_q,
		\]
		where $g_{p+1}=0$ and $\nu_j=\dim(K_j)=\left\{\begin{array}{ll}
			t+f, &  j=1, \\
			t, & 2 \leq j \leq p. 
		\end{array}\right.$
	\end{lemma}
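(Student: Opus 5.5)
We will prove Lemma~\ref{lem-Jor} by building an injective encoding of each $\tN$-invariant $k$-dimensional subspace $\cC$ through the filtration $\cC=\cC_1\supseteq \cC_2\supseteq\cdots\supseteq\cC_p\supseteq\cC_{p+1}=\bzero$. The encoding records, for each $j$, the piece $\cC_j\cap K_j$ together with a lift. The starting observation is that for $\cC$ with $\cC\tN\subseteq\cC$ we have $\cC_{j+1}=\cC_j\tN\subseteq\cC_j$. Since $\tN_j$ restricted to $\cC_j$ has kernel $\cC_j\cap K_j$, it follows that
\[
\dim \cC_j=\dim(\cC_j\cap K_j)+\dim\cC_{j+1}.
\]
Hence $\dim\cC_j=\sum_{i\ge j} g_i$, and in particular $\sum_j g_j=\dim\cC=k$.

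Next we establish the chain $g_j\ge g_{j+1}$. The map $\bx\mapsto\bx\tN$ sends $\cC_j\cap K_j$ into $\cC_{j+1}\cap K_{j+1}$, because $K_j\tN\subseteq$ (kernel at level $j+1$). This is immediate from the Jordan type $(p^t,1^f)$. The key structural fact is the reverse: every vector of $\cC_{j+1}\cap K_{j+1}$ has the form $\by\tN$ with $\by\in\cC_j$, and we may take $\by\in K_j$ modulo adjustment. Since all nontrivial blocks have the same size $p$, we have $K_{j+1}=K_j\tN$ for $j\ge 2$, and $K_2=K_1\tN$ restricted to the $t$ big blocks. For the counting bound we therefore reverse the roles. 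We note that $\tN$ maps $V_j\cap\Ker(\tN^{p-j+1})$ isomorphically in the right direction. Equivalently, we identify $K_j\cong \gf_q^{\nu_j}$ using the "top" vectors of each chain, and $(\cC_{j+1}\cap K_{j+1})$ pulls back under this identification to a $g_{j+1}$-dimensional subspace $W_{j+1}$ of $K_j$ (via $\tN^{-1}$ on chains of length $p$). The claim to verify is that $W_{j+1}\subseteq$ some "shadow" of $\cC_j\cap K_j$, which gives $g_j\ge g_{j+1}$ and $\nu_j\ge g_j$.

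For the counting, we proceed from $j=p$ down to $j=1$. Suppose $\cC_{j+1}$ has already been determined. We must count the choices of $\cC_j$ with $\cC_j\tN=\cC_{j+1}$ and $\dim(\cC_j\cap K_j)=g_j$. The data of $\cC_j$ splits into two parts:
\begin{enumerate}
\item the subspace $\cC_j\cap K_j$, which contains the $g_{j+1}$-dimensional subspace forced by $\cC_{j+1}$ (its image under the identification above), so it is a $g_j$-dimensional subspace of the $\nu_j$-dimensional $K_j$ containing a fixed $g_{j+1}$-dimensional subspace; there are $\left[\begin{smallmatrix}\nu_j-g_{j+1}\\ g_j-g_{j+1}\end{smallmatrix}\right]_q$ such choices;
\item a complement, namely lifts of a basis of $\cC_{j+1}$ to preimages. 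Each lift is determined modulo $K_j$, and modulo the part already in $\cC_j$, which gives at most $q^{\nu_j-g_j}$ choices per lifted vector.
\end{enumerate}
The number of independent lifts that genuinely need choosing is the contribution not already forced. We aim to show that this number is $g_{j+1}$: only the lifts of the "new" kernel part at the next level matter, while higher lifts are determined recursively by earlier choices. This yields the factor $q^{g_{j+1}(\nu_j-g_j)}$. Multiplying over $j$ and summing over all admissible sequences $(g_j)$ gives the bound on $T$. The count is only an upper bound because distinct data might give the same $\cC$, and because some data might fail to be consistent.

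The main obstacle is the lift count in step (b). A naive count gives $q^{(\nu_j-g_j)\dim\cC_{j+1}}$ with $\dim\cC_{j+1}=\sum_{i>j}g_i$, which is too large. To reduce the exponent to $g_{j+1}$, we would choose bases adapted to the filtration. A basis of $\cC_{j+1}$ consists of $\tN$-images of a basis of $\cC_{j+2}$'s lifts, which are already lifted at the previous stage since $\cC_{j+1}\subseteq \cC_j$, together with the $g_{j+1}$ new kernel vectors. Only the latter need fresh preimages in $\cC_j$ modulo $\cC_j\cap K_j$. We first try to make this precise by induction on $p-j$. Throughout, we check that the quotient $K_j/(\cC_j\cap K_j)$ parametrizes the ambiguity, which uses the equal block sizes $p$ (or $1$) in the Jordan type.
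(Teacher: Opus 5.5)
Your skeleton matches the paper's: the filtration $\cC_j=\cC\tN^{j-1}$ with $g_j=\dim\cC_j-\dim\cC_{j+1}$, reconstruction from $j=p$ down to $j=1$, a Gaussian coefficient for $H_j:=\cC_j\cap K_j$ over a fixed $g_{j+1}$-dimensional subspace, and $g_{j+1}$ fresh lifts counted modulo $H_j$. However, the two facts that carry the argument are not established, and your justification of the first is false.

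\textbf{The forced subspace and monotonicity.} You assert $K_{j+1}=K_j\tN$, but $K_j\tN=\bzero$ by the definition of $K_j$. Likewise a preimage $\by\in K_j$ has $\by\tN=\bzero$. Also $V_j\cap\Ker(\tN^{p-j+1})=V_j$, and $\tN$ has kernel $K_j$ on it, so it is not an isomorphism. Hence the pulled-back $W_{j+1}$ and its ``shadow'' are not defined, and neither $g_j\ge g_{j+1}$ nor the fixed subspace in step (a) is proved. No identification is needed. Since $V_{j+1}\subseteq V_j$, we have $K_{j+1}=K_j\cap V_{j+1}$; in fact $K_2=\cdots=K_p$ is spanned by the last vectors of the $t$ chains, and $K_1$ adds the $f$ one-dimensional blocks. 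Since $\cC\tN\subseteq\cC$ gives $\cC_{j+1}\subseteq\cC_j$, we get $\cC_{j+1}\cap K_{j+1}=\cC_{j+1}\cap K_j\subseteq H_j$. This single containment gives both $g_{j+1}\le g_j$ and the factor $\genfrac{[}{]}{0pt}{}{\nu_j-g_{j+1}}{g_j-g_{j+1}}_q$. (The paper proves monotonicity separately, via the surjection $\cC_j/\cC_{j+1}\to\cC_{j+1}/\cC_{j+2}$ induced by $\tN$.) The same Jordan-basis description is how you should justify the stated values $\nu_1=t+f$ and $\nu_j=t$ for $j\ge 2$, which you never derive.

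\textbf{The lift count.} You leave this as an aim, and your description of it is off. Your closing observation is right: vectors of $\cC_{j+2}=\cC_{j+1}\tN$ already have preimages inside $\cC_{j+1}\subseteq\cC_j$. But the remaining $g_{j+1}$ vectors needing fresh preimages are not ``new kernel vectors.'' In general $\cC_{j+1}\cap K_{j+1}$ is not a complement of $\cC_{j+2}$ in $\cC_{j+1}$. For example, take one Jordan block with $p=3$, $\be_1\tN=\be_2$, $\be_2\tN=\be_3$, $\be_3\tN=\bzero$, and $\cC=V$; then $\cC_2\cap K_2=\langle\be_3\rangle=\cC_3$.

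What is needed, and what the paper proves, is the following. Fix $\bz_1,\dots,\bz_{g_{j+1}}\in\cC_{j+1}$ whose classes form a basis of $\cC_{j+1}/\cC_{j+2}$; this choice depends only on $\cC_{j+1}$. Take $\bx_s\in\cC_j$ with $\bx_s\tN=\bz_s$. Then $\tN$ maps $\cC_{j+1}+\langle\bx_1,\dots,\bx_{g_{j+1}}\rangle\subseteq\cC_j$ onto $\cC_{j+2}+\langle\bz_1,\dots,\bz_{g_{j+1}}\rangle=\cC_{j+1}=\cC_j\tN$. Since the kernel of $\tN$ on $\cC_j$ is $H_j$, this gives $\cC_j=\cC_{j+1}+H_j+\langle\bx_1,\dots,\bx_{g_{j+1}}\rangle$.

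Thus $\cC_j$ is determined by $H_j$ and the classes modulo $H_j$ of the $\bx_s$. Each $\bx_s$ lies in the set $\bx_s^0+K_j$ of all preimages of $\bz_s$ in $V_j$ (with $\bx_s^0$ any one of them), so there are at most $q^{\nu_j-g_j}$ classes per $\bx_s$. This yields exactly the factor $q^{g_{j+1}(\nu_j-g_j)}$. No induction on $p-j$ is needed, and the equal block sizes play no role in this step; they enter only through the values of $\nu_j$.
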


	Immediately, we have the following corollary for $p=2$.
	\begin{corollary}\label{cor-ordertwo}
		Let $\tN: \gf_q^n \to \gf_q^n$ be a nonzero linear transformation satisfying
		\[
		\tN^2=\bzero,\
		\dim (\operatorname{Im}(\tN))=t, \textrm{ and }
		\dim(\Ker (\tN))=n-t.
		\]
		Let $T$ denote the number of $k$-dimensional subspaces $\cC \subseteq \gf_q^n$ satisfying $\cC\tN \subseteq \cC$. Then
		\[
		T \leq \sum_{r=0}^{\min\{t,\lfloor k/2\rfloor\}}
		\genfrac{[}{]}{0pt}{}{t}{r}_q
		\genfrac{[}{]}{0pt}{}{n-t-r}{k-2r}_q
		q^{(n-t-k+r)r}.
		\] 
	\end{corollary}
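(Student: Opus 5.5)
The plan is to specialize Lemma~\ref{lem-Jor} to $p=2$. First we identify the Jordan type of $\tN$. Since $\tN\neq\bzero$ and $\tN^2=\bzero$, every Jordan block of $\tN$ is nilpotent of size at most $2$. If $\tN$ has $t'$ blocks of size $2$ and $f$ blocks of size $1$, then $\dim(\im(\tN))=t'$, so $t'=t$. Hence $\tN$ has Jordan type $(2^t,1^f)$ with $n=2t+f$, and $\dim(\Ker(\tN))=t+f=n-t$, consistent with the hypothesis.

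The Jordan type $(p^t,1^f)$ required by Lemma~\ref{lem-Jor} involves $p=\Char(\gf_q)$. Here we use only the block structure with $p$ replaced by $2$. The proof in Appendix~\ref{appendix} uses only the nilpotent Jordan structure with blocks of sizes $p$ and $1$, not the identity $p=\Char(\gf_q)$. So the same bound holds with $p$ replaced by $2$. (When $q$ is even, the statement applies verbatim.) Then $\nu_1=t+f=n-t$ and $\nu_2=t$.

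Next we rewrite the sum. The indices are $g_1,g_2$ with $g_3=0$, $g_1+g_2=k$ and $\nu_j\ge g_j\ge g_{j+1}\ge 0$. Set $r:=g_2$, so $g_1=k-r$. The constraints $g_2\le \nu_2$ and $g_1\ge g_2$ give $0\le r\le \min\{t,\lfloor k/2\rfloor\}$. The constraint $g_1\le\nu_1$, i.e.\ $k-r\le n-t$, may be dropped: if it fails, then $k-2r>n-t-r$, so the Gaussian coefficient $\genfrac{[}{]}{0pt}{}{n-t-r}{k-2r}_q$ vanishes. Including such terms therefore changes nothing and only weakens the inequality harmlessly.

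Finally we evaluate the product. For $j=1$ the factor is
\[
q^{g_2(\nu_1-g_1)}\genfrac{[}{]}{0pt}{}{\nu_1-g_2}{g_1-g_2}_q
=q^{r(n-t-k+r)}\genfrac{[}{]}{0pt}{}{n-t-r}{k-2r}_q .
\]
For $j=2$, using $g_3=0$, the factor is
\[
q^{0}\genfrac{[}{]}{0pt}{}{\nu_2}{g_2}_q=\genfrac{[}{]}{0pt}{}{t}{r}_q .
\]
Multiplying these and summing over $r$ gives exactly the claimed bound on $T$. The only delicate point is the identification of the Jordan type and the $p$ versus $2$ issue. The rest is direct substitution.
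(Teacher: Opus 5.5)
Your proposal is correct and is essentially the paper's proof: identify the Jordan type $(2^t,1^{n-2t})$, apply Lemma~\ref{lem-Jor} with $p=2$ so that $\nu_1=n-t$ and $\nu_2=t$, and substitute $g_1=k-r$, $g_2=r$, $g_3=0$. Your two extra observations are also correct and fill in details the paper leaves implicit: the appendix argument never uses $p=\Char(\gf_q)$, as the paper's own remark also states, and dropping the constraint $k-r\le n-t$ only adds terms whose Gaussian coefficient $\genfrac{[}{]}{0pt}{}{n-t-r}{k-2r}_q$ vanishes.
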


	\section{On the existence of nontrivial automorphism groups}\label{sec3}
	Let $\cC \subseteq \gf_q^n$ be a uniformly random $k$-dimensional linear subspace with a generator matrix $\tG=[\bg_1,\bg_2,\ldots,\bg_n]\in \mathbb F_q^{k \times n}.$ When $m :=\min\{k, n-k\}$ is relatively small, zero or proportional columns are more likely to appear in its generator matrix or parity-check matrix. In this case, the code $\cC$ may have a nontrivial automorphism group. In this section, we study the probability that a random $k$-dimensional linear code $\cC$ has a nontrivial automorphism group when $m$ is upper bounded by a value in terms of $n$ and $q$.
	
	To this end, a key observation is that any $k$-dimensional subspace $\cC \subseteq \gf_q^n$ has a nontrivial automorphism of order $\ell \ge 2$ if $m <\log_q \left(\frac{(q-1)n}{\ell -1} -(q-2)\right)$. By Lemma~\ref{lem-dualag}, we have $\MAut(\cC) \cong \MAut(\cC^{\perp})$, then it suffices to consider the case $k <\log_q \left(\frac{(q-1)n}{\ell -1} -(q-2)\right)$. More precisely, partition finite vector space $\gf_q^k$ into $1+\frac{q^k-1}{q-1}$ classes: the singleton $\{\bzero\}$ and the $1$-dimensional subspaces with the zero vector removed. Since $k <\log_q \left(\frac{(q-1)n}{\ell -1} -(q-2)\right)$, we have $n>(\ell-1)\left(1+\frac{q^k-1}{q-1}\right)$. By the pigeonhole principle, at least $\ell$ columns of the generator matrix $\tG$ must lie in the same class. 
	If these $\ell$ columns are all zero vectors, let $\tP$ be the permutation matrix corresponding to an $\ell$-cycle on the associated coordinates and fixing all other coordinates. Then $\tP\in\MAut(\cC) \setminus \MAut_0$ and $\tP$ has order $\ell$.
	If these $\ell$ columns $\bg_{i_j}$ for $1 \le j \le \ell$ are in one class of $1$-dimensional subspace with $\bzero$ removed, then $\bg_{i_j}=a_j\bv$ for some $\bv\in\mathbb F_q^k \setminus \{\bzero\}$ and $a_1,\ldots,a_\ell\in\mathbb F_q^*$. Let $\be_i \in \gf_q^n$ denote the $i$-th standard basis vector whose $i$-th coordinate is $1$ and all other coordinates are $0$. Define a monomial matrix $\tM$ by $\be_{i_j}\tM=\frac{a_{j+1}}{a_j}\be_{i_{j+1}}$ for $1\le j\le \ell-1$, $\be_{i_\ell}\tM=\frac{a_{1}}{a_\ell}\be_{i_{1}}$ and $\be_i\tM=\be_i$ for all $i \in [n] \setminus \{i_1, \dots, i_{\ell}\}$. It is easy to deduce that $\tM \in \MAut(\cC) \setminus \MAut_0$ and has order $\ell$. 
	
	Following the idea above, we are able to prove the result as below.
	
	\begin{theorem}\label{thm-nontrivial}
		Fix a prime power $q$, an integer $\ell\ge2$, and a real number $\varepsilon>0$. Let $\cC \subseteq \gf_q^n$ be a uniformly random $k$-dimensional linear subspace, where $k=k(n)$ and let $m :=\min\{k, n-k\}$. If
		\begin{eqnarray*}
			m \le
			\frac{\ell}{\ell-1}\log_q n
			+\log_q(q-1)
			-\frac{\ell}{\ell-1}\log_q \ell,
		\end{eqnarray*}
		there exists $n_0=n_0(q,\ell,\varepsilon)$ such that,  for all $n>n_0$, it holds that
			\[\Pr[ \MAut(\cC) \neq \MAut_0] \geq \frac{1}{2}-\varepsilon.\]
	\end{theorem}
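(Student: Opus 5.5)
By Lemma~\ref{lem-dualag} we have $\MAut(\cC)\cong\MAut(\cC^\perp)$, and $\cC^\perp$ is uniformly distributed on $\Gr_q(n-k,n)$. So we may assume $k=m\le n/2$. The plan is to show that, with probability at least $\frac12-\varepsilon$, some $\ell$ columns of a generator matrix $\tG$ of $\cC$ lie in a common class of the partition of $\gf_q^k$ described above. By the construction preceding the theorem, this event forces $\MAut(\cC)\neq\MAut_0$.

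First I reduce to i.i.d. columns. Let $\tG$ be a uniformly random matrix in $\gf_q^{k\times n}$. Conditioned on $\operatorname{rank}\tG=k$, the row space $\langle\tG\rangle$ is uniform on $\Gr_q(k,n)$, since every $k$-subspace has the same number of bases. Moreover $\Pr[\operatorname{rank}\tG<k]\le q^{k-n}/(q-1)$, which is $o(1)$. Under the uniform model the columns $\bg_1,\dots,\bg_n$ are i.i.d. uniform on $\gf_q^k$.

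Set $N:=\frac{q^k-1}{q-1}$. Each projective point class (a line minus $\bzero$) has probability $p_0:=(q-1)/q^k$. The zero class has probability $q^{-k}$, which I will ignore. The hypothesis reads $q^k\le (q-1)n^{\ell/(\ell-1)}\ell^{-\ell/(\ell-1)}$, which is equivalent to $n^\ell p_0^{\ell-1}\ge \ell^\ell$ up to constant adjustment. This is exactly the threshold at which the expected number of $\ell$-tuples landing in a common class has order one. Let $X$ be the number of pairs (projective class $P$, $\ell$-subset $S\subseteq[n]$) with all columns of $S$ in $P$. Then
\[
\mathbb{E}X = N\binom{n}{\ell}p_0^{\ell}\approx \frac{n^\ell p_0^{\ell-1}}{\ell!}\ge \frac{\ell^\ell}{\ell!}>1 .
\]

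The main step is to show $\Pr[X>0]\ge \frac12-\varepsilon$. Two routes are available. The first is the second moment method $\Pr[X>0]\ge(\mathbb{E}X)^2/\mathbb{E}X^2$; this needs control of overlapping tuples, and the correlation terms are lower order when $k\to\infty$. The second, cleaner route is Poissonization: the bin counts behave like independent Poisson variables of mean $\lambda=np_0$, so
\[
\Pr[X=0]\approx \bigl(\Pr[\mathrm{Po}(\lambda)\le \ell-1]\bigr)^N .
\]
If $\lambda\to0$, this is $\approx\exp(-N\lambda^\ell/\ell!)=\exp(-\mathbb{E}X)\le e^{-\ell^\ell/\ell!}$. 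For $\ell=2$ this equals $e^{-2}$, and in general it is at most $e^{-1}<\frac12$, giving probability at least $1-e^{-1}-o(1)>\frac12-\varepsilon$. If $\lambda$ stays bounded away from $0$ or grows, then $N\to\infty$ forces $X>0$ with probability tending to $1$. Either way the bound $\frac12-\varepsilon$ holds.

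I expect the main obstacle to be making this rigorous uniformly in $k$, including the regime where $k$ is bounded. In that regime pigeonhole applies directly once $n>(\ell-1)(N+1)$, so it is easy. For $k\to\infty$ I would use the Chen--Stein/Poisson approximation for the count of $\ell$-collisions in multinomial occupancy. Alternatively, I would apply the second moment bound directly, which only needs $\mathbb{E}X\ge1-o(1)$ to produce $\frac12$ after bounding $\mathbb{E}X^2\le \mathbb{E}X+(\mathbb{E}X)^2(1+o(1))$, giving $\Pr[X>0]\ge \mathbb{E}X/(1+\mathbb{E}X)-o(1)\ge \frac12-\varepsilon$. This last form explains the constant $\frac12$, so I would commit to the second moment route.
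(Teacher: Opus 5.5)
Your proposal is correct. Its outer structure matches the paper's: dualize to $k\le n/2$, replace $\cC$ by the row space of a uniform $k\times n$ matrix conditioned on full rank, and count $\ell$-sets of columns that are nonzero and proportional. The genuine difference is the probabilistic inequality.

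The paper keeps only the first $u$ columns, with $u$ the least integer for which $\binom{u}{\ell}(q-1)^{\ell-1}q^{-k(\ell-1)}\ge 1$. This squeezes the mean $\Lambda$ of the count into $[1-\delta,1+\delta]$. It then uses the second-order Bonferroni bound $\Pr[X>0]\ge\Lambda-\sum_{S\ne T}\Pr[E_S\cap E_T]\approx\Lambda-\Lambda^2/2$. That expression is not monotone in $\Lambda$ and peaks at $\Lambda=1$, so the truncation is needed there, and it is exactly where the constant $\frac12$ comes from.

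You keep all $n$ columns and use $\Pr[X>0]\ge(\mathbb{E}X)^2/\mathbb{E}X^2$. Once $\mathbb{E}X^2\le\mathbb{E}X+(1+o(1))(\mathbb{E}X)^2$, this gives $\mathbb{E}X/(1+(1+o(1))\mathbb{E}X)$, which is increasing in $\mathbb{E}X$, so no truncation is needed. The overlap estimate you leave unverified is the same computation the paper does. With $\lambda=n(q-1)q^{-k}$, pairs in the same class sharing $s$ indices contribute $O(N\lambda^{2\ell-s})$, which is $O(1/(N\lambda^s))$ relative to $(\mathbb{E}X)^2$. The hypothesis gives $N\lambda^s\gtrsim n^{(\ell-s)/(\ell-1)}$, the counterpart of the paper's factor $u^{(s-\ell)/(\ell-1)}$.

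This bound is uniform in $k$, so your case split is only a convenience. If you keep it, split on whether $n>(\ell-1)(N+1)$ rather than on ``$k$ bounded'', so that in the remaining regime $N\ge n/(\ell-1)-1\to\infty$ uniformly.

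Your route also gives more than you claim. In that remaining regime $\mathbb{E}X\ge(1-o(1))\ell^\ell/\ell!\ge 2-o(1)$, so $\Pr[X>0]\ge\frac23-o(1)$. The Bonferroni argument, by contrast, cannot exceed $\frac12$. Both arguments need exactly the same pairwise estimates, so yours is shorter (no auxiliary $u$) and sharper.
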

	
    \begin{IEEEproof}
		Since $\MAut(\cC) \cong \MAut(\cC^{\perp})$ by Lemma~\ref{lem-dualag}, without loss of generality, we restrict attention to $k \leq n/2$, i.e., $m=k$.
		If $k<\log_q \left(\frac{(q-1)n}{\ell -1} -(q-2)\right)$, the deterministic observation above already proves the assertion. We may therefore assume that $k \ge \log_q \left(\frac{(q-1)n}{\ell -1} -(q-2)\right)$, which in particular implies that $k$ tends to infinity as $n$ tends to infinity. 
		
		Let $\tG=[\bg_1,\bg_2,\dots,\bg_n]\in \mathbb F_q^{k\times n}$ be a uniformly random matrix. Define $b_k:=(q-1)^{\ell-1}q^{-k(\ell-1)}$. Clearly, $b_k$ goes to zero as $k$ goes to infinity. Let $u$ be the smallest integer in terms of $n$ such that $u\ge \ell$ and $B_u:=\binom u\ell b_k\ge 1$, where $\binom u\ell$ is the binomial coefficient. Now we show that $u \leq n$. Since
		\begin{eqnarray*}
			k(\ell-1)
			\le
			\ell\log_q n
			+(\ell-1)\log_q(q-1)
			-\ell\log_q\ell,
		\end{eqnarray*} we have $q^{k(\ell-1)}
		\le
		(q-1)^{\ell-1}\frac{n^\ell}{\ell^\ell}$ and $b_k=(q-1)^{\ell-1}q^{-k(\ell-1)}
		\ge
		\frac{\ell^\ell}{n^\ell}$. On the other hand, for $n \geq \ell$, we have $$\binom n\ell
		=
		\frac{n(n-1)\cdots(n-\ell+1)}{\ell!}
		\ge
		\left(\frac n\ell\right)^\ell \ge \frac{1}{b_k}.$$ It then follows that $\binom n\ell b_k
		\ge
		1$.  Since $u$ is the smallest integer such that $\binom u\ell b_k\ge 1$, we have $u \leq n$. 
		
		Denote by $E_S$ the event that all columns in the multiset $\{\bg_i: i\in S\}$ are nonzero and belong to the same one-dimensional subspace of $\gf_q^k$, where $S$ is an $\ell$-subset of $[u]$. Let random variable $X:=\sum_{S}\mathbf 1_{E_S}$, where $\mathbf 1_{E_S}$ denotes the indicator of the event $E_S$. Then $X$ must be $\geq 0$, and furthermore $X> 0$ means that there are $\ell$ columns in the first $u$ columns that are proportional to each other. In such a case, we are able to construct an automorphism of order $\ell$. Now we estimate $\Pr[X>0]$. For a fixed $\ell$-subset $S$ of $[u]$, since the columns are chosen uniformly at random, we have 
		\begin{equation}\label{eq-PrES}
			\Pr[E_S]
			=
			(1-q^{-k})
			\left(\frac{q-1}{q^k}\right)^{\ell-1}
			=
			(1-q^{-k})b_k.
		\end{equation}
		Hence, the expectation of random variable $X$ is $\Lambda:= \mathbb E [X]
		=
		\binom u\ell (1-q^{-k})b_k$. Recall that $B_u=\binom u\ell b_k$. By the definition of $u$, we have $\binom{u-1}\ell b_k<1$. Then we have $B_u
		=
		\frac{\binom u\ell}{\binom{u-1}\ell}
		\binom{u-1}\ell b_k
		<
		\frac{u}{u-\ell}.$ Note that $u$ tends to infinity as $b_k$ tends to zero. Hence, for any given $\delta > 0$, we have $1 \le B_u\le 1+\delta$ and $1-q^{-k}\ge 1-\delta$ for sufficiently large $n$. Therefore, it follows that 
		\begin{equation}\label{eq-exp}
			1-\delta
			\le
			\Lambda
			\le
			1+\delta.
		\end{equation}
		By the Bonferroni inequality, we obtain
		\begin{eqnarray}\label{eq-lowerbound}
			\Pr[X>0]
			&=&
			\Pr\left[\bigcup_{S}E_S\right] \nonumber\\
			&\ge&
			\sum_{S}\Pr[E_S]
			-
			\sum_{S\ne T}\Pr[E_S\cap E_T] \nonumber\\
			&=& \Lambda-\sum_{S\ne T}\Pr[E_S\cap E_T]
			.
		\end{eqnarray}
		If $S\cap T=\emptyset$, then the events $E_S$ and $E_T$ are independent, and we have $\Pr[E_S\cap E_T]=\Pr[E_S]\Pr[E_T]$.
		Moreover, by Eq. (\ref{eq-PrES}), we have
		\begin{eqnarray}\label{eq-empty}
			\sum_{S\ne T, \ S\cap T=\emptyset}\Pr[E_S\cap E_T]
			=
			\sum_{S\ne T, \ S\cap T=\emptyset}\Pr[E_S]^2 \leq \binom{\binom u\ell}{2}\Pr[E_S]^2
			\le
			\frac12 \Lambda^2.
		\end{eqnarray}
		If $|S\cap T|=s$, where $1\le s\le \ell-1$, then the number of such pairs $\{S,T\}$ is $\frac{1}{2}\binom u\ell \binom\ell s\binom{u-\ell}{\ell-s}$. If events $E_S$ and $E_T$ both occur, then all the columns belonging to $S \cup T$ are non-zero and are contained in the same one-dimensional subspace of $\gf_q^k$. By the principle of inclusion and exclusion, $|S\cup T|=|S|+|T|-|S \cap T|=2\ell-s$ , and then we have 
		$$\Pr[E_S\cap E_T]
		=
		(1-q^{-k})
		\left(\frac{q-1}{q^k}\right)^{2\ell-s-1}.$$
		This yields
		\begin{eqnarray}\label{eq-nonempty1}
			\sum_{\ S\ne T,\ |S\cap T|=s}\Pr[E_S\cap E_T] \leq \frac{1}{2}
			\binom u\ell
			\binom\ell s
			\binom{u-\ell}{\ell-s}
			(1-q^{-k})
			(q-1)^{2\ell-s-1}
			q^{-k(2\ell-s-1)}.
		\end{eqnarray}
		Dividing both sides of inequality (\ref{eq-nonempty1}) by $\Lambda^2$, we obtain 
		\begin{eqnarray}\label{eq-nonempty2}
			\frac{\sum\limits_{ S\ne T,\ |S\cap T|=s}\Pr[E_S\cap E_T]}{\Lambda^2}
			\le
			\frac{\binom\ell s\binom{u-\ell}{\ell-s}}{2\binom u\ell}
			(1-q^{-k})^{-1}
			(q-1)^{1-s}
			q^{k(s-1)}.
		\end{eqnarray}
		Recall that $u$ grows as $n$ grows, and then for sufficiently large $n$, we have $u \ge 2\ell$. Note that
		$
		\binom{u-\ell}{\ell-s}
		\le
		\frac{u^{\ell-s}}{(\ell-s)!}$ and $\binom u\ell
		\ge
		\frac{(u-\ell+1)^\ell}{\ell!}
		\ge
		\frac{(u/2)^\ell}{\ell!}$.
		It then follows that
		\begin{eqnarray}\label{eq-nonempty3}
			\frac{\binom\ell s\binom{u-\ell}{\ell-s}}{\binom u\ell}
			\le
			2^\ell
			\binom\ell s
			\frac{\ell!}{(\ell-s)!}
			u^{-s}.
		\end{eqnarray}
		Since $B_u=\binom u\ell b_k\ge1$, we have $q^{k(\ell-1)}
		\le
		(q-1)^{\ell-1}\binom u\ell
		\le
		(q-1)^{\ell-1}\frac{u^\ell}{\ell!}.$ Hence, we have
		\begin{eqnarray}\label{eq-nonempty4}
			q^{k(s-1)}
			\le
			(q-1)^{s-1}
			(\ell!)^{-\frac{s-1}{\ell-1}}
			u^{\frac{\ell(s-1)}{\ell-1}}.
		\end{eqnarray}
		Furthermore, $(1-q^{-k})^{-1}\le \frac{q}{q-1}$ as $k \geq 1$. Substituting inequalities (\ref{eq-nonempty3}) and (\ref{eq-nonempty4}) into (\ref{eq-nonempty2}), we obtain
		\begin{eqnarray*}
			\frac{\sum\limits_{\ S\ne T,\ |S\cap T|=s}\Pr[E_S\cap E_T]}{\Lambda^2}
			\le
			C_{\ell,q,s}
			u^{\frac{s-\ell}{\ell-1}},
		\end{eqnarray*}
		where $C_{\ell,q,s}
		=
		\frac{q}{q-1}
		2^{\ell-1}
		\binom\ell s
		\frac{\ell!}{(\ell-s)!}
		(\ell!)^{-\frac{s-1}{\ell-1}}$ is a constant depending only on $\ell$, $q$, and $s$. Since $1\le s\le \ell-1$, we have $\frac{s-\ell}{\ell-1}<0$. For any given $\delta>0$, since $u$ grows as $n$ grows, for all sufficiently large $n$, we have
		\begin{eqnarray}\label{eq-nonempty}
			\sum_{s=1}^{\ell-1}\sum\limits_{ S\ne T,\ |S\cap T|=s}\Pr[E_S\cap E_T] \le \delta \Lambda^2. 
		\end{eqnarray}
		Hence, by inequalities (\ref{eq-lowerbound}), (\ref{eq-empty}) and (\ref{eq-nonempty}), we have for any given $\delta>0$, there exists $n_1$ such that $$\Pr[X>0]
		\ge
		\Lambda-\frac12 \Lambda^2-\delta \Lambda^2$$ for all $n > n_1$.
		By inequality (\ref{eq-exp}), we obtain $\Lambda-\frac12 \Lambda^2
		=
		\frac12-\frac{(\Lambda-1)^2}{2}
		\ge
		\frac12-\frac{\delta^2}{2}$ and $\delta \Lambda^2\le \delta(1+\delta)^2$. For any given $\varepsilon>0$, choose $\delta > 0$ such that $\frac{\delta^2}{2}+\delta(1+\delta)^2<\frac{\varepsilon}{2}$. Then there exists $n_1$ such that $\Pr[X>0] \ge \frac12-\frac{\varepsilon}{2}$ for all $n > n_1$. 
		
		Note that the probability that a random $k \times n$ matrix $\tG$ has full rank is given by $\Pr[\text{rank} (\tG) =k]
		=
		\prod_{j=0}^{k-1}(1-q^{j-n}).$ Since $\prod_{j=0}^{k-1}(1-q^{j-n})
		\ge
		1-\sum_{j=0}^{k-1}q^{j-n}$, we have $\Pr[\text{rank} (\tG) \neq k]
		\le
		\sum_{j=0}^{k-1}q^{j-n}
		=
		\frac{q^k-1}{(q-1)q^n}$. Since $k=O(\log n)$, we have $q^k/q^n$ goes to zero as $n$ goes to infinity. It then follows that for any given $\varepsilon>0$, there exists $n_2$ such that $\Pr[\text{rank} (\tG) \neq k]\le \varepsilon/2$ for all $n > n_2$. Under the condition $\text{Rank} (\tG) =k$, $\langle \tG \rangle$ is uniformly distributed over $\Gr_q(k,n)$.  Hence, we have
		\begin{eqnarray*}
			\lefteqn{\Pr[X>0 \mbox{ and } \text{rank} (\tG) =k]}\\&=&\Pr[X> 0]-\Pr[X>0 \mbox{ and } \text{rank} (\tG) \neq k]\\
			&\ge&
			\Pr[X>0]-\Pr[\text{rank} (\tG) \neq k]
			\ge
			\frac12-\varepsilon.
		\end{eqnarray*}
		Therefore, for a uniformly random $k$-dimensional linear subspace $\cC$, for fixed $\ell$, and for any given $\varepsilon>0$, there exists $n_0=\max\{n_1, n_2\}$ such that
		\begin{eqnarray*}
			\lefteqn{\Pr[\MAut(\cC) \neq \MAut_0]} \\ &\geq& \Pr[ \MAut(\cC)/\MAut_0 \text{ contains an element of order $\ell$}] \\
			&=& \Pr[\MAut(\langle\tG\rangle)/\MAut_0 \text{ contains an element of order $\ell$} ~|~ \text{rank} (\tG) =k] \\
			&\ge& \Pr[X>0 ~|~ \text{rank} (\tG) =k]\\
			&\ge& \Pr[X>0 \mbox{ and } \text{rank} (\tG) =k]
			\ge \frac12-\varepsilon
		\end{eqnarray*}
		for all $n > n_0$. The proof is then completed.
	\end{IEEEproof}
	
	By letting $\ell=2$, we obtain the following corollary immediately.
	\begin{corollary}
		Fix a prime power $q$ and a real number $\varepsilon>0$. Let $\cC \subseteq \gf_q^n$ be a uniformly random $k$-dimensional linear subspace, where $k=k(n)$, and let $m :=\min\{k, n-k\}$. If
		$
		m \le
		2\log_q n
		+\log_q(q-1)
		-2\log_q 2,
		$
		then there exists $n_0=n_0(q,\varepsilon)$ such that for every $n>n_0$, it holds that
			\[\Pr[\MAut(\cC) \neq \MAut_0] \geq \frac{1}{2}-\varepsilon.\]
	\end{corollary}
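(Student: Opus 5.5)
This corollary is the special case $\ell=2$ of Theorem~\ref{thm-nontrivial}, so the plan is to substitute $\ell=2$ into its hypothesis and check that it reduces exactly to the stated one. With $\ell=2$ we have $\frac{\ell}{\ell-1}=2$ and $\frac{\ell}{\ell-1}\log_q\ell=2\log_q 2$. The hypothesis of Theorem~\ref{thm-nontrivial} therefore reads
\[
m\le 2\log_q n+\log_q(q-1)-2\log_q 2,
\]
which is precisely the assumption here. The theorem then yields $n_0=n_0(q,2,\varepsilon)$, which depends only on $q$ and $\varepsilon$, such that $\Pr[\MAut(\cC)\neq\MAut_0]\ge\frac12-\varepsilon$ for all $n>n_0$.

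For the reader it is worth noting what the witness automorphism is in this case. The event $X>0$ means that two columns of a random generator matrix (or, by Lemma~\ref{lem-dualag}, of a parity-check matrix) are nonzero and proportional, say $\bg_{i_2}=a\bg_{i_1}$. The monomial map that swaps coordinates $i_1,i_2$ with scalars $a$ and $a^{-1}$ then lies in $\MAut(\cC)\setminus\MAut_0$ and has order $2$.

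There is no real obstacle here. The only points to check are that $\ell=2$ satisfies the standing assumption $\ell\ge2$ of the theorem, and that the constant $n_0$ loses its dependence on $\ell$ once $\ell$ is fixed to $2$.
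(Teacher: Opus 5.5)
Your proposal is correct and is the paper's own argument: the paper obtains this corollary immediately by setting $\ell=2$ in Theorem~\ref{thm-nontrivial}, and your check that $\frac{\ell}{\ell-1}=2$ and $\frac{\ell}{\ell-1}\log_q\ell=2\log_q 2$ is all that is needed. Your order-$2$ witness, which swaps coordinates $i_1,i_2$ with scalars $a$ and $a^{-1}$ when $\bg_{i_2}=a\bg_{i_1}$, also agrees with the construction in Section~\ref{sec3}.
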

	
	\section{Proof of the Main Theorem}\label{sec4}
	\subsection{Overview of the proof}
	We recall the main theorem as follows.
	\begin{theorem}\label{main-thm}
		Fix a prime power $q$ and a real number $\varepsilon>0$. Let $\cC \subseteq \gf_q^n$ be a uniformly random $k$-dimensional linear code, and let $m :=\min\{k, n-k\}$. If $m\ge (2+\varepsilon)\log_q n$,
		then $\cC$ has a trivial automorphism group with probability tending to one as $n$ goes to infinity.
	\end{theorem}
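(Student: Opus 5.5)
By Lemma~\ref{lem-dualag}, $\MAut(\cC)\cong\MAut(\cC^\perp)$, so we may assume $k\le n/2$ and hence $m=k$. The plan is a first-moment (union bound) argument over the quotient group $M_{n,q}/\MAut_0$. If $\MAut(\cC)\neq\MAut_0$, then the quotient $\MAut(\cC)/\MAut_0$ is nontrivial, so by Lemma~\ref{lem-cauchy} it contains an element of prime order $r$. Lifting it gives a monomial $\tM=\tD\tP$ with $\tM\notin\MAut_0$, $\tM^r\in\MAut_0$, and $\cC\tM=\cC$. Thus
\[
\Pr[\MAut(\cC)\ne\MAut_0]\le \sum_{r \text{ prime}}\ \sum_{[\tM]} \frac{T(\tM)}{\genfrac{[}{]}{0pt}{}{n}{k}_q},
\]
where $[\tM]$ ranges over classes of order $r$ in the quotient, and $T(\tM)$ is the number of $k$-dimensional $\tM$-invariant subspaces. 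Lemma~\ref{lem-Gau} turns the denominator into $q^{k(n-k)}$ up to constants. The goal is to show that $T(\tM)/q^{k(n-k)}$ decays fast enough to beat the number of choices of $\tM$.

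The classes are parametrized by the permutation part $\pi$, which has $r$-cycles and fixed points, and we treat the cases by support size. Let $s$ be the number of points moved by $\pi$, together with, when $\pi$ is trivial or has few cycles, the number of coordinates where the diagonal differs from the dominant scalar. The number of classes with a given $s$ is at most $n^{s}(q-1)^{s}$ up to constants. The decisive regime is small $s$, e.g.\ $\tM$ a transposition or a diagonal matrix differing from a scalar in one or two coordinates. There we bound $T(\tM)$ via the primary decomposition (Lemma~\ref{lem-PD}).

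\emph{Semisimple case} ($r\ne p$, or $r=p$ with $\pi$ trivial). An invariant $\cC$ splits as $\bigoplus_j (\cC\cap U_j)$ over the eigen/primary spaces. The count is therefore a sum over dimension vectors of products of Gaussian coefficients (over extension fields for nonlinear irreducible factors). Its maximum is roughly $q^{k(n-k)-\Theta(k s)}$ when $k\le n/2$: the dominant term places $\cC$ in a codimension-$\Theta(s)$ configuration and loses about $q^{-k}$ per moved coordinate. For $s=2$ (a single transposition, or two sign flips) this gives a loss $q^{-2k}$ against $n^2$ choices, i.e.\ $n^2q^{-2k}\le n^{-\varepsilon}$. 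This is exactly where the threshold $(2+\varepsilon)\log_q n$ comes from, matching Theorem~\ref{thm-nontrivial}. The single-coordinate diagonal case $s=1$ needs care, since it gives about $q^{-k}$ loss against $n$ choices and is fine.

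\emph{Unipotent case} ($r=p$ and $\pi$ has $t$ $p$-cycles). Write $\tM=\lambda(\tI+\tN)$ with $\tN$ nilpotent of Jordan type $(p^t,1^{n-pt})$; then $\cC$ is $\tM$-invariant iff $\cC\tN\subseteq\cC$. Lemma~\ref{lem-Jor}, or Corollary~\ref{cor-ordertwo} when $p=2$, bounds $T$. We would show that the sum there is dominated by the term with $g_2=\dots=g_p=0$, giving about $\genfrac{[}{]}{0pt}{}{n-(p-1)t}{k}_q\approx q^{k(n-k)-(p-1)tk}$. Against at most $(nq)^{pt}$ choices this is again summable when $k\ge(2+\varepsilon)\log_q n$, the worst case being $p=2$, $t=1$.

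For large $s$ (say $s\ge n/\log n$) the crude bound on the number of classes, at most $n!\,q^n$, must be beaten by a loss of $q^{-\Omega(ks)}$. This holds since $ks\gtrsim s\log n$ dominates $\log(n!\,q^n)$ only when $s$ is large relative to $n$; the intermediate range uses the $n^{s}q^{s}$ count instead. The main obstacle is making the invariant-subspace estimate uniform: we need $T(\tM)\le C^{s}\,q^{k(n-k)-c\,ks}$ with $c$ close enough to $1/2$ per moved point, and $C$ independent of $n$, uniformly over all cycle types and over the sums of Gaussian-coefficient products in Lemma~\ref{lem-PD} and Lemma~\ref{lem-Jor}. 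We would first optimize the exponent of each summand as a quadratic in the dimension vector, using $k\le n/2$, and bound the number of summands by $(k+1)^{O(r)}$. The $\varepsilon$ slack is then used to absorb these polynomial factors and the constants from Lemma~\ref{lem-Gau}.
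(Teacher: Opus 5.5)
Your architecture matches the paper's: you reduce to $k\le n/2$ by Lemma~\ref{lem-dualag}, union-bound over prime-order elements of $M_{n,q}/\MAut_0$ via Lemma~\ref{lem-cauchy}, count invariant codes by the primary decomposition in the semisimple case and by Lemma~\ref{lem-Jor}/Corollary~\ref{cor-ordertwo} in the unipotent case, and normalize with Lemma~\ref{lem-Gau}. The gap is that the step you defer as ``the main obstacle'' is essentially the whole proof. Moreover, the form in which you state it is false, so the plan breaks exactly there.

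A uniform bound $T(\tM)\le C^{s}q^{k(n-k)-cks}$ with $c$ near $1/2$ cannot hold when $k$ and $s$ are both linear in $n$. Take $q$ odd, $4\mid n$, $k=n/2$, and $\tM$ a fixed-point-free involution with $\tM^2=\lambda\tI_n$, $\lambda$ a nonsquare. The invariant codes are exactly the $\gf_{q^2}$-subspaces of $\gf_{q^2}^{n/2}$, so $T=\genfrac{[}{]}{0pt}{}{n/2}{n/4}_{q^2}\approx q^{n^2/8}$, and the loss is only $n^2/8=ks/4$.

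Your unipotent claim fails for a similar reason. The term $g_2=\cdots=g_p=0$ in Lemma~\ref{lem-Jor} is $\genfrac{[}{]}{0pt}{}{t+f}{k}_q$, which vanishes when $k>t+f$, so it cannot dominate once $f=n-pt<2k$. Concretely, for $p=2$ and $t=k=n/2$, the $r=0$ term of Corollary~\ref{cor-ordertwo} equals $1$, while the $r=n/4$ term is about $q^{n^2/8}$, and that many invariant codes really exist.

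The missing idea is a regime split. A per-point constant near $1/2$ is needed, and available, only when $k$ is small. For larger $k$ any fixed fraction suffices, because $k/4$ then swamps $\log_q n$ and the $(k+1)^{O(s)}$ factors. The paper implements this in three ingredients:
\begin{itemize}
\item it combines two complementary counts: the two-block split $U\oplus W$, which gives the loss $\Delta(r,\bar{n})$, and the full primary decomposition, which gives $q^{-k(\bar{n}-k)}$;
\item it splits on $k+\bar{n}\le n/2$ and on $k\le(\log_q n)^2$;
\item in the unipotent case it minimizes the convex quadratic $Q(h)$, $h=g_1$, separately for $f\ge 2k$ and $f<2k$.
\end{itemize}

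Your calibration of the small cases is also off, and it is precisely what fixes the threshold. A single transposition costs $q^{-k}$, not $q^{-2k}$. With $\tM^2=\tI_n$ and $\dim V_-=1$ one has $T=\genfrac{[}{]}{0pt}{}{n-1}{k}_q+\genfrac{[}{]}{0pt}{}{n-1}{k-1}_q$, so $T/\genfrac{[}{]}{0pt}{}{n}{k}_q\approx q^{-k}$. It is $\binom n2 q^{-k}$ that forces $k\ge(2+\varepsilon)\log_q n$; with $q^{-2k}$ the transposition term would only require $k>\log_q n$ and could not be the source of the threshold.

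So ``$q^{-k}$ per moved coordinate'' is the wrong bookkeeping. The right parameter is the codimension $\bar{n}$ of the largest eigenspace of $\tM$: this is $(\ell-1)t+\bar f$ for cycles, $\dim V_-$ for involutions, and $(p-1)t$, the codimension of $\Ker\tN$, in the unipotent case. Each unit of $\bar{n}$ costs $q^{-k}$ when $k+\bar{n}\le n/2$, against at most $n^2$ choices per unit in the worst (involution) case.

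Finally, the $n!\,q^n$ count in your large-$s$ paragraph is unnecessary. Near the threshold it is too crude unless $s$ is close to $n$. The bound $n^{s}(q-1)^{s}$ works for all $s$ once the regime-dependent loss is established.
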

	
    Let $\cC$ be chosen uniformly at random from $\Gr_q(k, n)$ and $\MAut_0:=\mathbb F_q^*\tI_n$. Define the quotient group $H_{\cC}:=\MAut(\cC) /\MAut_0$ and the quotient map $\phi: \MAut(\cC) \rightarrow H_{\cC}$ with $\phi(\tM) = \mathbb F_q^*\tM$. By Lemma~\ref{lem-cauchy}, if the automorphism group of $\cC$ is nontrivial, i.e., $\MAut_0 \subsetneq \MAut(\cC)$, then there must exist $\tM \in \MAut(\cC) \setminus \MAut_0$ whose image $\overline{\tM}:=\phi(\tM) \in H_{\cC}$ has prime order.
	Recall that the monomial group $M_{n,q}$ is defined as
	\[
	M_{n,q}
	:= (\gf_q^{*})^n \rtimes S_n =
	\{\tD\tP: \tD=\operatorname{diag}(\alpha_1,\dots,\alpha_n), \alpha_i\in\mathbb F_q^*, \tP\in S_n\}.
	\]
	Define the quotient group $H:=M_{n,q}/\MAut_0$. Note that for each code $\cC$, the group $H_{\cC}$ is naturally embedded in $H$. 
	
	Let $E_0$ denote the event that $\cC$ has a nontrivial automorphism group, i.e., $\MAut_0$ is a {\em proper} subgroup of $\MAut(\cC)$. In the following, we will study the asymptotic behavior of $\Pr[E_0]$. In other words, $\Pr[E_0]$ tends to zero if $\cC$ has a trivial automorphism group with high probability.  Let $E_1$ denote the event that there exists a nonidentity element $\overline \tM\in H$ of prime order such that $\cC$ is $\overline \tM$-invariant, i.e., \[\cC \tM=\cC \quad \text{ for one representative $\tM$ of } \overline \tM.\]
In fact, this is equivalent to saying that $\cC \tM=\cC$ for {\em every} representative $\tM$ of $\overline \tM$, since two representatives differ by a nonzero scalar, which stabilizes every linear code.
Then the occurrence of $E_0$ implies the occurrence of $E_1$, and hence, $\Pr[E_0] \leq \Pr[E_1]$. 
	It then follows from the union bound that
	\begin{eqnarray}\label{eq-union}
		\Pr\left[
		E_0
		\right] \leq \Pr[E_1]
		\le
		\sum_{\substack{
				1 \neq \overline \tM\in H,\\ \ord(\overline \tM)\text{ prime}
		}}
		\Pr\left[ \cC\tM=\cC\right].
	\end{eqnarray}
	Thus it remains to estimate the probabilities appearing on the right-hand side of inequality (\ref{eq-union}). Note that $$\Pr\left[ \cC\tM=\cC\right]=\frac{| \{\cC \in \Gr_q(k,n): \cC\tM=\cC\}|}{|\Gr_q(k, n)|},$$ where $|\Gr_q(k, n)|=\genfrac{[}{]}{0pt}{}{n}{k}_q$ is the Gaussian coefficient. It then suffices to count the number of $\tM$-invariant $k$-dimensional subspaces $\cC$. The primary decomposition (in Lemma \ref{lem-PD}) is a useful tool for dealing with such counting problems.
	
	Let $1\neq \overline \tM\in H$ and suppose that $\operatorname{ord}(\overline \tM)=\ell$, where $\ell$ is prime. Fix a representative $\tM=\tD\tP\in M_{n,q}$ of coset $\overline \tM$, where $\tD=\operatorname{diag}(\alpha_1,\dots,\alpha_n)$ with $\alpha_i\in \mathbb F_q^*$, and $\tP \in S_n$. Since $\operatorname{ord}(\overline \tM)=\ell$, we have $\overline \tM ^\ell= \gf_q^* \tI_n$, and then there exists $\lambda \in \gf_q^*$ such that $\tM^{\ell}=(\tD\tP)^{\ell}=\lambda\tI_n$. It follows that $\tP^{\ell}=\tI_n$. This leads to the following two cases: 
	\begin{enumerate}
		\renewcommand{\labelenumi}{(\roman{enumi})}
		\item $\tP = \tI_n$;
		\item $\tP \neq \tI_n$.
	\end{enumerate}
	Next we separately discuss the basic ideas for dealing with the two cases, and in the following subsections prove them in detail, respectively.
	
	\paragraph*{(i) $\tP = \tI_n$}
	In this case, we need only consider the action of diagonal matrix $\tD=\operatorname{diag}(\alpha_1,\ldots,\alpha_n)$. By the eigenspace decomposition of $\tD$, $\mathbb F_q^n=V_1\oplus\cdots\oplus V_s$, and then every $\tD$-invariant subspace $\cC$ decomposes as $\cC=(\cC\cap V_1)\oplus\cdots\oplus(\cC\cap V_s).$ Each direct summand $\cC\cap V_j$ is a subspace of $V_j$ with dimension $r_j$. Now we are able to estimate the number of $\tD$-invariant $k$-dimensional subspaces $\cC$ by counting the possible choices of $\cC\cap V_j$ in $V_j$ for each tuple
	$(r_1,\ldots,r_s)$ with $\sum_{j=1}^s r_j=k$. For details of the proof in this case, see Subsection~\ref{subsec4.1}.
	
	\paragraph*{(ii) $\tP \neq \tI_n$} Note that in this case, since $\tP^{\ell}=\tI_n$ and $\ell$ is a prime, all nontrivial cycles of $\tP$ have length $\ell$. Write $\tP=\gamma_1\gamma_2\cdots \gamma_t$, where $t\ge 1$, each $\gamma_j$ is a cycle of length $\ell$, and the remaining points are fixed by $\tP$. Let
	$
	W_j:=\langle \be_{i} : i \in \gamma_j \rangle_{\mathbb F_q}.
	$ By analyzing the action of $\tM$ on $W_j$, we deduce that the minimal polynomial of $\tM$ is $X^{\ell}-\lambda$. 
	To this end, we discuss separately the following two subcases:
	(a) $\ell \neq \Char(\gf_q)$; (b) $\ell = \Char(\gf_q)$.
	
	\noindent (a) $\ell \neq \Char(\gf_q)$. In this subcase, $X^{\ell}-\lambda$ has no repeated roots. Hence, it can be factored as a product of distinct monic irreducible polynomials over $\gf_q$.
	Let $X^\ell-\lambda=f_1(X)\cdots f_s(X)$ be the irreducible factorization of $X^{\ell}-\lambda$ in $\gf_q[X]$. 
	By Lemma \ref{lem-PD}, we have $\mathbb F_q^n= V_1\oplus\cdots\oplus V_s,$ where $V_i=\Ker(f_i(\tM))$. Moreover, any $\tM$-invariant subspace $\cC$ can be decomposed as $\cC=(\cC\cap V_1)\oplus\cdots\oplus(\cC\cap V_s).$ We are then able to estimate the number of $\tM$-invariant $k$-dimensional subspaces $\cC$ by counting the possible choices of $\cC\cap V_i$ in $V_i$. 
	
	\noindent (b) $\ell = \Char(\gf_q)=p$. In this subcase, we first choose a suitable representative $\tM$ of coset $\overline{\tM}$ such that the minimal polynomial of $\tM$ is $X^{\ell}-1=X^p-1=(X-1)^p$ . However, the primary decomposition of $\gf_q^n$ under $\tM$ is trivial in this situation, and therefore estimating the number of $\tM$-invariant subspaces $\cC$ by counting the possible choices of their intersections with the primary subspaces of $\gf_q^n$ under $\tM$ is insufficient to obtain the desired result. Instead, set $\tN:=\tM-\tI_n$. By analyzing the action of $\tN$ on $W_j$, we deduce that $\tN$ is nilpotent and has Jordan type $(p^t, 1^f)$. Since every $\tM$-invariant subspace $\cC$ satisfies $\cC\tN \subseteq \cC$, we construct the chain $\cC_p\subseteq \cC_{p-1}\subseteq\cdots\subseteq \cC_1$ by defining $\cC_j=\cC\tN^{j-1}$ for $1 \leq j \leq p$, and $\cC_{p+1}=\bzero$. Then we estimate the number of $\tM$-invariant $k$-dimensional subspaces $\cC$ by counting the possible choices of lifts from $\cC_{j+1}$ to $\cC_j$ for all $j$. For details of the proof in this case, see Subsection~\ref{subsec4.2}. 
	
	After estimating the number of $\tM$-invariant $k$-dimensional subspaces $\cC$, we sum these estimates over all relevant nonidentity prime-order elements $\overline{\tM}$ to obtain upper bounds for the corresponding sums of probabilities. The proof of the main theorem is then completed by analyzing the asymptotic behavior of these bounds. 

	\subsection{$\tP = \tI_n$.}\label{subsec4.1}
	Following the idea we discuss above, we are able to give the results as below.
	
	\begin{theorem}\label{thm-I}
		Fix a prime power $q$ and a real number $\varepsilon>0$. Let $\cC \subseteq \gf_q^n$ be a uniformly random $k$-dimensional linear subspace, where $k=k(n)$, let $m:=\min\{k, n-k\}$, and let $H:=M_{n,q}/\MAut_0$. Define $\mathcal P_q:=\{\ell:\ell\ \text{prime},\ \ell\mid(q-1)\}.$ If $m\ge (2+\varepsilon)\log_q n$, then the sum of the probabilities that $\cC$ is invariant under a diagonal matrix $\tM=\tD$, whose coset $\overline{\tM}$ in $H$ has prime order $\ell$, tends to zero as $n$ goes to infinity. More precisely, 
		\[\sum_{\ell \in \mathcal{P}_q}
		\sum_{\substack{
				1\neq \overline \tM\in H\\
				\operatorname{ord}(\overline \tM)=\ell\\
				\tP = \tI_n
		}}
		\Pr\left[ \cC\tD=\cC\right]
		\] tends to zero as $n$ goes to infinity.
	\end{theorem}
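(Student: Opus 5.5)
By Lemma~\ref{lem-dualag}, and because the diagonal part of $\tM$ only changes by inversion when passing to the dual, we may assume $k\le n/2$, so $m=k\ge(2+\varepsilon)\log_q n$. Fix a prime $\ell\mid(q-1)$ and a diagonal $\tD=\diag(\alpha_1,\dots,\alpha_n)$ whose coset has order $\ell$ in $H$. Group the coordinates by the value of $\alpha_i$. This gives the eigenspace decomposition $\gf_q^n=V_1\oplus\cdots\oplus V_s$, where $V_j$ is spanned by the standard basis vectors with $\alpha_i=\beta_j$, and we set $n_j:=\dim V_j$. Since $\tD\notin\MAut_0$, we have $s\ge2$.

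By Lemma~\ref{lem-PD}, every $\tD$-invariant $\cC$ splits as $\bigoplus_j(\cC\cap V_j)$. Writing $r_j=\dim(\cC\cap V_j)$ with $\sum_j r_j=k$, the number of invariant codes is therefore
\[
\sum_{r_1+\cdots+r_s=k}\prod_{j=1}^{s}\genfrac{[}{]}{0pt}{}{n_j}{r_j}_q .
\]
Applying Lemma~\ref{lem-Gau} to each factor and to $\genfrac{[}{]}{0pt}{}{n}{k}_q$, each term of $\Pr[\cC\tD=\cC]$ is at most $c^s q^{\sum_j r_j(n_j-r_j)-k(n-k)}$, where $c$ depends only on $q$. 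The key identity is
\[
k(n-k)-\sum_j r_j(n_j-r_j)=\sum_{i\neq j} r_i(n_j-r_j),
\]
a sum over ordered pairs $(i,j)$.

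We now split on the size of the largest eigenspace. Let $a:=n-\max_j n_j$ be the number of coordinates outside the largest eigenspace, say $V_1$; then $a\ge1$.

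\textbf{Case $a<k$.} Group the terms as $V_1$ against the rest. Put $r:=\sum_{j\ge2}r_j\le a$. The exponent deficit is at least $r_1(a-r)+r(n_1-r_1)$, because cross terms among $j\ge2$ only help. Since $r_1=k-r$ and $n_1-r_1=n-a-k+r$, the deficit is at least $(k-r)(a-r)+r(n-a-k+r)$. For $k\le n/2$ this is minimized at the endpoints $r=0$ or $r=a$, giving roughly $\min\{ka,\,a(n-k)\}\ge ka$. Summing over the at most $(a+1)^{s}$ choices of $(r_j)$, which is polynomial in $a$ since $s\le q-1$, the contribution is at most $C_q\,\mathrm{poly}(a)\,q^{-ka}$ up to lower-order corrections.

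The number of diagonal cosets with a given $a$ is at most $\binom{n}{a}(q-1)^{a}\cdot q$. Summing over $a\ge1$ gives $\sum_a (qn)^a\mathrm{poly}(a)q^{-ka}$. Since $q^k\ge n^{2+\varepsilon}$, this sum tends to $0$. This estimate only needs $k\ge(1+\varepsilon)\log_q n$; the factor $2$ is needed elsewhere in the paper.

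\textbf{Case $a\ge k$.} Here I bound the deficit crudely by
\[
\sum_{i\ne j}r_i(n_j-r_j)\ \ge\ \sum_i r_i\,(n-n_i-k)\ \ge\ k(a-k),
\]
but this is weak when $a$ is close to $k$. Instead, for $a\ge k$ I would bound the whole probability mass over all diagonal $\tD$ with $a\ge k$ at once. The number of such cosets is at most $(q-1)^{n}$, and for $a\ge n/(\text{const})$ the deficit should be linear in $nk$. Concretely, when every $n_j\le n-a$, the ordered-pair sum is at least about $k\cdot\min(a,n/2)-k^2$, which beats $(q-1)^n$ once $k\gg\log n$.

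\textbf{Main obstacle.} The hard part is making this uniform in the intermediate regime $k\le a\le 2k$, where the choice of $(r_j)$ concentrating in one eigenspace nearly kills the deficit. My first attempt there would be to use that the deficit is at least $r_1(a-r)+r(n-a-k+r)\ge\min\{k a-r a,\dots\}$ directly, and to bound the count of $\tD$ by $\binom{n}{a}q^{a}\le (qn)^a$. This still yields a convergent geometric series as long as the deficit exceeds $a\log_q(qn)+\varepsilon' a\log_q n$, which holds because the deficit is at least $\tfrac12 a\cdot\min(k,n-k)$.
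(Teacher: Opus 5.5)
\textbf{Verdict: there is a genuine gap, in the case $a\ge k$.} Your reduction to $k\le n/2$, the eigenspace splitting, the identity $k(n-k)-\sum_j r_j(n_j-r_j)=\sum_{i\neq j}r_i(n_j-r_j)$, the bound $k(a-k)$, and the coset count of order $\binom{n}{a}(q-1)^a$ are all sound and match the paper.

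\textbf{The case $a<k$ works, but not for the reason you give.} The function $\Delta(r,a)=ka+r(n-2k-2a)+2r^2$ is convex in $r$, so its minimum need not be at an endpoint.
\begin{itemize}
\item If $k+a\le n/2$, the minimum is at $r=0$ and equals $ka$.
\item If $k+a>n/2$, the minimum is interior, and only $\Delta\ge ka-(2k+2a-n)^2/8\ge ka-a^2/2$ holds. For $k=n/2$ this value is attained at $r=a/2$.
\end{itemize}
For $a<k$ the second bound still gives $ka/2$, and in that situation $k>n/4$. So the sum over $a$ remains negligible.

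\textbf{The gap is the window $k\le a\le 2k$ with $k+a>n/2$.} You identify this window but do not close it. Your last sentence asserts that the deficit is at least $\tfrac12 a\min\{k,n-k\}$, but nothing in the proposal proves this. The two-block bound you say you would use ``directly'' cannot deliver it. Take $k=n/2$ and three eigenspaces of size $n/3$, so $a=2n/3$. Then the two-block minimum is $n^2/9$, whereas $\tfrac12 ak=n^2/6$.

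Each of your two bounds fails at one end of the window: $k(a-k)$ vanishes at $a=k$, and $ka-a^2/2$ vanishes at $a=2k$.

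\textbf{The missing step is to use both bounds and take their maximum; this is exactly what the paper does.} Set $x=a/k\in[1,2]$. Then $\max\{x-x^2/2,\;x-1\}\ge x/4$, so the deficit is at least $ka/4$.

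You must still absorb the prefactor $(k+1)^{q-1}$ from the sum over $(r_j)$, together with the $(n(q-1))^a$ coset count. The paper does this by splitting at $k=(\log_q n)^2$.
\begin{itemize}
\item For larger $k$, the bound $ka/4$ wins outright.
\item For smaller $k$, the condition $k+a>n/2$ forces $a>n/3\ge 2k$, and there $k(a-k)$ alone suffices.
\end{itemize}

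\textbf{An alternative repair within your own framework.} The conditions $k+a>n/2$ and $a\le 2k$ force $k>n/6$. Since $n-a=\max_j n_j\ge n/(q-1)$, the two-block minimum is then $\Omega_q(n^2)$. This beats the at most $(q-1)^{n-1}$ diagonal cosets. Either route has to be carried out, not asserted.
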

	
	\begin{IEEEproof}
        Without loss of generality, assume that $k \leq n/2$ and $m=k$. Let $1\neq \overline \tM\in H$ with $\operatorname{ord}(\overline \tM)=\ell$, where $\ell$ is prime. Assume that $\tP=\tI_n$, then there exists $\lambda \in \gf_q^*$ such that $\tD^{\ell}=\lambda\tI_n$, i.e., $a_i^{\ell}=\lambda$ for $i=1,2,  \cdots, n$. Let $a_{i_0}$ be the element that appears most frequently on the diagonal of $\tD$. Replace $\tD$ by the representative $a_{i_0}^{-1}\tD$. Then we have $\tD^\ell=\tI_n$, $1$ is an entry of maximum multiplicity in $\tD$, and $\alpha_i^\ell=1$ for every element $\alpha_i$ on the diagonal of $\tD$. Since $\overline{\tD}\neq1$, at least one $\alpha_i$ is a nontrivial $\ell$-th root of unity, and hence, $\ell\mid(q-1)$.
		
		Let $\xi_1,\ldots,\xi_s\in \mathbb F_q^\ast$ be all distinct eigenvalues of $\tD$, where $2\le s\le \ell\le q-1.$ Then the minimal polynomial of $\tD$ is $\prod_{j=1}^s(X-\xi_j)$. Note that the polynomials $X-\xi_j$ are pairwise coprime. Let $V_j=\Ker(\tD-\xi_j \tI_n)$ for $1\le j\le s.$ By Lemma \ref{lem-PD}, $\gf_q^n$ can be decomposed as a direct sum of the eigenspaces of $\tD$, i.e., $\mathbb F_q^n=V_1\oplus\cdots\oplus V_s$. Moreover, every $\tD$-invariant subspace $\cC$ decomposes as the direct sum of its intersections with these eigenspaces, i.e., $\cC=(\cC\cap V_1)\oplus\cdots\oplus(\cC\cap V_s).$
		Denote $n_j=\dim V_j.$ Without loss
		of generality, assume that $n_1=\max_{1\le j\le s}n_j.$ Define $\bar{n}:=n-n_1$ 
		and $r_j:=\dim (\cC\cap V_j).$ Note that $0\le r_j\le n_j$ and $\ r_1+\cdots+r_s=k.$ Hence, we have
		\begin{eqnarray*}
			|\{\cC \in \text{Gr}_q(k, n) : \cC\tD=\cC\}| \leq
			\sum_{\substack{r_j \ge 0 \\ \sum r_j=k}}
			\prod_{j=1}^s
			\genfrac{[}{]}{0pt}{}{n_j}{r_j}_q .
		\end{eqnarray*}
		By Lemma \ref{lem-Gau}, there exist constants $c_1$, $c_2$ such that
		\begin{eqnarray*}
			\genfrac{[}{]}{0pt}{}{n_j}{r_j}_q \leq c_2 q^{r_j(n_j-r_j)} \mbox{ and } \genfrac{[}{]}{0pt}{} nk_q\ge c_1 q^{k(n-k)}.
		\end{eqnarray*}
		Therefore, we obtain
		\begin{eqnarray*}
			\Pr[\cC\tD=\cC]
			\le
			\frac{c_2^s}{c_1}
			\sum_{\substack{r_j \ge 0\\ \sum r_j=k}}
			q^{\sum r_j(n_j-r_j)-k(n-k)}.
		\end{eqnarray*}
		Since $s\le q-1$, the number of $(r_1, r_2, \cdots, r_s)$ such that $r_1+\cdots+r_s=k$ is at most $(k+1)^{q-1}$. Note that $n_j\le n_1=n-\bar{n}$ for all $1 \le j \le s$. Then we have
		\begin{eqnarray*}
			\sum_{j=1}^s r_j(n_j-r_j)
			\le
			\sum_{j=1}^s r_j n_j
			\le
			(n-\bar{n})\sum_{j=1}^s r_j
			=
			k(n-\bar{n}).
		\end{eqnarray*}
		It then follows that
		\begin{eqnarray}\label{eq-1}
			\Pr[ \cC\tD=\cC]
			\le
			\frac{c_2^s}{c_1} (k+1)^{q-1}q^{-k(\bar{n}-k)}.
		\end{eqnarray}
		Note that this estimate is strong when $\bar{n}>k$, but it is weak when $\bar{n} \leq k$. We therefore derive another estimate in what follows.
		
		Let $U:=V_1$ and $W:=V_2\oplus\cdots\oplus V_s$. Moreover, we have $\cC=(\cC\cap U)\oplus(\cC\cap W)$ if $\cC\tD=\cC$. Define $r:=\dim(\cC\cap W).$ Then we have $0\le r\le \min\{k,\bar{n}\}$ and
		$\dim(\cC\cap U)=k-r.$ Therefore, we obtain
		\begin{eqnarray*}
			|\{\cC \in \text{Gr}_q(k, n) : \cC\tD=\cC\}| \leq \sum_{r=0}^{\min\{k,\bar{n}\}}
			\genfrac{[}{]}{0pt}{} {\bar{n}}{r}_q
			\genfrac{[}{]}{0pt}{}{n-\bar{n}}{k-r}_q.
		\end{eqnarray*}
		By Lemma \ref{lem-Gau}, we have
		\begin{eqnarray}\label{eq-2}
			\Pr[\cC\tD=\cC]
			\le
			\frac{c_2^2}{c_1}
			\sum_{r=0}^{\min\{k,\bar{n}\}}
			q^{-\Delta(r, \bar{n})},
		\end{eqnarray}
		where $\Delta(r,\bar{n})
		:=
		k(n-k)-r(\bar{n}-r)-(k-r)(n-\bar{n}-k+r)
		=
		k\bar{n}+r(n-2k-2\bar{n})+2r^2.
		$
		
		If $k+\bar{n}\le n/2$, then for every $r$,
		$
		\Delta(r,\bar{n})=k\bar{n}+r(n-2k-2\bar{n})+2r^2 \geq k\bar{n}.
		$
		It then follows that
		\[
		\sum_{r=0}^{\min\{k,\bar{n}\}}
		q^{-\Delta(r,\bar{n})} \leq 
		(k+1)q^{-k\bar{n}},
		\]
		and 
		\begin{eqnarray}\label{eq-le}
			\Pr[\cC\tD=\cC]
			\le
			\frac{c_2^2}{c_1}(k+1)q^{-k\bar{n}}.
		\end{eqnarray}
		
		Let $\theta:=\log_q n$ and $\eta:=\log_q(q-1).$ If $k+\bar{n} > n/2$ and $k \leq \theta^2$, then for sufficiently large $n$, we have $k \leq \frac{n}{6}$ and $\bar{n}>\frac n2-k \geq \frac{n}{3}$. In this case, we have
		\begin{eqnarray}\label{eq-ge-le}
			\Pr[ \cC\tD=\cC]
			\le
			\frac{c_2^s}{c_1} (k+1)^{q-1}q^{-k(\bar{n}-k)}.
		\end{eqnarray}
		
		If $k+\bar{n} > n/2$ and $k > \theta^2$, then for every $r$,
		$\Delta(r,\bar{n})
		\ge
		k\bar{n}-\frac{(2k+2\bar{n}-n)^2}{8}.$
		Since $k\le n/2$, we have $2k+2\bar{n}-n\le 2\bar{n}$ and
		\begin{eqnarray*}
			\Delta(r,\bar{n})
			\ge
			k\bar{n}-\frac{\bar{n}^2}{2}.
		\end{eqnarray*} 
		If $k+\bar{n} > n/2$, $k > \theta^2$ and $\bar{n} \leq k$, then we have $\Delta(r,\bar{n}) \ge k\bar{n}-\frac{\bar{n}^2}{2} \ge \frac{k\bar{n}}{2}$. By inequality (\ref{eq-2}), we have
		\begin{eqnarray}\label{eq-ge-k}
			\Pr[ \cC\tD=\cC] \le \frac{c_2^2}{c_1}(k+1)q^{-k\bar{n}/2}.
		\end{eqnarray}
		If $k+\bar{n} > n/2$, $k > \theta^2$ and $\bar{n} \ge 2k$, then we have $k(\bar{n}-k)\ge \frac{k\bar{n}}{2}$. By inequality (\ref{eq-1}), we have
		\begin{eqnarray}\label{eq-ge-2k}
			\Pr[ \cC\tD=\cC]
			\le
			\frac{c_2^s}{c_1} (k+1)^{q-1}q^{-k\bar{n}/2}.
		\end{eqnarray}
		If $k+\bar{n}>n/2$, $k > \theta^2$ and $k < \bar{n}< 2k$, consider $k\bar{n}-\frac{\bar{n}^2}{2}$ and $k(\bar{n}-k)$. Let $x=\frac {\bar{n}}{k}\in(1,2)$. Then $k\bar{n}-\frac{\bar{n}^2}{2} = k^2\left(x-\frac{x^2}{2}\right)$ and $k(\bar{n}-k)=k^2(x-1)$. It is easy to deduce that $\max\left\{
		x-\frac{x^2}{2},
		x-1
		\right\}
		\ge
		\frac14 x$ for
		$1\le x\le 2.$ Then we obtain $\max\left\{
		k\bar{n}-\frac{\bar{n}^2}{2},
		k(\bar{n}-k)
		\right\}
		\ge
		\frac14 k\bar{n}$ for $k < \bar{n}< 2k.$ By inequalities (\ref{eq-1}) and (\ref{eq-2}), we have
		\begin{eqnarray}\label{eq-ge-m}
			\Pr[ \cC\tD=\cC]
			\le \frac{c_2^s}{c_1} (k+1)^{q-1}q^{-k\bar{n}/4}.
		\end{eqnarray}
		
		It follows from inequalities (\ref{eq-le}), (\ref{eq-ge-le}), (\ref{eq-ge-k}), (\ref{eq-ge-2k}), and (\ref{eq-ge-m}) that 
		$\Pr[ \cC\tD=\cC]
		\le p_{\bar{n}}$,
		where 
		\begin{eqnarray}\label{eq-pbarn}
			p_{\bar{n}}=
			\begin{cases}
				\frac{c_2^2}{c_1}(k+1)q^{-k\bar{n}},
				& k+\bar{n}\le n/2,\\
				\frac{c_2^s}{c_1}(k+1)^{q-1}q^{-k(\bar{n}-k)},
				& k+\bar{n}>n/2,\ k\le \theta^2,\\
				\frac{c_2^s}{c_1}(k+1)^{q-1}q^{-k\bar{n}/4},
				& k+\bar{n}>n/2,\ k>\theta^2.
			\end{cases}
		\end{eqnarray}
		
		We next sum the preceding estimates for $\Pr[ \cC\tD=\cC]$ over all nontrivial cosets $\overline{\tM} \in H$ of order $\ell$ with $\tP=\tI_n$, where for each coset we use the normalized diagonal representative $\tD$ chosen in the first paragraph of the proof. For fixed $\ell \in \mathcal P_q$ and $\bar{n}$, the number of normalized diagonal representatives $\tD$ is at most $\binom {n}{\bar{n}}(\ell-1)^{\bar{n}}$. Then we have
		\begin{eqnarray*}
			\lefteqn{\sum_{\ell \in \mathcal{P}_q}
				\sum_{\substack{
						1\neq \overline \tM\in H\\
						\operatorname{ord}(\overline \tM)=\ell\\
						\tP = \tI_n
				}}
				\Pr\left[ \cC\tD=\cC\right] }\\
			&\leq&  \sum_{\ell \in \mathcal{P}_q} \sum_{\bar{n}=1}^{n-1} \binom{n}{\bar{n}} (\ell -1)^{\bar{n}} p_{\bar{n}} \\
			&=&  \sum_{\bar{n}=1}^{n-1} \binom{n}{\bar{n}} \left(\sum_{\ell \in \mathcal{P}_q} (\ell -1)^{\bar{n}} \right)p_{\bar{n}} \\
			& \leq & \sum_{\bar{n}=1}^{n-1} \binom{n}{\bar{n}} \left(\sum_{\ell \in \mathcal{P}_q} (\ell -1) \right)^{\bar{n}} p_{\bar{n}} \\
			& \leq & \sum_{\bar{n}=1}^{n-1} \binom{n}{\bar{n}} (q-2)^{\bar{n}} p_{\bar{n}} \\
			& \leq & \sum_{\bar{n}=1}^{n-1} n^{\bar{n}} (q-1)^{\bar{n}} p_{\bar{n}}, 
		\end{eqnarray*}
		where the third inequality holds as $$q-2 \geq |\{\zeta \in \gf_q^*\setminus \{1\}: \ord(\zeta) \mbox{ is prime}\}|=\sum_{\ell \in \mathcal{P}_q} \phi(\ell)=\sum_{\ell \in \mathcal{P}_q} (\ell -1).$$
		
		Recall that $\theta:=\log_q n$ and $\eta:=\log_q(q-1).$ It follows that $n^{\bar{n}}(q-1)^{\bar{n}}=q^{\bar{n}(\theta+\eta)}.$ Then it is sufficient to prove that $\sum\limits_{\bar{n}=1}^{n-1} q^{\bar{n}(\theta+\eta)}p_{\bar{n}}$ goes to zero as $n$ goes to infinity. For $1 \leq \bar{n} \leq n-1$, by Eq. (\ref{eq-pbarn}), the value of $p_{\bar{n}}$ differs between the two cases $k+\bar{n}\le n/2$ and $k+\bar{n}> n/2$.
		Then we consider the following two parts:
		\begin{eqnarray*}
			S_1=
			\sum_{\substack{1\le \bar{n}\le n-1\\ k+\bar{n}\le n/2}}
			q^{\bar{n}(\theta+\eta)}p_{\bar{n}}, \mbox{ } S_2=
			\sum_{\substack{1\le \bar{n}\le n-1\\ k+\bar{n}> n/2}}
			q^{\bar{n}(\theta+\eta)}p_{\bar{n}}.
		\end{eqnarray*}
		
		When $k+\bar{n}\le n/2$, it follows from Eq. (\ref{eq-pbarn}) that $$S_1
		\le
		\frac{c_2^2}{c_1}(k+1)
		\sum_{\bar{n}\ge 1} q^{\bar{n}(\theta+\eta-k)}.$$ Since $q$ is fixed and $\eta=\log_q(q-1)$ is a constant depending only on fixed $q$, and hence $\eta\le \frac{\varepsilon}{4}\theta$ for sufficiently large $n$. Moreover, $k\ge (2+\varepsilon)\theta$. It follows that $$\theta+\eta-k
		\le
		\theta+\frac{\varepsilon}{4}\theta-(2+\varepsilon)\theta
		=
		-\left(1+\frac{3\varepsilon}{4}\right)\theta.$$
		Hence, we have $$S_1
		\le
		\frac{c_2^2}{c_1}(k+1)
		\sum_{\bar{n}\ge 1}
		q^{-(1+\frac{3\varepsilon}{4})\bar{n}\theta}=\frac{c_2^2}{c_1}(k+1)
		\sum_{\bar{n}\ge 1} 
		n^{-(1+\frac{3\varepsilon}{4})\bar{n}} = \frac{c_2^2}{c_1}(k+1)\cdot \frac{n^{-(1+\frac{3\varepsilon}{4})}}{1-n^{-(1+\frac{3\varepsilon}{4})}}
		\le
		\frac{2c_2^2}{c_1}\left(\frac n2+1\right)n^{-(1+\frac{3\varepsilon}{4})},
		$$ which goes to zero as $n$ goes to infinity.
		
		We next consider the case where $k+\bar{n} > n/2$. If $k+\bar{n}>n/2$ and $k>\theta^2$, it follows from Eq. (\ref{eq-pbarn}) that $$S_2
		\le
		\frac{c_2^s}{c_1} (k+1)^{q-1}
		\sum_{\bar{n}\ge 1}
		q^{\bar{n}(\theta+\eta-\frac{k}{4})}.$$ Since $k>\theta^2$, we have $\theta+\eta\le k/8$ for sufficiently large $n$. Then we have $\theta+\eta-\frac{k}{4}
		\le
		-\frac{k}{8}$. Therefore, we obtain $$S_2
		\le
		\frac{c_2^s}{c_1}(k+1)^{q-1}
		\sum_{\bar{n}\ge 1}q^{-k\bar{n}/8} \le 
		\frac{c_2^s}{c_1}(k+1)^{q-1}
		\frac{q^{-k/8}}{1-q^{-k/8}} = \frac{c_2^s}{c_1} \frac{(k+1)^{q-1}}{q^{k/8}-1}, $$ which tends to zero as $n$ tends to infinity, since $k>\theta^2$ implies that $k$ tends to infinity. 
		If $k+\bar{n}>n/2$ and $k\le \theta^2$, we have $\bar{n}>\frac n2-k \geq \frac{n}{3}$ for sufficiently large $n$. It then follows from Eq. (\ref{eq-pbarn}) that
		\begin{eqnarray*}
			S_2
			\le
			\frac{c_2^s}{c_1} (k+1)^{q-1}
			\sum_{\bar{n}\ge n/3}
			q^{\bar{n}(\theta+\eta)-k(\bar{n}-k)}.
		\end{eqnarray*} 
		For sufficiently large $n$, we have $k^2 \leq \varepsilon \bar{n}\theta/4$ and $\eta\le \varepsilon \theta/4$. Note that $k\ge (2+\varepsilon)\theta$. Then we have $k(\bar{n}-k)=k\bar{n}-k^2 
		\ge
		\left(2+\frac{3\varepsilon}{4}\right)\bar{n}\theta.$ It follows that $$
		\bar{n}(\theta+\eta)-k(\bar{n}-k)
		\le
		\bar{n}\left(1+\frac{\varepsilon}{4}\right)\theta
		-
		\bar{n}\left(2+\frac{3\varepsilon}{4}\right)\theta
		=
		-\bar{n}\left(1+\frac{\varepsilon}{2}\right)\theta.
		$$
		Therefore, we have $$S_2
		\le
		\frac{c_2^s}{c_1}(k+1)^{q-1}
		\sum_{\bar{n}\ge n/3}
		q^{-(1+\frac{\varepsilon}{2})\bar{n}\theta}.$$ Moreover, since $k \le \theta^2$ and $\bar{n}<n$, we obtain
		$
		(k+1)^{q-1}\le (\theta^2+1)^{q-1}$ and $\sum\limits_{\bar{n}\ge n/3}
		q^{-(1+\frac{\varepsilon}{2})\bar{n}\theta}
		\le
		nq^{-\frac{(1+\frac{\varepsilon}{2})n\theta}{3}}$. 
		Then we conclude that $$S_2
		\le
		\frac{c_2^s}{c_1}(\theta^2+1)^{q-1}
		nq^{-\frac{(1+\frac{\varepsilon}{2})n\theta}{3}}=\frac{c_2^s}{c_1}(\theta^2+1)^{q-1}
		n^{-\frac{(1+\frac{\varepsilon}{2})n}{3}+1},$$ which tends to zero as $n$ tends to infinity.
		
		By the above discussions, we obtain
		\begin{eqnarray*}
			\sum_{\ell \in \mathcal{P}_q}
			\sum_{\substack{
					1\neq \overline \tM\in H\\
					\operatorname{ord}(\overline \tM)=\ell\\
					\tP = \tI_n
			}}
			\Pr\left[\cC\tD=\cC\right] \leq S_1+S_2,
		\end{eqnarray*}
		which again goes to zero as $n$ goes to infinity.
	\end{IEEEproof}
	
	\subsection{$\tP \neq \tI_n$.}\label{subsec4.2}
	
	Since $\ell$ is prime and $\tP \neq \tI_n$, all nontrivial cycles of $\tP$ have length $\ell$. Thus we may write $\tP=\gamma_1\gamma_2\cdots \gamma_t$ which is a product of disjoint cycles, where $t\ge 1$, each $\gamma_j$ is a cycle of length $\ell$, and the remaining coordinates are fixed points. Write the $j$-th cycle as $\gamma_j=(i_{j,1}\ i_{j,2}\ \cdots\ i_{j,\ell})$. Let $\be_i$ denote the $i$-th standard basis vector whose $i$-th entry is $1$ and all other entries are $0$. Let
	\[
	W_j:=\langle \be_{i_{j,1}},\be_{i_{j,2}},\dots,\be_{i_{j,\ell}}\rangle_{\mathbb F_q}.
	\]
	Then $W_j$ is invariant under the action of $\tM$, and there exist nonzero elements $b_{j,1},\dots,b_{j,\ell}\in \mathbb F_q^*$ such that
	\begin{eqnarray*}
		\be_{i_{j,1}}\tM &=& b_{j,1}\be_{i_{j,2}},\\
		\be_{i_{j,2}}\tM &=& b_{j,2}\be_{i_{j,3}},\\
		&\cdots& \\
		\be_{i_{j,\ell-1}}\tM &=& b_{j,\ell-1}\be_{i_{j,\ell}},\\
		\be_{i_{j,\ell}}\tM &=& b_{j,\ell}\be_{i_{j,1}}.
	\end{eqnarray*}
	Applying $\tM^{\ell}$ to $\be_{i_{j,1}}$ gives
	\[
	\be_{i_{j,1}}\tM^\ell=(b_{j,1}b_{j,2}\cdots b_{j,\ell})\be_{i_{j,1}}.
	\]
	On the other hand, $\tM^\ell=\lambda \tI_n$ gives $\be_{i_{j,1}}\tM^\ell=\lambda \be_{i_{j,1}}$. Therefore $b_{j,1}b_{j,2}\cdots b_{j,\ell}=\lambda$ for every cycle $\gamma_j$. Moreover, the action of $\tM$ on $W_j$ can be transformed by a change of basis into a standard weighted cycle. Define
	\[
	\mathbf{f}_{j,1}:=\be_{i_{j,1}}, 
	\mathbf{f}_{j,v}:=(b_{j,1}b_{j,2}\cdots b_{j,v-1})\be_{i_{j,v}} \mbox{ for } 2 \leq v \leq \ell .
	\]
	Then we obtain
	\begin{eqnarray*}
\mathbf{f}_{j,1}\tM&=&\be_{i_{j,1}}\tM=b_{j,1}\be_{i_{j,2}}=\mathbf{f}_{j,2},\\
			\mathbf{f}_{j,v}\tM&=&(b_{j,1}b_{j,2}\cdots b_{j,v-1})\be_{i_{j,v}}\tM=\mathbf{f}_{j,v+1} \mbox{ for } 2\leq v\leq \ell-1,\\
			\mathbf{f}_{j,\ell}\tM&=&(b_{j,1}b_{j,2}\cdots b_{j,\ell-1})\be_{i_{j,\ell}}\tM = \lambda \mathbf{f}_{j,1}.
	\end{eqnarray*}
	Let $F$ denote the set of fixed points of $\tP$. Then $\gf_q^n = \left(\bigoplus_{j=1}^t W_j\right)\oplus \langle\be_i : i \in F \rangle_{\gf_q}$, and all direct summands are $\tM$-invariant. We call $\{\mathbf{f}_{j,1}, \cdots, \mathbf{f}_{j,\ell}\}$ a normalized cyclic basis of $W_j$. With respect to this basis, the restriction $\tM |_{W_j}$ has minimal polynomial $X^{\ell}-\lambda$.   
	Since the minimal polynomial of a restriction divides that of the original transformation, $X^\ell-\lambda$ divides the minimal polynomial of $\tM$. On the other hand, since $\ord(\overline{\tM})=\ell$, we have $\tM^\ell=\lambda\tI_n$, and hence the minimal polynomial of $\tM$ divides $X^\ell-\lambda$. Therefore, the minimal polynomial of $\tM$ is $X^{\ell}-\lambda$.
	
	In the following, we first consider $k$-dimensional linear subspaces $\cC \subseteq \gf_q^n$ of which the automorphism groups contain an element whose image in $H_{\cC}$ has order $2$. For every fixed $\varepsilon>0$, we show that the proportion of such subspaces tends to zero whenever $\min\{k,n-k\}\geq(2+\varepsilon)\log_q n$.
	
	\begin{theorem}\label{thm-ordertwo}
		Fix a prime power $q$ and a real number $\varepsilon>0$. Let $\cC \subseteq \gf_q^n$ be a uniformly random $k$-dimensional linear subspace, where $k=k(n)$, let $m:=\min\{k, n-k\}$, and let $H:=M_{n,q}/\MAut_0$. If $m\ge (2+\varepsilon)\log_q n$, then 
		\[
		\sum_{\substack{
				1\neq \overline \tM\in H\\
				\operatorname{ord}(\overline \tM)=2\\
				\tP\neq \tI_n
		}}
		\Pr\left[\cC\tM=\cC\right]
		\] tends to zero as $n$ tends to infinity.
	\end{theorem}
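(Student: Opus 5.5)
Without loss of generality take $k\le n/2$, so $m=k$, via Lemma~\ref{lem-dualag}. Fix a coset $\overline\tM$ of order $2$ with $\tP\ne\tI_n$, so $\tP$ is a product of $t\ge1$ disjoint transpositions and has $f=n-2t$ fixed points. Using the normalized cyclic basis, $\tM|_{W_j}$ has minimal polynomial $X^2-\lambda$. On the fixed coordinates $\tM$ acts diagonally by scalars $\alpha_i$ with $\alpha_i^2=\lambda$.

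\emph{Counting cosets.} A coset is determined by $t$, the choice of the transpositions (at most $n^{2t}$ ways), the weights $b_{j,1},b_{j,2}$ (at most $(q-1)^{2t}$), and the diagonal entries on $F$ (at most $(q-1)^{f}$). The last count is too crude. I will instead note that on $F$ the entries lie in the at most two square roots of $\lambda$. Arguing as in Theorem~\ref{thm-I}, I normalize so that the majority root is the identity on $F$. I then let $\bar f$ be the number of minority fixed coordinates, giving at most $\binom{n}{\bar f}$ choices. The total count is then at most $q^{(2t+\bar f)(\theta+\eta)}$ times a constant, with $\theta=\log_q n$ and $\eta=\log_q(q-1)$ as before.

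\emph{Counting invariant subspaces.} I split into two cases according to the characteristic.

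In odd characteristic, $X^2-\lambda$ is separable. If it is irreducible, then $\gf_q^n$ has no invariant fixed coordinates unless $f=0$, and $\tM$ makes $\gf_q^n$ an $\gf_{q^2}$-space of dimension $n/2$. Invariant $k$-spaces are then $\gf_{q^2}$-subspaces, numbering about $q^{2\cdot\frac k2(\frac n2-\frac k2)}=q^{k(n-k)/2}$, so $\Pr\le cq^{-k(n-k)/2}$. If it splits as $(X-\mu)(X+\mu)$, Lemma~\ref{lem-PD} gives $\gf_q^n=V_+\oplus V_-$ with $\dim V_-=t+\bar f$ (the minority part). I then repeat verbatim the two estimates (\ref{eq-1}) and (\ref{eq-2}) from Theorem~\ref{thm-I}, with $\bar n$ replaced by $t+\bar f$. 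This gives the same case bound $p_{\bar n}$, now with $\bar n=t+\bar f$.

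In characteristic $2$, normalize $\lambda=1$. The fixed-point scalars are then all $1$, so $\bar f=0$. Set $\tN=\tM-\tI_n$, which has $\tN^2=0$ and rank $t$, and apply Corollary~\ref{cor-ordertwo}. By Lemma~\ref{lem-Gau}, the $r$-th term relative to $q^{k(n-k)}$ is $q^{-\Delta'(r)}$ with
\[\Delta'(r)=k(n-k)-r(t-r)-(k-2r)(n-t-k+r)-(n-t-k+r)r .\]
I expect $\Delta'(r)$ to be minimized at an endpoint. A quick check at $r=0$ gives $kt$. At other $r$ I expect a bound of the form $\Delta'\ge c\,\min\{kt,\,t(n-k)\}$, or at least $\ge kt/4$ as in the three-regime analysis of Theorem~\ref{thm-I}.

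\emph{Summation and the main obstacle.} It then remains to show that
\[\sum_{\bar n\ge1} q^{2\bar n(\theta+\eta)}\,p_{\bar n}\to0, \qquad \bar n:=t+\bar f .\]
The main obstacle is that the factor $2$ in the exponent $2t(\theta+\eta)$, coming from choosing pairs, must be beaten by an exponent of order $k\bar n$. This forces $k\ge(2+\varepsilon)\theta$, exactly the threshold, so the constants must be sharp in the dominant regime of small $t$ with $k+\bar n\le n/2$. Here I need $\Pr\le\mathrm{poly}(k)\,q^{-k\bar n}$ with no loss, so that each term is at most $q^{\bar n(2\theta+2\eta-k)}\le n^{-\varepsilon\bar n/2}$. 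The lossy bounds $q^{-k\bar n/4}$ are used only when $k>\theta^2$, where they are harmless. When $k\le\theta^2$ and $\bar n>n/2-k$, I use $q^{-k(\bar n-k)}$, which is fine because $k^2=o(\bar n\theta)$.

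Particular care is needed in the characteristic-$2$ case. There the exact minimization of $\Delta'(r)$ over $0\le r\le\min\{t,k/2\}$ must yield $\Delta'\ge kt-O(r^2)$ with the $r$-dependence absorbed, which I would verify by treating $\Delta'$ as a quadratic in $r$, exactly as was done for $\Delta(r,\bar n)$. Since each regime sum is geometric in $\bar n$ with ratio $n^{-\Omega(\varepsilon)}$ and polynomial prefactors, summing the regimes tends to zero.
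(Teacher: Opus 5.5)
Your outline is sound and uses the paper's case split: Corollary~\ref{cor-ordertwo} in characteristic $2$, the decomposition $V_+\oplus V_-$ when $\lambda$ is a square in odd characteristic, and the $\gf_{q^2}$-structure when $\lambda$ is a nonsquare. Your majority normalization with $\bar n=t+\bar f=\dim V_-$ is the paper's $s=\min\{\dim V_+,\dim V_-\}$ with $2t+b\le 2s$.

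The difference is in the final summation. The paper writes the whole exponent as $G(r,t)=tA_r+r(2k-n)-2r^2$, which is linear in $t$. It handles $r=0$ separately, where $A_0=2\theta+\eta-k\le-\varepsilon\theta/2$. For $r\ge1$ it evaluates at $t=r$ or $t=n/2$ according to the sign of $A_r$. This needs no case distinction in $k$ and no second counting estimate. You instead reuse the three-regime bound $p_{\bar n}$ of Theorem~\ref{thm-I} against the count $q^{2\bar n(\theta+\eta)}$. That keeps the argument uniform with Subsection~\ref{subsec4.1} and isolates the sharp regime $k+\bar n\le n/2$ clearly, but it needs two repairs.

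\emph{The characteristic-$2$ minimization.} Your guess that $\Delta'(r)$ is minimized at an endpoint is wrong. Expanding gives $\Delta'(r)=kt+r(n-2k-2t)+2r^2$, which is exactly $\Delta(r,t)$ from Theorem~\ref{thm-I} and is convex in $r$. So your $r=0$ check is the minimum only when $k+t\le n/2$. For $k+t>n/2$ the minimum is at the vertex $r^*=(2k+2t-n)/4>0$.

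Theorem~\ref{thm-I} also does not transfer verbatim. Its regimes with $k+\bar n>n/2$ use the bound $q^{-k(\bar n-k)}$ of (\ref{eq-1}), which comes from an eigenspace decomposition that the nilpotent $\tN$ does not have. The step closes anyway. Since $t\le n/2$ and $k\le n/2$, one has $0<2k+2t-n\le 2\min\{k,t\}$, hence $\Delta'(r)\ge kt-\frac{(2k+2t-n)^2}{8}\ge kt-\frac12\min\{k,t\}^2$. This gives $kt-k^2/2$, which suffices when $k\le\theta^2$ and $t>n/3$, and $kt/2$, which suffices when $k>\theta^2$. The same inequality with $\bar n=\dim V_-\le n/2$ handles the odd split case, so (\ref{eq-1}) is not needed at all.

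\emph{The prefactor.} In the regime $k+\bar n\le n/2$ you cannot absorb a factor $(k+1)$ into a geometric sum of size $O(n^{-\varepsilon/2})$, because $k$ may be linear in $n$. You have two options. Keep the $k$-dependence, using $(k+1)q^{2\theta+2\eta-k}\to0$ uniformly for $k\ge(2+\varepsilon)\theta$. Or observe that in this regime $\Delta(r,\bar n)\ge k\bar n+2r^2$, so $\sum_r q^{-\Delta(r,\bar n)}=O(q^{-k\bar n})$ with no prefactor. The paper's separate treatment of $r=0$ achieves the same thing.
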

	
	\begin{IEEEproof}
		Again without loss of generality, we restrict attention to $k \leq n/2$ and $m = k$.
		Let $1\neq \overline \tM\in H$ with $\operatorname{ord}(\overline \tM)=2$. 
		Then there exists $\lambda \in \gf_q^*$ such that $\tM^{2}=\lambda \tI_n$.
		
		\noindent\textbf{Case 1: $\operatorname{char}(\mathbb F_q)=2$.}
		
		By the Frobenius map $x\mapsto x^2$, there exists $\mu\in \mathbb F_q^*$ such that $\mu^2=\lambda^{-1}$. Thus we may use $\mu\tM$ as a representative. For convenience, we still denote it by $\tM=\tD\tP$, and we may assume that $\tM^2=\tI_n$. Suppose that $\tD=\diag(\alpha_1, \dots, \alpha_n)$ with $\alpha_i \in \gf_q^*$, and $\tP$ has $t \geq 1$ transpositions. Without loss of generality, write
		\[
		\tP=(1\ 2)(3\ 4)\cdots(2t-1\ 2t).
		\]
		For a fixed point $i>2t$, we have $\be_i\tM=\alpha_i \be_i$. Since $\tM^2=\tI_n$, it follows that $\alpha_i^2=1$. As $\operatorname{char}(\mathbb F_q)=2$, we have $\alpha_i=1$, and hence $\be_i\tM=\be_i$ for all $i>2t$. Note that for each transposition $(2j-1\ 2j)$, one has $\mathbf{f}_{j,1}\tM=\mathbf{f}_{j,2}$ and $\mathbf{f}_{j,2}\tM=\mathbf{f}_{j,1}$. Let $\tN:=\tM-\tI_n$. Then we have
		\[
		\tN^2=(\tM-\tI_n)^2=\tM^2-2\tM+\tI_n=\tI_n+\tI_n=0.
		\]
		On each transposition $(2j-1\ 2j)$, we have
		$\mathbf{f}_{j,1}\tN=\mathbf{f}_{j,2}-\mathbf{f}_{j,1}$ and
		$\mathbf{f}_{j,2}\tN=\mathbf{f}_{j,1}-\mathbf{f}_{j,2}$.
		Thus the image of $\tN$ on $\langle \mathbf{f}_{j,1}, \mathbf{f}_{j,2} \rangle$ has dimension $1$. 
		It follows that $\dim(\im(\tN))=t$ and $\dim(\Ker(\tN))=n-t$. Since $\tM=\tN+\tI_n$, $\cC\tM=\cC$ if and only if $\cC\tN \subseteq \cC$. By Corollary~\ref{cor-ordertwo}, we have
		\begin{eqnarray*}
			\Pr\left[ \cC\tM=\cC\right] \leq \frac{
				\sum_{r=0}^{\min\{t,\lfloor k/2\rfloor\}}
				\genfrac{[}{]}{0pt}{}{t}{r}_q
				\genfrac{[}{]}{0pt}{}{n-t-r}{k-2r}_q
				q^{(n-t-k+r)r}
			}{
				\genfrac{[}{]}{0pt}{}{n}{k}_q
			}.
		\end{eqnarray*}
		By Lemma~\ref{lem-Gau}, there exist constants $c_1$, $c_2$ such that
		\begin{eqnarray*}
			\genfrac{[}{]}{0pt}{}{t}{r}_q\le c_2 q^{r(t-r)}, \mbox{ }
			\genfrac{[}{]}{0pt}{}{n-t-r}{k-2r}_q
			\le
			c_2 q^{(k-2r)(n-t-r-k+2r)},\mbox{ and }
			\genfrac{[}{]}{0pt}{}{n}{k}_q\ge c_1 q^{k(n-k)}.
		\end{eqnarray*}
		Therefore, we obtain
		\begin{eqnarray*}
			\Pr\left[ \cC\tM=\cC\right]
			\le
			\frac{c_2^2}{c_1}
			\sum_{r=0}^{\min\{t,\lfloor k/2\rfloor\}}
			q^{F(r,t)},
		\end{eqnarray*}
		where
		\begin{eqnarray*}
			F(r,t)
			&:=&
			r(t-r)
			+
			(k-2r)(n-t-r-k+2r)
			+
			(n-t-k+r)r
			-
			k(n-k)\\
			&=&
			t(2r-k)+r(2k-n)-2r^2.
		\end{eqnarray*}
		
		For each nontrivial $\overline{\tM} \in H$ of order $2$ with $\tP \neq \tI_n$, let $\tM=\tD\tP$ be the normalized representative chosen above, satisfying $\tM^2=\tI_n$. We now sum over all such $\tM=\tD\tP$. For fixed $t$, the number of possible permutations $\tP$ with $t$ transpositions and $n-2t$ fixed points is at most
		\[
		\binom{n}{2}\binom{n-2}{2}\cdots \binom{n-2t+2}{2} \leq n^{2t}.
		\]
		Since the weights on each transposition satisfy $\alpha_{2j-1}\alpha_{2j}=1$, the entry $\alpha_{2j-1}$ is uniquely determined by $\alpha_{2j}$. Hence, there are at most $(q-1)^t$ choices for $\tD$. Let $\theta:=\log_q n$ and $\eta:=\log_q(q-1)$. Thus we have
		\begin{eqnarray*}
			\sum_{\substack{
					1\neq \overline \tM\in H\\
					\operatorname{ord}(\overline \tM)=2\\
					\tP\neq \tI_n
			}}
			\Pr\left[ \cC\tM=\cC\right] \leq \frac{c_2^2}{c_1}
			\sum_{t=1}^{\lfloor n/2\rfloor}
			\sum_{r=0}^{\min\{t,\lfloor k/2\rfloor\}}
			q^{G(r,t)},
		\end{eqnarray*}
		where
		$G(r,t)
		:=
		t(2\theta+\eta+2r-k)+r(2k-n)-2r^2.$
		For fixed $r$, the function $G(r,t)$ is linear in $t$. Let
		$
		A_r:=2\theta+\eta+2r-k.
		$
		
			For sufficiently large $n$, $A_0=2\theta+\eta-k\le-\varepsilon\theta/2$. Hence, we have
			\begin{eqnarray}\label{eq-A0}
			    \sum_{t=1}^{\lfloor n/2\rfloor}q^{G(0,t)}
			\le\sum_{t\ge1}q^{-\varepsilon\theta t/2}
			=\frac{1}{n^{\varepsilon/2}-1}=o(1).
			\end{eqnarray}
			Now let $1\le r\le\lfloor k/2\rfloor$. If $A_r\le0$, then $G(r,t)$ is nonincreasing in $t$. Since every admissible $t$ satisfies $r\le t\le\lfloor n/2\rfloor$, we have
			\[
			G(r,t)\le G(r,r)=r(2\theta+\eta+k-n)\le-\frac{rn}{3}
			\]
			for all sufficiently large $n$. Therefore, we have
			\begin{eqnarray}\label{eq-Ar-le0}
			    \sum_{r=1}^{\lfloor k/2\rfloor}
			\sum_{\substack{r\le t\le\lfloor n/2\rfloor\\ A_r\le0}}
			q^{G(r,t)}
			\le\sum_{r\ge1}nq^{-rn/3}=o(1).
			\end{eqnarray}
			If $A_r>0$, then $G(r,t)$ is increasing in $t$, and hence
			\[
			G(r,t)\le G(r,n/2)
			=n\theta+\frac{\eta n}{2}-\frac{nk}{2}+2kr-2r^2.
			\]
			For $0\le r\le k/2$, we have
			\[
			G(r,n/2)
			\le n\theta+\frac{\eta n}{2}-\frac{k(n-k)}{2}.
			\]
			The assumptions $k\le n/2$ and $k\ge(2+\varepsilon)\theta$ imply, for sufficiently large $n$, that
			\[
			k(n-k)\ge\left(2+\frac{\varepsilon}{2}\right)n\theta
			\mbox{ and }
			\eta\le\frac{\varepsilon\theta}{4}.
			\]
			Consequently, $G(r,t)\le-\varepsilon n\theta/8$, and
			\begin{eqnarray}\label{eq-Ar-ge0}
			    \sum_{r=1}^{\lfloor k/2\rfloor}
			\sum_{\substack{r\le t\le\lfloor n/2\rfloor\\ A_r>0}}
			q^{G(r,t)}
			\le n^2q^{-\varepsilon n\theta/8}=o(1).
			\end{eqnarray}
			Combining the inequalities (\ref{eq-A0}), (\ref{eq-Ar-le0}), and (\ref{eq-Ar-ge0}) yields the following estimate.
			\begin{eqnarray*}
				\lefteqn{\sum_{\substack{
							1\neq \overline \tM\in H\\
							\operatorname{ord}(\overline \tM)=2\\
							\tP\neq \tI_n
					}}
					\Pr[\cC\tM=\cC]}\\
				&\le&\frac{c_2^2}{c_1}
				\Bigg(
				\sum_{t=1}^{\lfloor n/2\rfloor}q^{G(0,t)}
				+\sum_{r=1}^{\lfloor k/2\rfloor}
				\sum_{\substack{r\le t\le\lfloor n/2\rfloor\\ A_r\le0}}q^{G(r,t)}
				+\sum_{r=1}^{\lfloor k/2\rfloor}
				\sum_{\substack{r\le t\le\lfloor n/2\rfloor\\ A_r>0}}q^{G(r,t)}
				\Bigg)=o(1).
			\end{eqnarray*}

		\noindent\textbf{Case 2: $\operatorname{char}(\mathbb F_q) \neq 2$.}
		
		In this case, we discuss two subcases: (1) $\lambda$ is a square; (2) $\lambda$ is a nonsquare.
		
		\emph{Subcase 2.1: $\lambda$ is a square.}
		Then there exists $\mu \in \gf_q^*$ such that $\mu^2=\lambda^{-1}$, and we may use $\mu\tM$ as a representative. By abuse of notation, we still denote it by $\tM$, and we may assume that $\tM^2=\tI_n$. Note that $\tM \neq \pm \tI_n$. Then the minimal polynomial of $\tM$ is $X^2-1$. Since $\operatorname{char}(\mathbb F_q) \neq 2$, the polynomial $X^2-1=(X-1)(X+1)$ has no repeated roots. Hence, by Lemma~\ref{lem-PD}, we have
		\[
		\gf_q^n=\Ker(\tM-\tI_n) \oplus \Ker(\tM+\tI_n).
		\]
		Let
		$
		V_{+}:= \Ker(\tM-\tI_n)$ and
		$V_{-}:= \Ker(\tM + \tI_n)
		$.
		For each transposition $(2j-1 \ 2j)$, let $W_j:=\langle \mathbf{f}_{j,1}, \mathbf{f}_{j,2}\rangle_{\mathbb F_q}$. Since
		\[
		(\mathbf{f}_{j,1}+\mathbf{f}_{j,2})\tM=\mathbf{f}_{j,2}+\mathbf{f}_{j,1}=\mathbf{f}_{j,1}+\mathbf{f}_{j,2},
		\]
		and
		\[
		(\mathbf{f}_{j,1}-\mathbf{f}_{j,2})\tM=\mathbf{f}_{j,2}-\mathbf{f}_{j,1}=-(\mathbf{f}_{j,1}-\mathbf{f}_{j,2}),
		\]
		we have $\mathbf{f}_{j,1}+\mathbf{f}_{j,2} \in V_+$ and $\mathbf{f}_{j,1}-\mathbf{f}_{j,2} \in V_-$. Moreover, since $\mathbf{f}_{j,1}$ and $\mathbf{f}_{j,2}$ are linearly independent, we have
		$W_j=\langle \mathbf{f}_{j,1}+\mathbf{f}_{j,2}\rangle \oplus \langle \mathbf{f}_{j,1}-\mathbf{f}_{j,2}\rangle$
		with $\dim(W_j \cap V_+)=1$ and $\dim(W_j \cap V_-)=1$.
		
		Let $s:=\min\{\dim V_+,\dim V_-\}$. The above analysis gives $1 \le t \le s\le \lfloor n/2\rfloor$. If $\cC\tM=\cC$, then we have
		$
		\cC = (\cC \cap V_+) \oplus (\cC \cap V_-).
		$
		Without loss of generality, suppose that $\dim V_-=s$ and $\dim V_+=n-s$. Then the number of $\cC$ satisfying $\cC\tM=\cC$ is at most
		\[
		\sum_{r=0}^{\min\{s,k\}}
		\genfrac{[}{]}{0pt}{}{s}{r}_q
		\genfrac{[}{]}{0pt}{}{n-s}{k-r}_q.
		\]
		It follows that
		\[
		\Pr\left[ \cC\tM=\cC\right]
		\le
		\frac{
			\sum_{r=0}^{\min\{s,k\}}
			\genfrac{[}{]}{0pt}{}{s}{r}_q
			\genfrac{[}{]}{0pt}{}{n-s}{k-r}_q
		}{
			\genfrac{[}{]}{0pt}{}{n}{k}_q
		}.
		\]
		By Lemma~\ref{lem-Gau}, there exist constants $c_1$ and $c_2$ such that
		\[
		\genfrac{[}{]}{0pt}{}{s}{r}_q\le c_2 q^{r(s-r)}, \mbox{ }
		\genfrac{[}{]}{0pt}{}{n-s}{k-r}_q
		\le c_2 q^{(k-r)(n-s-k+r)}, \mbox{ and }
		\genfrac{[}{]}{0pt}{}{n}{k}_q\ge c_1 q^{k(n-k)}.
		\]
		Therefore, we obtain
		\[
		\Pr\left[ \cC\tM=\cC\right]
		\le
		\sum_{r=0}^{\min\{s,k\}} \frac{c_2^2}{c_1}
		q^{F(r,s)},
		\]
		where
		$
		F(r,s) := s(2r-k)+r(2k-n)-2r^2.
		$

        We now sum over all nontrivial cosets $\overline{\tM} \in H$ of order $2$ with $\tP \neq \tI_n$ for which $\lambda_{\tM}$ is a square, using for each coset the normalized representative $\tM=\tD\tP$ chosen above, i.e., $\tM^2=\tI_n$. Suppose that $\tP$ has $t$ transpositions and $n-2t$ fixed points, and that the standard basis vectors corresponding to $b$ of these fixed points lie in $V_-$. Then we have $s=t+b$ and $2t+b \leq 2s$. By counting techniques, the number of permutations $\tP$ with $t$ transpositions and $b$ fixed points whose corresponding standard basis vectors lie in $V_-$ is at most $n^{2s}$, and there are at most $(q-1)^t$ choices for $\tD$. In addition, for fixed $s$, there are at most $s$ choices for $t$. Let $\theta:=\log_q n$ and $\eta:=\log_q(q-1)$. Denote by $
		G(r,s):= s(2\theta+\eta+2r-k)+r(2k-n)-2r^2.
		$ Since $s \leq n/2$, for $r > k/2$ we have $G(r,s) - G(k-r,s)=(2r-k)(2s-n) \leq 0$. 
        Therefore, if $S_{\mathrm{sq}}$ denotes the contribution of the cosets for which $\lambda_{\tM}$ is a square, then
		\begin{eqnarray*}
			S_{\mathrm{sq}} \leq \frac{c_2^2}{c_1}
			\sum_{s= 1}^{\lfloor n/2\rfloor} \sum_{r=0}^{\min\{s,k\}} s\,q^{G(r,s)}
			\leq \frac{2c_2^2}{c_1}
			\sum_{s= 1}^{\lfloor n/2\rfloor} \sum_{r=0}^{\min\{s,\lfloor k/2\rfloor\}} s\,q^{G(r,s)}.
		\end{eqnarray*} By the arguments similar to those used to derive (\ref{eq-A0}), (\ref{eq-Ar-le0}), and (\ref{eq-Ar-ge0}), we obtain that $S_{\mathrm{sq}}$ tends to $0$ as $n$ tends to infinity.
		
		\emph{Subcase 2.2: $\lambda$ is a nonsquare.}
		If $i$ is a fixed point of $\tP$, then $\alpha_i^2=\lambda$, contradicting the assumption that $\lambda$ is a nonsquare. Hence, the permutation $\tP$ must consist of $n/2$ transpositions. Without loss of generality, write
		\[
		\tP =(1\ 2)(3\ 4)\cdots (n-1\ n).
		\]
		For each transposition $(2j-1 \ 2j)$, let
		$W_j:=\langle \mathbf{f}_{j,1},\mathbf{f}_{j,2}\rangle_{\mathbb F_q}$ for $1\le j\le n/2.
		$
		Then $\tM$ acts on $W_j$ by
		\[
		\mathbf{f}_{j,1}\mapsto \mathbf{f}_{j,2},\
		\mathbf{f}_{j,2}\mapsto \lambda \mathbf{f}_{j,1}.
		\]
		Since $\lambda$ is not a square, the quadratic polynomial $X^2-\lambda$ is irreducible in $\mathbb F_q[X]$. Therefore, we have
		\[
		K=\mathbb F_q[X]/(X^2-\lambda) = \gf_q(\omega) = \langle 1, \omega \rangle_{\mathbb F_q}
		\]
		is a quadratic extension of $\gf_q$, where $\omega^2=\lambda$. Define $\psi_j: W_j \rightarrow K$ by $\psi_j(\mathbf{f}_{j,1})=1$ and $\psi_j(\mathbf{f}_{j,2})=\omega$. Since $\{\mathbf{f}_{j,1}, \mathbf{f}_{j,2}\}$ are $\gf_q$-bases of $W_j$ and $K$, respectively, $\psi_j$ is an $\gf_q$-isomorphism, and hence $W_j \cong K$. Under this isomorphism, the action of $\tM$ on $W_j$ is scalar multiplication in $K$, namely multiplication by $\omega$. Moreover, we have
		\[
		\gf_q^n=W_1\oplus\cdots\oplus W_{n/2} \cong K^{n/2}.
		\]
		Thus, $\gf_q^n$ may be regarded as an $n/2$-dimensional linear space over $K$. A $k$-dimensional $\gf_q$-linear subspace $\cC$ satisfying $\cC\tM=\cC$ is then a $k/2$-dimensional $K$-linear subspace. Hence, the number of such a linear code $\cC$ is at most
		\[
		\genfrac{[}{]}{0pt}{}{n/2}{k/2}_{q^2}.
		\]
		It follows that
		\begin{eqnarray*}
			\Pr\left[ \cC\tM=\cC\right]
			\le 
			\frac{
				\genfrac{[}{]}{0pt}{}{n/2}{k/2}_{q^2}
			}{
				\genfrac{[}{]}{0pt}{}{n}{k}_q
			}.
		\end{eqnarray*}
		By Lemma~\ref{lem-Gau}, there exist constants $c_1$, $c_2$ such that
		\begin{eqnarray*}
			\genfrac{[}{]}{0pt}{}{n/2}{k/2}_{q^2}
			\le
			c_2
			(q^2)^{\frac k2(\frac n2-\frac k2)}
			=
			c_2q^{\frac{k(n-k)}{2}} \mbox{ and }
			\genfrac{[}{]}{0pt}{}{n}{k}_q
			\ge
			c_1 q^{k(n-k)}.
		\end{eqnarray*}
		Therefore, we obtain
		\begin{eqnarray*}
			\Pr\left[ \cC\tM=\cC\right]
			\le 
			\frac{c_2}{c_1}
			q^{-\frac{k(n-k)}{2}}.
		\end{eqnarray*}
		
        We next estimate the total contribution of the nontrivial cosets $\overline{\tM} \in H$ of order $2$ with $\tP \neq \tI_n$ for which $\lambda_{\tM}$ is a nonsquare, choosing one representative $\tM=\tD\tP$ from each coset. Since there are at most $n^n$ choices for $\tP$ with $n/2$ transpositions, and there are at most $(q-1)^{n/2}$ choices for $\tD$, the number of such $\tM$ is at most $(q-1)^{n/2}n^n$. Let $\theta:=\log_q n$ and $\eta:=\log_q(q-1)$. If $S_{\mathrm{nsq}}$ denotes the contribution of the cosets for which $\lambda_{\tM}$ is a nonsquare, then
		\begin{eqnarray*}
			S_{\mathrm{nsq}}
			\leq \frac{c_2}{c_1}
			q^{\frac{n \eta +2n\theta-k(n-k)}{2}}.
		\end{eqnarray*}
		Since $k\ge (2+\varepsilon)\theta$, it follows that
		$
		k(n-k)
		\ge
		(2+\varepsilon)\theta(n-(2+\varepsilon)\theta)
		=
		(2+\varepsilon)n\theta-(2+\varepsilon)^2\theta^2.
		$
		For sufficiently large $n$, we have
		$
		(2+\varepsilon)^2\theta^2\le \varepsilon \theta n/4.
		$
		Thus
		$
		k(n-k)
		\ge
		\left(2+\frac{3\varepsilon}{4}\right)n\theta.
		$
		On the other hand, we have $\eta \leq \varepsilon \theta/2$ for sufficiently large $n$. It then follows that
		\[
		\frac{n \eta +2n\theta-k(n-k)}{2}
		\leq
		\frac{1}{2} \left(\frac{\varepsilon n \theta}{2} + 2n\theta -\left(2+\frac{3\varepsilon}{4}\right)n\theta \right)
		= -\frac{\varepsilon n\theta}{8}.
		\]
		Consequently, we obtain
		\begin{eqnarray*}
			S_{\mathrm{nsq}} \leq \frac{c_2}{c_1}
			q^{\frac{n \eta +2n\theta-k(n-k)}{2}} \le \frac{c_2}{c_1} \cdot
			q^{-\frac{\varepsilon n\theta}{8}},
		\end{eqnarray*} 
        and we thereby have $S_{\mathrm{nsq}}$ tends to 0 as $n$ goes to infinity. Therefore, we have in odd characteristic, 
		\[
		\sum_{\substack{
				1\neq \overline \tM\in H\\
				\operatorname{ord}(\overline \tM)=2\\
				\tP\neq \tI_n
		}}
		\Pr[\cC\tM=\cC]
		=S_{\mathrm{sq}}+S_{\mathrm{nsq}} \to 0,
		\]
        as $n$ goes to infinity.
	\end{IEEEproof}
	
	
	We now consider $k$-dimensional linear subspaces $\cC \subseteq \gf_q^n$ that have automorphism groups containing an element whose image in $H_{\cC}$ has prime order $\ell \geq 3$, and show that the proportion of such subspaces also goes to zero as $n$ goes to infinity, provided that $\min\{k, n-k\} \geq (2+\varepsilon)\log_q n$ for any fixed $\varepsilon>0$.

	\begin{theorem}\label{thm-primeorder}
	Fix a prime power $q$ and a real number $\varepsilon>0$. Let $\cC \subseteq \gf_q^n$ be a uniformly random $k$-dimensional linear subspace, where $k=k(n)$, let $m:=\min\{k, n-k\}$, and let $H:=M_{n,q}/\MAut_0$.  Define $\mathcal P:=\{\ell:\ell\ \text{prime},\ 3 \leq \ell \leq n\}.$  If $m\ge (2+\varepsilon)\log_q n$, 
		then
		\[
		\sum_{\ell \in \mathcal{P} }\sum_{\substack{
				1\neq \overline \tM\in H\\
				\operatorname{ord}(\overline \tM) = \ell\\
				\tP\neq \tI_n
		}}
		\Pr\left[ \cC\tM=\cC\right]
		\] tends to zero as $n$ tends to infinity.
	\end{theorem}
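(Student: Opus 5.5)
Following the overview, we work with $k\le n/2$ (so $m=k$) and handle the two subcases $\ell\ne\Char(\gf_q)$ and $\ell=p=\Char(\gf_q)$ separately. In each we bound $\Pr[\cC\tM=\cC]$ by a function of the number $t$ of $\ell$-cycles of $\tP$ and of $\ell$. We then multiply by the number of cosets $\overline\tM$ with given $(\ell,t)$, and sum over $t\ge1$ and over primes $3\le\ell\le n$. The number of permutations with $t$ disjoint $\ell$-cycles is at most $n^{\ell t}$. Up to scalars, the number of diagonal choices is at most $(q-1)^{\ell t}\cdot(q-1)^{n-\ell t}$. However, the weights on fixed points all satisfy $\alpha_i^\ell=\lambda$, so after normalizing there are at most $\gcd(\ell,q-1)$ values per fixed point. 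This factor must be absorbed, and it is the reason a crude count could fail when $\ell t$ is small. To avoid it, we parametrize instead by $\bar n$, the codimension of the largest eigenspace-type piece, exactly as in Theorem~\ref{thm-I}: the fixed-point coordinates with weight different from the majority weight are charged to $\bar n$ together with the $\ell t$ cycle coordinates.

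\emph{Subcase $\ell\ne p$.} Here $X^\ell-\lambda=f_1\cdots f_s$ is squarefree. By Lemma~\ref{lem-PD}, $\gf_q^n=V_1\oplus\cdots\oplus V_s$, and every invariant $\cC$ splits along this decomposition. Each $W_j$ is isomorphic to $\gf_q[X]/(X^\ell-\lambda)$, so it contributes $\deg f_i$ dimensions to $V_i$. Hence, outside the largest primary component, the dimension is at least $\bar n\ge t(\ell-\max\deg f_i)$ plus the minority fixed-point count. If some $f_i$ has degree $d>1$, then $V_i$ is a $\gf_{q^d}$-space. For the counting, however, the cruder bound $\prod_i\genfrac{[}{]}{0pt}{}{n_i}{r_i}_q$ suffices. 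We then reuse verbatim the two estimates of Theorem~\ref{thm-I}, namely \eqref{eq-1} and \eqref{eq-2}, giving $\Pr\le p_{\bar n}$ as in \eqref{eq-pbarn}. The number of factors $s$ could now be as large as $\ell$, and $(k+1)^{s}$ and $c_2^s$ must stay controlled. To handle this, we group all non-largest components into one space $W$, as in the second estimate, and bound $\genfrac{[}{]}{0pt}{}{W}{r}$ directly. The resulting bound is $C(k+1)q^{-\Delta}$ with constants independent of $\ell$. The first estimate should likewise be rewritten as $\Pr\le C\,k\,q^{-k(\bar n-k)}$ by using $\cC\subseteq(\cC\cap V_1)\oplus W$. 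The count of cosets with given $\bar n$ is at most $n^{\bar n}(q-1)^{\bar n}\cdot\ell^{\bar n}$ in the worst case, where the $\ell$-cycle structure is determined by choosing at most $\bar n\cdot\ell/(\ell-1)\le\frac32\bar n$ coordinates. This crude overcount by $n^{3\bar n/2}$ is the key danger.

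\emph{Main obstacle and the fix.} The exponent $\frac32\bar n\theta$ is too large against $k\bar n\ge(2+\varepsilon)\theta\bar n$ only if used naively. Since $\ell\ge3$, each cycle forces dimension at least $\ell-\lceil\ell/2\rceil$... in fact more: for $\ell\ne p$ every $f_i$ has degree $\le\ell-1$ and $\gf_q$-roots are at most one when $\ell\nmid q-1$. I would therefore refine to $\bar n\ge(\ell-1)t$ when $\ell\nmid q-1$, and $\bar n\ge(\ell-1)t$ also when $\ell\mid q-1$ (the largest eigenvalue class takes one vector per cycle). The cycle choice then costs $n^{\ell t}\le n^{\frac{\ell}{\ell-1}\bar n}\le n^{\frac32\bar n}$. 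I expect the honest comparison to be $\frac{\ell}{\ell-1}\theta\bar n$ against roughly $k\bar n\ge(2+\varepsilon)\theta\bar n$ in the regime $k+\bar n\le n/2$, which succeeds with room. In the other regimes we follow the case split of Theorem~\ref{thm-I}, where the $q^{-k\bar n/4}$ bound with $k>\theta^2$ dominates any polynomial factor, and where $k\le\theta^2$ forces $\bar n\ge n/3$. Summing over $\ell\le n$ adds only a factor $n$, which is absorbed by the $n^{-\Omega(1)}$ or better decay.

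\emph{Subcase $\ell=p$.} Choose $\mu$ with $\mu^p=\lambda^{-1}$, which is possible because Frobenius is bijective, so that $\tM^p=\tI_n$, and set $\tN=\tM-\tI_n$. On each $W_j$ the map $\tN$ is a single nilpotent Jordan block of size $p$. On fixed points the weights are $p$-th roots of $1$, hence equal to $1$, so $\tN=0$ there. Thus $\tN$ has Jordan type $(p^t,1^{n-pt})$, and Lemma~\ref{lem-Jor} bounds the number of invariant $\cC$. Using Lemma~\ref{lem-Gau} on each factor, the exponent relative to $k(n-k)$ is maximized, as in Theorem~\ref{thm-ordertwo}, by concentrating the chain $g_1\ge\cdots\ge g_p$. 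I would bound it by $-t(k-2\sum_{j\ge2}g_j)-(\ldots)$, mimicking $F(r,t)$, and show that the total exponent, including the count $n^{pt}(q-1)^{pt}$, is at most $-\Omega(\varepsilon\theta t)$ for small $t$ and $-\Omega(n)$ for large $t$, exactly paralleling \eqref{eq-A0}--\eqref{eq-Ar-ge0}. The step I expect to be hardest is this multi-index optimization over $(g_j)$ from Lemma~\ref{lem-Jor}. I would first try reducing to the $p=2$ computation by bounding $\sum_{j\ge2}g_j$ by a single parameter $r\le k/2$.
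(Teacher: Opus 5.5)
Your treatment of $\ell\neq\Char(\gf_q)$ has a genuine gap. You set aside the extension-field structure of the primary components and say the ``cruder bound'' $\prod_i\genfrac{[}{]}{0pt}{}{n_i}{r_i}_q$ suffices. It does not.

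First, your claim that every irreducible factor of $X^\ell-\lambda$ has degree at most $\ell-1$ is false. If $\ell\mid q-1$ and $\lambda$ is not an $\ell$-th power (e.g.\ $q=7$, $\ell=3$, $\lambda=3$), then $X^\ell-\lambda$ has no root in $\gf_q$ and is irreducible. This cannot be normalized away, because rescaling the representative changes $\lambda$ only by $\ell$-th powers. In this case $\tP$ has no fixed points and $\gf_q^n$ is a single primary component. Your $\bar n$ is then $0$ (there is no eigenspace at all), so your bound reads $\Pr[\cC\tM=\cC]\le 1$. Meanwhile there are roughly $n^{(1-1/\ell)n}$ such cosets. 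The paper isolates this as Subcase~1.2. There, an invariant $\cC$ is a subspace over $\gf_q[X]/(f_i)\cong\gf_{q^{d_i}}$ in each component, with $d_i\ge2$. This gives $\Pr[\cC\tM=\cC]\lesssim q^{-k(n-k)/2}$, which beats the trivial count $n^n(q-1)^n$ since $\tfrac12k(n-k)\ge(1+\varepsilon/3)n\log_q n$.

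Second, even when $X^\ell-\lambda$ has a root, the crude bound is incompatible with your claim $\bar n\ge(\ell-1)t$. Take $q=3$ and $\ell=5$, so $X^5-1=(X-1)\Phi_5(X)$ with $\Phi_5$ irreducible of degree $4$ over $\gf_3$. Let $\tP$ consist of $t=n/5$ five-cycles. Then $\Ker(\tM-\tI_n)$ has dimension $t$, while $\Ker\Phi_5(\tM)$ has $\gf_3$-dimension $4t$.

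- If $\bar n$ is the codimension of the largest component, then $\bar n=t$ and the crude bound is at least about $3^{-kt}$. There are about $3^{\frac45 n\log_3 n+O(n)}$ such cosets, so the sum diverges for, e.g., $k=3\log_3 n$.
- If instead $\bar n=(\ell-1)t+\bar f$ is the codimension of the majority eigenspace, as in the paper, then reusing \eqref{eq-1} requires every component to have dimension at most $n-\bar n=t$. That fails over $\gf_3$ but holds over $\gf_{81}$.

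The second point is exactly the paper's second estimate in Subcase~1.1. It uses $a_i=\dim_{\gf_{q^{d_i}}}V_i=t\le n-\bar n$, so $\sum_i d_ir_i(a_i-r_i)\le k(n-\bar n)$, which yields the factor $q^{-k(\bar n-k)}$. So the missing idea is to count $\tM$-invariant codes as tuples of $\gf_{q^{d_i}}$-subspaces.

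Separately, your case $\ell=p$ stops at a plan. The paper completes it as follows:
- It writes the exponent coming from Lemma~\ref{lem-Jor} as $tk+fg_1-\sum_j g_j^2$.
- It applies Cauchy--Schwarz to $g_2,\dots,g_p$.
- It minimizes the resulting convex quadratic in $g_1$, treating $f\ge 2k$ and $f<2k$ separately.

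This gives a per-cycle saving of $(p-1)k$ or $(p-1)k(n-k)/n$. Either beats the per-cycle cost $p\log_q n+(p-1)\log_q(q-1)$ because $p\ge3$.
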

	
	\begin{IEEEproof}
        It suffices to consider $k \leq n/2$, in which case $m=k$. 
		Let $1\neq \overline \tM\in H$ with $\operatorname{ord}(\overline \tM)=\ell \geq 3$, where $\ell$ is prime. Choose a representative $\tM=\tD\tP\in M_{n,q}$, where $\tD=\operatorname{diag}(\alpha_1,\dots,\alpha_n)$ with $\alpha_i\in \mathbb F_q^*$ and $\tP \in S_n$. Since $\operatorname{ord}(\overline \tM)=\ell$, there exists $\lambda \in \gf_q^*$ such that
		$\tM^{\ell}=(\tD\tP)^{\ell}=\lambda\tI_n$.
		Thus, the permutation $\tP$ can be written as $\tP=\gamma_1\gamma_2\cdots\gamma_t$, where $t\geq 1$, the $\gamma_j$'s are pairwise disjoint cycles of length $\ell$, and the remaining coordinates are fixed by $\tP$. Let
		$\gamma_j=(i_{j,1}, i_{j,2}, \cdots, i_{j, \ell})$ and
		$W_j=\langle \mathbf{f}_{j,1}, \mathbf{f}_{j,2}, \cdots, \mathbf{f}_{j,\ell}\rangle_{\gf_q}$.
		The minimal polynomial of $\tM |_{W_j}$ is $X^{\ell}-\lambda$. Moreover, the minimal polynomial of $\tM$ is also $X^{\ell}-\lambda$. 
		
		\medskip
		\noindent\textbf{Case 1: $\ell \neq \operatorname{char}(\mathbb F_q)$.}
		
		\emph{Subcase 1.1: $X^\ell-\lambda$ has a root in $\mathbb F_q$.}
		
		Let 
		$
		F:=\{i\in[n]: i\tP=i\}$ be the set of fixed points. Then we have $|F|=n-\ell t.
		$
		For every $i \in F$, there exists $\alpha_i \in \gf_q^*$ such that $\be_i\tM=\alpha_i \be_i$, and hence $\be_i\tM^\ell=\lambda \be_i$. On the other hand, $\tM^\ell=\lambda \tI_n$ gives $\be_i\tM^\ell=\alpha_i^\ell \be_i$. Therefore, we have $\alpha_i^\ell=\lambda$, so the scalar on each fixed point is a root of $X^\ell-\lambda$ in $\mathbb F_q$. 
		We partition $F$ into subsets according to the value of $\alpha_i$:
		\[
		F = \bigcup\limits_{\alpha \in \gf_q} F_{\alpha}, \
		F_\alpha=\{i\in F:\alpha_i=\alpha\}.
		\]
		Choose a root $\alpha_0$ such that
		$
		b_0=|F_{\alpha_0}|=\max |F_\alpha|.
		$
		Define
		$
		\bar{f}:=|F|-b_0=n-\ell t-b_0.
		$
		Let $U:=\Ker(\tM-\alpha_0 \tI_n)$. Since $\alpha_0$ is an eigenvalue of $\tM |_{W_j}$ and $X^{\ell}-\lambda$ has no repeated roots, we have $\dim(W_j\cap U)=1$. For a fixed point $i\in F$, $\be_i\in U$ if and only if $\alpha_i=\alpha_0$. It follows that $\dim U=t+b_0$. Define $\bar{n}:=n-\dim U$. Then we have
		$\bar{n} = n-(t+b_0)=\ell t+|F|-t-b_0=(\ell-1)t+\bar{f}$.
		
		Let
		$
		h(X)=\frac{X^\ell-\lambda}{X-\alpha_0}.
		$
		Since $X^\ell-\lambda$ has no repeated roots, we obtain $\gcd(X-\alpha_0,h(X))=1$. 
		Then we have the direct sum decomposition
		\[
		\mathbb F_q^n
		=
		\Ker(\tM-\alpha_0 \tI_n)
		\oplus
		\Ker (h(\tM))
		\] induced by the Bezout's identity.
		Let $W:=\Ker (h(\tM))$. Then $\mathbb F_q^n=U\oplus W$ with $\dim U=n-\bar{n}$ and $\dim W=\bar{n}$. Clearly, both $U$ and $W$ are $\tM$-invariant subspaces. If $\mathcal C \tM=\mathcal C$, then we have
		\[
		\mathcal C=(\mathcal C\cap U)\oplus(\mathcal C\cap W).
		\]
		
		We now count the $k$-dimensional subspaces $\cC$ satisfying $\mathcal C \tM=\mathcal C$. Let $r:=\dim(\mathcal C\cap W)$. Then $\dim(\mathcal C\cap U)=k-r$. Since $\dim U=n-\bar{n}$, we have
		$\max\{0,k-(n-\bar{n})\}\le r\le \min\{k,\bar{n}\}$.
		For fixed $r$, there are $\genfrac{[}{]}{0pt}{}{\bar{n}}{r}_q$ ways to choose $\mathcal C\cap W$ and $\genfrac{[}{]}{0pt}{}{n-\bar{n}}{k-r}_q$ ways to choose $\mathcal C\cap U$, respectively. Hence, the number of $k$-dimensional subspaces satisfying $\mathcal C \tM=\mathcal C$ is at most
		\[
		\sum_{r}\genfrac{[}{]}{0pt}{}{\bar{n}}{r}_q\genfrac{[}{]}{0pt}{}{n-\bar{n}}{k-r}_q,
		\]
		where $r$ ranges over the legal interval above. For convenience, write this as
		\[
		\sum_{r=0}^{\min\{k,\bar{n}\}}
		\genfrac{[}{]}{0pt}{}{\bar{n}}{r}_q
		\genfrac{[}{]}{0pt}{}{n-\bar{n}}{k-r}_q,
		\]
		with the convention that the Gaussian coefficient is $0$ when $k-r\notin[0,n-\bar{n}]$. It follows that
		\[
		\Pr\left[ \cC\tM=\cC\right]
		\le
		\frac{
			\displaystyle
			\sum_{r=0}^{\min\{k,\bar{n}\}}
			\genfrac{[}{]}{0pt}{}{\bar{n}}{r}_q
			\genfrac{[}{]}{0pt}{}{n-\bar{n}}{k-r}_q
		}{
			\genfrac{[}{]}{0pt}{}{n}{k}_q
		}.
		\]
		By Lemma~\ref{lem-Gau}, there exist constants $c_1$, $c_2$ such that
		\[
		\genfrac{[}{]}{0pt}{}{n}{k}_q
		\ge
		c_1 q^{k(n-k)}, \
		\genfrac{[}{]}{0pt}{}{\bar{n}}{r}_q
		\le
		c_2 q^{r(\bar{n}-r)}, \mbox{ and }
		\genfrac{[}{]}{0pt}{}{n-\bar{n}}{k-r}_q
		\le
		c_2 q^{(k-r)(n-\bar{n}-k+r)}.
		\]
		Thus we obtain
		\begin{eqnarray}\label{eq-twopd}
		    \Pr\left[ \cC\tM=\cC\right]
		\le
		\frac{c_2^2}{c_1}
		\sum_{r=0}^{\min\{k,\bar{n}\}}
		q^{-\Delta(r,\bar{n})},
		\end{eqnarray}
		where
		\begin{eqnarray*}
			\Delta(r,\bar{n}) &:=& k(n-k)
			-
			r(\bar{n}-r)
			-
			(k-r)(n-\bar{n}-k+r) \\
			& = & k\bar{n}+r(n-2k-2\bar{n})+2r^2 \\
            & = & (k-r)(\bar{n}-r)+r(n-\bar{n}-k+r).
		\end{eqnarray*}
		It is clear that $\Delta(r,\bar{n}) \geq 0$. 
		
		If $k+\bar{n}\le n/2$, then for every $r$,
		$
		\Delta(r,\bar{n})=k\bar{n}+r(n-2k-2\bar{n})+2r^2 \geq k\bar{n}.
		$
		It follows that
		\[
		\Pr\left[ \cC\tM=\cC\right]
		\le
		\frac{c_2^2}{c_1} \sum_{r=0}^{\min\{k,\bar{n}\}}
		q^{-\Delta(r,\bar{n})} \leq \frac{c_2^2}{c_1}
		(k+1)q^{-k\bar{n}}.
		\]
		
		If $k+\bar{n}> n/2$, the preceding estimate (in the inequality (\ref{eq-twopd})) does not yield a sufficiently strong bound in certain subcases. A different argument is therefore needed for these cases. Let $X^\ell-\lambda=f_0(X)f_1(X)\cdots f_s(X)$ be the irreducible factorization of $X^{\ell}-\lambda$ in $\gf_q[X]$, where $f_0(X)=X-\alpha_0$. 
		By Lemma~\ref{lem-PD}, we have $\mathbb F_q^n=V_0\oplus V_1\oplus\cdots\oplus V_s,$ where $V_0=U=\Ker(\tM-\alpha_0 \tI_n),$ and $V_i=\Ker(f_i(\tM))$ for $1 \le i \le s$. Note that $\dim V_0 =n-\bar{n}$.
		Let $d_i:=\deg f_i$. Since $V_i=\Ker(f_i(\tM))$, the action of $\tM$ endows $V_i$ with an $\mathbb F_q[X]/(f_i)$-module structure. Indeed, for $\bv \in V_i$ and $g(X) \in \mathbb F_q[X]$, the action is given by $\bv \cdot \overline{g(X)}:= \bv g(\tM)$, where $\overline{g(X)} \in \mathbb F_q[X]/(f_i)$. Since $f_i(X)$ is irreducible of degree $d_i$, we have $\mathbb F_q[X]/(f_i) \cong \mathbb F_{q^{d_i}}$. Then $V_i$ may be regarded as an $\mathbb F_{q^{d_i}}$-vector space. Denote by $a_i:=\dim_{\mathbb F_{q^{d_i}}}V_i.$ 
		If $d_i = 1$, then there exists $\alpha_i \in \gf_q^*$ such that $f_i(X)=X-\alpha_i$. Then we have $\dim V_i = t+ |F_{\alpha_i}| \leq t+|F_{\alpha_0}| = n-\bar{n}$. 
		If $d_i >1$, the polynomial $f_i$ has no roots in $\gf_q$. It then follows that $V_i \cap \langle\be_i : i \in F \rangle_{\gf_q} = \{\bzero\}$. 
		Since $\gf_q^n = \left(\bigoplus_{j=1}^t W_j\right)\oplus \langle\be_i : i \in F \rangle_{\gf_q}$ and all summands are $\tM$-invariant, they are also $f_i(\tM)$-invariant. Hence, we have $V_i=\bigoplus_{j=1}^t (W_j \cap V_i)$.
		Besides, for each $W_j$, we have
		$$W_j\cong \mathbb F_q[X]/(X^\ell-\lambda) 
		\cong
		\bigoplus_{i=0}^s \mathbb F_q[X]/(f_i)$$ as an $\mathbb F_q[X]$-module. Therefore, we obtain $W_j \cap V_i \cong \mathbb F_q[X]/(f_i)\cong \mathbb F_{q^{d_i}}$. 
		It follows that $\dim_{\mathbb F_{q^{d_i}}}(W_j \cap V_i)=1$ and $$a_i=\dim_{\mathbb F_{q^{d_i}}}V_i=\sum_{j=1}^{t}\dim_{\mathbb F_{q^{d_i}}}(W_j \cap V_i)=t \le t+|F_{\alpha_0}| = n-\bar{n}.$$ Hence, $a_i \leq n-\bar{n}$ for all $0 \leq i \leq s$.
		Now we count the $k$-dimensional subspaces $\cC$ satisfying $\cC\tM=\cC$. If $\cC\tM=\cC$, then $\mathcal C
		= 
		\bigoplus_{i=0}^{s}(\mathcal C\cap V_i).$ Moreover, each $\cC \cap V_i$ is an $\gf_{q^{d_i}}$-subspace of $V_i$.
		Let $r_i:=\dim_{\mathbb F_{q^{d_i}}}(\mathcal C\cap V_i)$. Then $\sum_{i=0}^{s} d_i r_i=k.$ For fixed $(r_0,\dots,r_s)$, the number of choices for $\mathcal C\cap V_i$ is $\left[\begin{matrix} a_i \\ r_i \end{matrix}\right]_{q^{d_i}}$, and then the number of $k$-dimensional subspaces satisfying $\mathcal C \tM=\mathcal C$ is at most
		$$\sum_{\sum d_i r_i=k}
		\prod_{i=0}^{s}
		\left[\begin{matrix} a_i \\ r_i \end{matrix}\right]_{q^{d_i}}.$$
		By Lemma \ref{lem-Gau}, there exists a constant $c_2$ such that
		\begin{eqnarray*}
			\left[\begin{matrix} a_i \\ r_i \end{matrix}\right]_{q^{d_i}}
			\le
			c_2 (q^{d_i})^{r_i(a_i-r_i)}
			=
			c_2 q^{d_i r_i(a_i-r_i)}.
		\end{eqnarray*}
		Then $\prod\limits_{i=0}^{s}
		\left[\begin{matrix} a_i \\ r_i \end{matrix}\right]_{q^{d_i}}
		\le
		c_2^{s+1}
		q^{\sum_i d_i r_i(a_i-r_i)}.$ Since $a_i-r_i\le a_i\le n-\bar{n}$, we obtain $d_i r_i(a_i-r_i)
		\le
		d_i r_i(n-\bar{n})$. It follows that $$\sum_i d_i r_i(a_i-r_i)
		\le
		(n-\bar{n})\sum_i d_i r_i
		=
		k(n-\bar{n}).$$ Moreover, the number of $(r_0,\dots,r_s)$ such that $\sum d_i r_i=k$ is at most $(k+1)^{s+1}\le (k+1)^{\bar{n}+1}$ as $s+1\le \ell\le \bar{n}+1$.
		Hence, $|\{\mathcal C\in \operatorname{Gr}_q(k,n):\mathcal C \tM=\mathcal C\}|
		\le
		c_2^{\bar{n}+1}(k+1)^{\bar{n}+1}
		q^{k(n-\bar{n})}$, and then we have 
		\begin{eqnarray*}
			\Pr\left[ \cC\tM=\cC\right]
			\le
			\frac{c_2^{\bar{n}+1}}{c_1}(k+1)^{\bar{n}+1}
			q^{-k(\bar{n}-k)}.
		\end{eqnarray*} 
		Note that this estimate is strong when $\bar{n} > k$. Similar to the analysis of $\Delta(r,\bar{n})$ and $k(\bar{n}-k)$ in the proof of Theorem \ref{thm-I}, we obtain
		$
		\Pr\left[ \cC\tM=\cC\right] \leq p_{\bar{n}},
		$ where
		\begin{eqnarray}\label{eq-pbarn2}
			p_{\bar{n}}=
			\begin{cases}
				\frac{c_2^2}{c_1}(k+1)q^{-k\bar{n}}, 
				& k+\bar{n}\le n/2,\\[4pt]
				\frac{c_2^{\bar{n}+1}}{c_1}(k+1)^{\bar{n}+1}q^{-k(\bar{n}-k)}, 
				& k+\bar{n}>n/2,\ k\le \theta^2,\\[4pt]
				\frac{c_2^{\bar{n}+1}}{c_1}(k+1)^{\bar{n}+1}q^{-k\bar{n}/4}, 
				& k+\bar{n}>n/2,\ k>\theta^2.
			\end{cases}
		\end{eqnarray}
		
		We now sum the above probability bound over all nontrivial cosets $\overline{\tM} \in H$ of prime order $\ell \geq 3$ for which $X^{\ell} - \lambda$ has a root in $\gf_q$, using one chosen representative $\tM=\tD\tP$ for each coset. For fixed $\bar{n}$, we have $\ell \leq \bar{n}+1$, $t \leq \bar{n}/(\ell-1)$ for each $\ell$, and $\bar{f}=\bar{n}-(\ell-1)t$ is determined by $\bar{n}, \ell, t$, then the number of $(\ell,t,\bar{f})$ is at most $(\bar{n}+1)^2$. For fixed $(\ell,t,\bar{f})$, the number of permutations $\tP$ having $t$ $\ell$-cycles and $\bar{f}$ fixed points outside $F_{\alpha_0}$ is at most $n^{\ell t+\bar{f}}$. Since $\alpha_{i_{j,1}}\alpha_{i_{j,2}}\cdots \alpha_{i_{j,\ell}}=1$ for every $\ell$-cycle $\gamma_j=(i_{j,1}, i_{j,2}, \cdots, i_{j, \ell})$ and the choices of scalar for each fixed points not in $F_{\alpha_0}$ is at most $q-1$, there are at most $(q-1)^{(\ell-1)t+\bar{f}}=(q-1)^{\bar{n}}$ choices for $\tD$.
		Since $\ell t+\bar{f}=(\ell-1)t+\bar{f}+t=\bar{n}+t$ and $t\le  \bar{n}/(\ell-1)\le  \bar{n}/2$, it follows that $\ell t+\bar{f}\le 3\bar{n}/2$. Therefore, for fixed $\bar{n}$, there are at most $(\bar{n}+1)^2 n^{3\bar{n}/2}(q-1)^{\bar{n}}$ choices for $\tM$. Define
		\begin{eqnarray*}
			\mathcal{P}_1:=\{ \ell: \ell \ \text{is prime}, \ 3 \le \ell \le n,\ \ell \neq \Char(\gf_q),\ X^{\ell}-\lambda\ \text{has a root in } \gf_q \}.
		\end{eqnarray*}
		Then we have
		\begin{eqnarray*}
			S_{\mathrm{root}}:=\sum_{\ell \in \mathcal{P}_1} 
			\sum_{\substack{1\ne \tM\in H \\ \operatorname{ord}(\overline{\tM})=\ell\\ \tP\ne \tI_n}}
			\Pr\left[ \cC\tM=\cC\right]
			\le
			\sum_{\bar{n}\ge 2}
			(\bar{n}+1)^2q^{\bar{n}(\frac {3\theta}{2}+\eta)}p_{\bar{n}},
		\end{eqnarray*} where $\theta:=\log_q n$ and $\eta:=\log_q(q-1).$
		To estimate $S_{\mathrm{root}}$, we need separately consider the following two parts:
		\begin{eqnarray*}
			S_1=
			\sum_{\substack{ \bar{n} \ge 2 \\ k+\bar{n}\le n/2}}
			(\bar{n}+1)^2q^{\bar{n}(\frac {3\theta}{2}+\eta)}p_{\bar{n}}, \mbox{ } S_2=
			\sum_{\substack{ \bar{n} \ge 2 \\ k+\bar{n}> n/2}}
			(\bar{n}+1)^2q^{\bar{n}(\frac {3\theta}{2}+\eta)}p_{\bar{n}}.
		\end{eqnarray*}
		
		If $k+\bar{n}\le n/2$, then it follows from Eq. (\ref{eq-pbarn2}) that
		\begin{eqnarray*}
			S_1 \le \frac{c_2^2}{c_1}(k+1)\sum_{\substack{ \bar{n} \ge 2 \\ k+\bar{n}\le n/2}}
			(\bar{n}+1)^2q^{\bar{n}(\frac {3\theta}{2}+\eta-k)}.
		\end{eqnarray*} 
		Since $k\ge (2+\varepsilon)\theta$, and $\eta\le \varepsilon \theta/2$ for sufficiently large $n$, we have
		$$\frac32\theta+\eta-k
		\le
		\frac32\theta+\frac{\varepsilon}{2}\theta-(2+\varepsilon)\theta
		=
		-\left(\frac12+\frac{\varepsilon}{2}\right)\theta.$$
		Hence, we obtain
		\begin{eqnarray*}
			S_1
			\le
			\frac{c_2^2}{c_1}(k+1)
			\sum_{\bar{n}\ge 2}
			(\bar{n}+1)^2q^{-\frac{(1+\varepsilon)\bar{n}\theta}{2}}=\frac{c_2^2}{c_1}(k+1)\sum_{\bar{n}\ge 2}
			(\bar{n}+1)^2 n^{-\frac{(1+\varepsilon)\bar{n}}{2}}.
		\end{eqnarray*}  
		Let $c=(1+\varepsilon)/2$. There exists a constant  $c'$ such that
		\begin{eqnarray*}
			\sum_{\bar{n}\ge 2}(\bar{n}+1)^2n^{-c\bar{n}} = n^{-2c}\sum_{\bar{n}\ge 2}(\bar{n}+1)^2n^{-c(\bar{n}-2)}
			\le
			c' n^{-2c}
			=
			c' n^{-1-\varepsilon}
		\end{eqnarray*}  for sufficiently large $n$. This gives                 
		$$S_1 
		\le
		\frac{c_2^2}{c_1} c' (k+1)n^{-1-\varepsilon} \le \frac{c_2^2}{c_1} c' n^{-\varepsilon}, $$ which tends to zero as $n$ tends to infinity.
		
		If $k+\bar{n}>n/2$ and $k\le \theta^2$, then we have $\bar{n}>\frac n2-k \geq \frac{n}{3}$ for sufficiently large $n$. It then follows from Eq. (\ref{eq-pbarn2}) that
		\begin{eqnarray*}
			S_2
			\le
			\sum_{\bar{n}\ge n/3} \frac{c_2^{\bar{n}+1}}{c_1}(\bar{n}+1)^2(k+1)^{\bar{n}+1}
			q^{\bar{n}(\frac{3\theta}{2}+\eta)-k(\bar{n}-k)}.
		\end{eqnarray*} 
		Since $k\le \theta^2$ and $\bar{n}\ge n/3$, we have $k^2 \leq \varepsilon \bar{n}\theta/4$, $\eta\le \varepsilon \theta/8$, and $\frac{c_2^{\bar{n}+1}}{c_1}(\bar{n}+1)^2(k+1)^{\bar{n}+1}
		\le
		q^{\varepsilon \bar{n}\theta/8}$ for sufficiently large $n$. Besides, $k\ge (2+\varepsilon)\theta$. Then we have $k(\bar{n}-k)=k\bar{n}-k^2 
		\ge
		\left(2+\frac{3\varepsilon}{4}\right)\bar{n}\theta.$  It follows that
		\begin{eqnarray*}
			S_2
			\le
			\sum_{\bar{n}\ge n/3}
			q^{-(4+5\varepsilon)\bar{n}\theta/8}
			q^{\varepsilon \bar{n}\theta/8}=
			\sum_{\bar{n}\ge n/3}
			q^{-(1+\varepsilon)\bar{n}\theta/2} \le 
			nq^{-(1+\varepsilon)n\theta/6} =n^{-\frac{(1+\varepsilon)n}{6} +1},
		\end{eqnarray*} which tends to zero as $n$ tends to infinity.
		
		If $k+\bar{n}>n/2$ and $k > \theta^2$, it follows from Eq. (\ref{eq-pbarn2}) that
		\begin{eqnarray*}
			S_2
			\le
			\sum_{\bar{n}\ge 2}
			\frac{c_2^{\bar{n}+1}}{c_1}(\bar{n}+1)^2
			q^{\bar{n}(\frac{3\theta}{2}+\eta)}
			(k+1)^{\bar{n}+1}
			q^{-k\bar{n}/4}.
		\end{eqnarray*}
		Since $ \theta^2 < k\le n/2$, we have $\log_q(k+1)\le \theta$ and
		\begin{eqnarray*}
			\frac{c_2^{\bar{n}+1}}{c_1}
			(k+1)^{\bar{n}+1}
			(\bar{n}+1)^2 q^{\bar{n}(\frac{3\theta}{2}+\eta)}
			\le
			q^{4\bar{n}\theta} \leq q^{k\bar{n}/8}
		\end{eqnarray*} for sufficiently large $n$. It follows that
		\begin{eqnarray*}
			S_2 \le 
			\sum_{\bar{n}\ge 2}q^{-k\bar{n}/8}
			\le
			\frac{q^{-k/4}}{1-q^{-k/8}},
		\end{eqnarray*} which tends to zero as $n$ tends to infinity.
		
		By the above discussions, we conclude that 
		\begin{eqnarray*}
			S_{\mathrm{root}}:=\sum_{\ell \in \mathcal{P}_1} 
			\sum_{\substack{1\ne \tM\in H \\ \operatorname{ord}(\overline{\tM})=\ell\\ \tP\ne \tI_n}}
			\Pr\left[ \cC\tM=\cC\right] \le S_1+S_2,
		\end{eqnarray*} which also tends to zero as $n$ tends to infinity.
		
		\medskip
		\emph{Subcase 1.2: $X^\ell-\lambda$ has no roots in $\mathbb F_q$.}
		
		Since $X^\ell-\lambda$ has no roots in $\mathbb F_q$, it is easy to deduce that $\tP$ has no fixed points. Then $n=\ell t$ with $t \geq 1$ in this case. Since $\ell\neq \operatorname{char}(\mathbb F_q)$ and $X^\ell-\lambda$ has no roots in $\gf_q$, write $X^\ell-\lambda=f_1(X)\cdots f_s(X)$ for its factorization into irreducible polynomials over $\gf_q$, where $f_1, f_2, \cdots, f_s$ are irreducible and pairwise distinct, and $d_i:=\deg f_i\ge 2$. In particular, $\sum_{i=1}^s d_i=\ell$ and $s\le \ell/2$. Let $V_i:=\Ker f_i(\tM)$ for all $1 \leq i \leq s$. Then we have $\mathbb F_q^n
		=
		V_1\oplus V_2\oplus\cdots\oplus V_s$ by Lemma~\ref{lem-PD}. By an argument similar to that used for the second estimate in Subcase 1.1, the number of $k$-dimensional subspaces satisfying $\mathcal C \tM=\mathcal C$ is at most
		\begin{eqnarray*}
			\sum_{\sum d_ir_i=k}
			\prod_{i=1}^s
			\left[\begin{matrix}
				t\\ r_i
			\end{matrix}\right]_{q^{d_i}},
		\end{eqnarray*} where $r_i:=\dim_{\mathbb F_{q^{d_i}}}(\mathcal C\cap V_i)$.
		By Lemma \ref{lem-Gau}, there exists a constant $c_2$ such that
		\begin{eqnarray*}
			\prod_{i=1}^s
			\left[\begin{matrix}
				t\\ r_i
			\end{matrix}\right]_{q^{d_i}}
			\le
			c_2^s
			q^{\sum d_ir_i(t-r_i)}.
		\end{eqnarray*}
		Let $k_i:=d_ir_i$ and $n_i:=d_it$. Then we have $\sum_{i=1}^s k_i=k,
		\sum_{i=1}^s n_i=n$, and $d_ir_i(t-r_i)=\frac{k_i(n_i-k_i)}{d_i}.$ Since $d_i \geq 2$, we have $d_ir_i(t-r_i)
		\le
		\frac12 k_i(n_i-k_i)$, and then $\sum_{i=1}^s d_ir_i(t-r_i)
		\le
		\frac12
		\sum_{i=1}^s k_i(n_i-k_i).$ Moreover, it is clear that $$\sum_{i=1}^s k_i(n_i-k_i)
		\le
		\left(\sum_{i=1}^s k_i\right)
		\left(\sum_{i=1}^s(n_i-k_i)\right)= k(n-k).$$
		Thus, we obtain $\sum_{i=1}^s d_ir_i(t-r_i)
		\le
		\frac12 k(n-k)$ and 
		\begin{eqnarray*}
			\prod_{i=1}^s
			\left[\begin{matrix}
				t\\ r_i
			\end{matrix}\right]_{q^{d_i}}
			\le
			c_2^s q^{\frac12 k(n-k)}.
		\end{eqnarray*}
		Since $\sum_{i=1}^s d_ir_i=k$ and $d_i\ge 2$, we have $\sum_{i=1}^s r_i\le  k/2.$ Then the number of $(r_1,\dots,r_s)$ such that $\sum d_i r_i=k$ is at most $(\frac k2+1)^{s}\le (\frac k2+1)^{\ell/2}$. It then follows that
		\begin{eqnarray*}
			\Pr\left[ \cC\tM=\cC\right]
			\le
			\frac{1}{c_1}
			\left(c_2\left(\frac k2+1\right)\right)^{\ell/2}
			q^{-\frac12 k(n-k)}.
		\end{eqnarray*}
		
		We proceed by summing the above probability bound over all nontrivial cosets $\overline{\tM} \in H$ belonging to this subcase, with one chosen representative $\tM=\tD\tP$ for each coset. Roughly, the number of such $\tM$ is at most $n^n(q-1)^n$.
		Define
		\begin{eqnarray*}
			\mathcal{P}_2:=\{ \ell: \ell \ \text{is prime}, \ 3 \le \ell \le n,\ \ell \neq \Char(\gf_q),\ X^{\ell}-\lambda\ \text{has no roots in } \gf_q \}.
		\end{eqnarray*} Then we have
		\begin{eqnarray*}
			S_{\mathrm{noroot}}&:=&\sum_{\ell \in \mathcal{P}_2} 
			\sum_{\substack{1\ne \tM\in H \\ \operatorname{ord}(\tM)=\ell\\ \tP\ne \tI_n}}
			\Pr\left[\cC\tM=\cC\right]\\
			&\le& \sum_{\ell \in \mathcal{P}_2} 
			c_1^{-1} n^n(q-1)^n
			\left(c_2\left(\frac k2+1\right)\right)^{\ell/2}
			q^{-\frac12 k(n-k)} \\
			&\le&
			c_1^{-1}n\cdot n^n(q-1)^n
			\left(c_2\left(\frac k2+1\right)\right)^{n/2}
			q^{-\frac12 k(n-k)}.
		\end{eqnarray*}
		Let $\theta:=\log_q n$, $\eta:=\log_q(q-1)$, and $\rho_n:=\log_q\left(c_2\left(\frac k2+1\right)\right)$. We obtain
		\begin{eqnarray*}
			S_{\mathrm{noroot}}
			\le c_1^{-1} q^{\theta+n\theta+n\eta + \frac{n}{2}\rho_n -\frac{1}{2}k(n-k)}.
		\end{eqnarray*} 
		If $k\le \theta^2$, then we have
		$$\rho_n
		=
		\log_q\left(c_2\left(\frac k2+1\right)\right)
		\le
		\log_q\left(c_2\left(\frac{\theta^2}{2}+1\right)\right)
		= O(\log_q\theta)=
		o(\theta),$$
        and $\theta+n\theta+n\eta+\frac n2\rho_n
		=
		n\theta+o(n\theta)$. Since $k\le n/2$ and $k\ge (2+\varepsilon)\theta$, we have
		$k(n-k)
		\ge
		(2+\varepsilon)\theta\bigl(n-(2+\varepsilon)\theta\bigr)$, and then $\frac12k(n-k)
		\ge
		\left(1+\frac{\varepsilon}{3}\right)n\theta$ for sufficiently large $n$. It follows that
		$$\theta+n\theta+n\eta + \frac{n}{2}\rho_n -\frac{1}{2}k(n-k)
		\le
		n\theta+o(n\theta)
		-
		\left(1+\frac{\varepsilon}{3}\right)n\theta
		\le
		-\frac{\varepsilon}{4}n\theta$$
		for sufficiently large $n$. Therefore, we obtain
		$S_{\mathrm{noroot}}
		\le
		c_1^{-1} q^{-\varepsilon n\theta/4}$, which tends to zero as $n$ tends to infinity.
		If $k > \theta^2$, then we have $k(n-k)/2
		\ge
		n\theta^2/4.$ Since $k\le n/2$, we obtain
		$$\rho_n
		=
		\log_q\left(c_2\left(\frac k2+1\right)\right)
		\le
		\log_q(c_2(n+1))
		=
		\theta+O(1),$$ and hence, $$\theta+n\theta+n\eta+\frac n2\rho_n
		\le
		\theta+n\theta+n\eta+\frac n2(\theta+O(1)) \leq \frac{n\theta^2}{8}$$ for sufficiently large $n$. Therefore, we have 
		$\theta+n\theta+n\eta + \frac{n}{2}\rho_n -\frac{1}{2}k(n-k)
		\le
		-\frac{n\theta^2}{8}$
		for sufficiently large $n$. It follows that
		$S_{\mathrm{noroot}}
		\le
		c_1^{-1} q^{-n\theta^2/8}$, which tends to zero as $n$ tends to infinity.
		
		Combining the estimates for $S_{\mathrm{root}}$ and $S_{\mathrm{noroot}}$, we obtain
		\begin{eqnarray*}
			\sum_{\substack{3\le \ell\le n\\ \ell\ \mathrm{prime}\\ \ell\ne \operatorname{char}(\mathbb F_q)}}
			\sum_{\substack{1\ne \tM\in H\\ \operatorname{ord}(\overline{\tM})=\ell\\ \tP\ne \tI_n}}
			\Pr\bigl[\cC\tM=\cC\bigr] = S_{\mathrm{root}}+S_{\mathrm{noroot}} \to 0,
		\end{eqnarray*}
		as $n\to \infty$.

		\medskip
		\noindent\textbf{Case 2: $\ell=\operatorname{char}(\mathbb F_q) = p $.}
		
		By the Frobenius map $x\mapsto x^{p}$, there exists $\mu\in \mathbb F_q^*$ such that $\mu^{p}=\lambda^{-1}$. Thus, we may use $\mu\tM$ as a representative. We continue to denote this representative by $\tM=\tD\tP$, so that $\tM^{p}=\tI_n$. Let $\tN:=\tM-\tI_n.$ We have $\tN^{p}=(\tM-\tI_n)^p=\tM^p-\tI_n=\bzero$. Suppose that $\tD=\operatorname{diag}(\alpha_1,\dots,\alpha_n)$ with $\alpha_i\in \mathbb F_q^*$, and $\tP$ has $t \geq 1$ $p$-cycles and $f=n-pt$ fixed points, then we write $\tP=\gamma_1\gamma_2\cdots\gamma_t$, where $t\geq 1$ and each $\gamma_j$ is a cycle of length $p$. Let
		\[
		\gamma_j:=(i_{j,1}, i_{j,2}, \cdots, i_{j, p}),\
		W_j:=\langle \mathbf{f}_{j,1}, \mathbf{f}_{j,2}, \cdots, \mathbf{f}_{j,p}\rangle_{\gf_q}.
		\]
		Note that
		\begin{eqnarray*}
			\mathbf{f}_{j,1}\tN^{p-1}
			=
			\mathbf{f}_{j,1}(\tM-\tI_n)^{p-1}
			=
			\sum_{i=0}^{p-1}
			(-1)^{p-1-i}\binom{p-1}{i}\mathbf{f}_{j,i+1}.
		\end{eqnarray*}
		Since $\binom{p-1}{i}\not\equiv 0 \pmod p$, and $\mathbf{f}_{j,1}, \mathbf{f}_{j,2}, \cdots, \mathbf{f}_{j,p}$ are linearly independent over $\gf_q$, we have $\mathbf{f}_{j,1}\tN^{p-1} \neq \bzero$. Moreover, $\tN^p=\bzero$. Since $\dim W_j = p$, the matrix of $\tN|_{W_j}$ under some suitable basis is a nilpotent Jordan block of size $p$. In particular, choose a basis $\by_1=\mathbf{f}_{j,1}, \by_i=\by_1\tN^{i-1}$ for $2\le i\le p,$ the matrix of $\tN|_{W_j}$ under this basis is
		\begin{eqnarray*}
			\begin{pmatrix}
				0&1&0&\cdots&0\\
				0&0&1&\cdots&0\\
				0&0&0&\ddots&0\\
				\vdots&\vdots&\vdots&\ddots&1\\
				0&0&0&\cdots&0
			\end{pmatrix}.
		\end{eqnarray*}
		If $i$ is the fixed point of $\tP$, then $\be_i \tM= \alpha_i \be_i$  for some $\alpha_i \in \gf_q^*$. Since $\tM^p=\tI_n$ and $\Char(\gf_q)=p$, we have $\alpha_i=1$. It follows that $\be_i\tN=\bzero$. Hence, the Jordan type of $\tN$ is $(p^t,1^f)$, where
		$f=n-pt$. Since $\tM=\tN+\tI_n$,  $\cC\tM=\cC$ if and only if $\cC\tN \subseteq \cC$. By Lemma~\ref{lem-Jor}, we have
		\begin{eqnarray*}
			\Pr[ \cC\tM=\cC] \leq \frac{\sum\limits_{\substack{
						\nu_j\ge g_j\ge g_{j+1}\ge 0\\
						\sum g_j=k
				}}
				\prod\limits_{j=1}^p
				q^{g_{j+1}(\nu_j-g_j)}
				\left[\begin{matrix}
					\nu_j-g_{j+1}\\
					g_j-g_{j+1}
				\end{matrix}\right]_q}{\genfrac{[}{]}{0pt}{}{n}{k}_q}.
		\end{eqnarray*}
		By Lemma \ref{lem-Gau}, there exist constants $c_1$, $c_2$ such that
		\begin{eqnarray*}
			\genfrac{[}{]}{0pt}{}{n}{k}_q
			\ge
			c_1 q^{k(n-k)} \mbox{ and } \genfrac{[}{]}{0pt}{}{\nu_j-g_{j+1}}{g_j-g_{j+1}}_q
			\le
			c_2
			q^{(g_j-g_{j+1})(\nu_j-g_j)}.
		\end{eqnarray*}
		Then we have 
		\begin{eqnarray*}
			\prod\limits_{j=1}^p
			q^{g_{j+1}(\nu_j-g_j)}
			\left[\begin{matrix}
				\nu_j-g_{j+1}\\
				g_j-g_{j+1}
			\end{matrix}\right]_q \leq c_2^p q^{\sum_{j=1}^{p}[g_{j+1}(\nu_j-g_j)+(g_j-g_{j+1})(\nu_j-g_j)]} = c_2^p q^{\sum_{j=1}^{p}g_{j}(\nu_j-g_j)}. 
		\end{eqnarray*}
		By setting $\nu_1=t+f$, and
		$\nu_j=t$ for $2\le j\le p$, we have 
		\begin{eqnarray*}
			\sum_{j=1}^p g_j(\nu_j-g_j)
			=
			g_1(t+f-g_1)+\sum_{j=2}^p g_j(t-g_j)=
			t\sum_{j=1}^p g_j
			+
			fg_1
			-
			\sum_{j=1}^p g_j^2 = tk+fg_1-\sum_{j=1}^p g_j^2 
		\end{eqnarray*}
		as $\sum_{j=1}^p g_j=k$. Furthermore, the number of sequences satisfying $g_1\ge\cdots\ge g_p\ge 0$ and
		$\sum_{j=1}^p g_j=k$ is at most $(k+1)^p$. Then we obtain
		\begin{eqnarray*}
			\sum\limits_{\substack{
					\nu_j\ge g_j\ge g_{j+1}\ge 0\\
					\sum g_j=k
			}}
			\prod\limits_{j=1}^p
			q^{g_{j+1}(\nu_j-g_j)}
			\left[\begin{matrix}
				\nu_j-g_{j+1}\\
				g_j-g_{j+1}
			\end{matrix}\right]_q \leq c_2^p(k+1)^p
			\max_{\substack{g_1\ge\cdots\ge g_p\ge 0\\ \sum g_j=k}}
			q^{tk+fg_1-\sum g_j^2}.
		\end{eqnarray*}
		It follows that
		$\Pr[ \cC\tM=\cC] \leq 
		\frac{c_2^p}{c_1}(k+1)^p
		q^{-E_p(k,f)}$,
		where 
		\begin{eqnarray*}
			E_p(k,f)
			=
			\min_{\substack{g_1\ge\cdots\ge g_p\ge 0\\ \sum g_j=k}}
			\left\{
			k(n-k)-tk-fg_1+\sum g_j^2
			\right\}.
		\end{eqnarray*} 
		Let $h:=g_1$. Since $g_1\ge g_2\ge\cdots\ge g_p\ge 0$ and
		$\sum_{j=1}^p g_j=k$, we have $k/p\le h\le k$. By the Cauchy-Schwarz inequality, we obtain
		\begin{eqnarray*}
			\sum_{j=2}^p g_j^2
			\ge \frac{(\sum_{j=2}^p g_j)^2}{p-1} =
			\frac{(k-h)^2}{p-1}.
		\end{eqnarray*}
		Then we have $\sum\limits_{j=1}^p g_j^2
		\ge
		h^2+\frac{(k-h)^2}{p-1}$ and
		\begin{eqnarray*}
			E_p(k,f)
			\ge
			\min_{k/p\le h\le k}
			\left\{
			k(n-k)-tk-fh+h^2+\frac{(k-h)^2}{p-1}
			\right\}.
		\end{eqnarray*}
		Denote by $Q(h):=
		k(n-k)-tk-fh+h^2+\frac{(k-h)^2}{p-1}.$ Then $Q(h)$ is a convex quadratic polynomial in $h$, and $Q'(h)=\frac{2p}{p-1}h-\frac{2k}{p-1}-f$. If $f\ge 2k$, then $Q'(k)=2k-f\le 0$. Since $Q'$ is increasing, we have $Q'(h)\le 0$ for all $h\in[k/p,k]$. It then follows that 
		\begin{eqnarray}\label{eq-fle2k}
			Q(h)\ge Q(k)= k(n-t-f)=k(p-1)t.
		\end{eqnarray}
		If $f \le 2k$, we have
		\begin{eqnarray}\label{eq-fge2k}
			Q(h) \geq Q\left(\frac{2k+(p-1)f}{2p} \right)
			=
			\frac{p-1}{4p}
			\left(4k(n-k)-f^2\right).
		\end{eqnarray}
		Define 
		\begin{eqnarray*}
			\mathcal E_p(k,f):
			=
			\begin{cases}
				k(p-1)t,
				&
				f\ge 2k,\\[6pt]
				\dfrac{p-1}{4p}\left(4k(n-k)-f^2\right),
				&
				f<2k.
			\end{cases}
		\end{eqnarray*}
		It follows from inequalities (\ref{eq-fle2k}) and (\ref{eq-fge2k}) that
		$E_p(k,f)\ge \mathcal E_p(k,f)$, and then
		$$\Pr[ \cC\tM=\cC] \leq 
		\frac{c_2^p}{c_1}(k+1)^p q^{-\mathcal E_p(k,f)}.$$

		We now sum the preceding bound over all nontrivial $\overline{\tM} \in H$ of order $p$ with $\tP \neq \tI_n$, using for each coset the normalized representative $\tM=\tD\tP$ satisfying $\tM^p=\tI_n$. For fixed $t$, the number of permutations $\tP$ with $t$ $p$-cycles and $f=n-pt$ fixed points is at most $n^{pt}$.
		Since $\alpha_{i_{j,1}}\alpha_{i_{j,2}}\cdots \alpha_{i_{j,p}}=1$ for every $p$-cycle $\gamma_j=(i_{j,1}, i_{j,2}, \cdots, i_{j, p})$, there are at most $(q-1)^{(p-1)t}$ choices for $\tD$. Let $\theta:=\log_q n$ and $\eta:=\log_q(q-1)$. Hence, we obtain
		\begin{eqnarray*}
			\sum_{\substack{
					1\neq \overline \tM\in H\\
					\operatorname{ord}(\overline \tM)=p\\
					\tP\neq \tI_n
			}}
			\Pr\left[ \cC\tM=\cC\right] \leq \frac{c_2^p}{c_1}(k+1)^p \sum_{t=1}^{\lfloor n/p\rfloor}q^{t(p\theta+(p-1)\eta)-\mathcal E_p(k,f)}.
		\end{eqnarray*}
		
		For $1 \le t \le \lfloor n/p \rfloor$, the expression $\mathcal E_p(k,f)$ takes different forms depending on whether $t \le \frac{n-2k}{p}$ (equivalently, $f \ge 2k$) or $t > \frac{n-2k}{p}$ (equivalently, $f < 2k$).  We consider these two cases separately. Denote by
		\begin{eqnarray*}
			S_1=\frac{c_2^p}{c_1}(k+1)^p \sum_{\substack{t=1\\ f \ge 2k}}^{\lfloor n/p \rfloor}q^{t(p\theta+(p-1)\eta)-\mathcal E_p(k,f)} \mbox{ and } S_2=\frac{c_2^p}{c_1}(k+1)^p \sum_{\substack{t=1\\ f < 2k}}^{\lfloor n/p \rfloor}q^{t(p\theta+(p-1)\eta)-\mathcal E_p(k,f)}.
		\end{eqnarray*}
		
		If $f\ge 2k$, then $\mathcal E_p(k,f)=k(p-1)t$ and $$t(p\theta+(p-1)\eta)-\mathcal E_p(k,f)=pt\theta+(p-1)t\eta-k(p-1)t
		=
		t\bigl(p\theta+(p-1)\eta-(p-1)k\bigr).$$ 
		Let $B_1:=p\theta+(p-1)\eta-(p-1)k$. 
		Since $k\ge (2+\varepsilon)\theta$, we have $$B_1
		\le
		p\theta+(p-1)\eta-(p-1)(2+\varepsilon)\theta =
		-\bigl(p-2+(p-1)\varepsilon\bigr)\theta+(p-1)\eta.$$
		Since $p \geq 3$ and $\varepsilon > 0$, we have $p-2+(p-1)\varepsilon > 0$. Besides, since $\eta$ is a constant depending only on the fixed $q$, we have
		$(p-1)\eta\le \frac12\bigl(p-2+(p-1)\varepsilon\bigr)\theta$ for sufficiently large $n$. Hence, there exists a constant $c=\frac12\bigl(p-2+(p-1)\varepsilon\bigr)>0$ such that $B_1
		\le -c\theta < 0$ for sufficiently large $n$. 
		Therefore, we have
		\begin{eqnarray*}
			S_1 \leq \frac{c_2^p}{c_1} (k+1)^p \sum_{t \geq 1} q^{tB_1} \leq \frac{c_2^p}{c_1}(k+1)^p \frac{q^{B_1}}{1-q^{B_1}}.
		\end{eqnarray*}
		Consider $p\log_q(k+1)+B_1
		=
		p\log_q(k+1)+p\theta+(p-1)\eta-(p-1)k$. For sufficiently large $n$, the function $p\log_q(k+1)-(p-1)k$ is monotonically decreasing on $[(2+\varepsilon)\theta,n/2]$. Moreover, since $p-2+(p-1)\varepsilon > 0$, we have
		\begin{eqnarray*}
			p\log_q(k+1)+B_1
			&\le&
			p\log_q((2+\varepsilon)\theta+1)
			+p\theta+(p-1)\eta
			-(p-1)(2+\varepsilon)\theta \\
			&=& 
			p\log_q((2+\varepsilon)\theta+1)
			-\bigl(p-2+(p-1)\varepsilon\bigr)\theta
			+(p-1)\eta, 
		\end{eqnarray*} which tends to $-\infty$ as $n$ tends to infinity.
		Therefore, $(k+1)^p q^{B_1}$ tends to zero as $n$ tends to infinity, and immediately, we obtain that $S_1$ tends to zero as $n$ tends to infinity.
		
		If $f<2k$, then $\mathcal E_p(k,f)=\frac{p-1}{4p}\bigl(4k(n-k)-f^2\bigr)$. Since $f=n-pt$, we have $t=\frac{n-f}{p}$ and $\frac{\mathcal E_p(k,f)}{t}
		=
		\frac{p-1}{4}\cdot
		\frac{4k(n-k)-f^2}{n-f}.$ Since $0\le f<2k$ and $k \leq n/2$, we have $nf<2kn\le 4k(n-k).$ It follows that
		\begin{eqnarray*}
			\frac{4k(n-k)-f^2}{n-f}
			-
			\frac{4k(n-k)}{n}
			=
			\frac{f\bigl(4k(n-k)-nf\bigr)}{n(n-f)}
			\ge 0.
		\end{eqnarray*}
		Then we have $\frac{4k(n-k)-f^2}{n-f}
		\ge
		\frac{4k(n-k)}{n}$ and $\mathcal E_p(k,f)
		\ge
		t(p-1) \cdot \frac{k(n-k)}{n}.$  Hence, we obtain
		\begin{eqnarray*}
			t(p\theta+(p-1)\eta)-\mathcal E_p(k,f) \leq t\left(
			p\theta+(p-1)\eta
			-
			(p-1)\frac{k(n-k)}{n}\right).
		\end{eqnarray*} Let $B_2
		:=
		p\theta+(p-1)\eta
		-
		(p-1)\frac{k(n-k)}{n}.$ If $k\le n/4$, then $\frac{k(n-k)}{n}\ge \frac34 k$, and $B_2
		\le
		p\theta+(p-1)\eta-\frac34(p-1)k.$ Since $k\ge (2+\varepsilon)\theta$, we have $$B_2
		\le
		\left[
		p-\frac34(p-1)(2+\varepsilon)
		\right]\theta+(p-1)\eta.$$ Note that $\frac34(p-1)(2+\varepsilon)-p
		=
		\frac{p-3}{2}+\frac34(p-1)\varepsilon>0$ for all $p \geq 3$. Since $\eta$ is a constant depending only on the fixed $q$, we have
		$(p-1)\eta\le \frac12\bigl(\frac34(p-1)(2+\varepsilon)-p\bigr)\theta$ for sufficiently large $n$. Hence, there exists a constant $c'=\frac12\bigl(\frac34(p-1)(2+\varepsilon)-p\bigr)>0$ such that $B_2
		\le -c'\theta < 0$ for sufficiently large $n$. Therefore, we have
		\begin{eqnarray*}
			S_2 \leq \frac{c_2^p}{c_1} (k+1)^p \sum_{t \geq 1} q^{tB_2} \leq \frac{c_2^p}{c_1}(k+1)^p \frac{q^{B_2}}{1-q^{B_2}}.
		\end{eqnarray*}
		Consider $p\log_q(k+1)+B_2 \leq p\log_q(k+1)+p\theta+(p-1)\eta-\frac{3}{4}(p-1)k$. For sufficiently large $n$, the function $p\log_q(k+1)-\frac{3}{4}(p-1)k$ is monotonically decreasing on $[(2+\varepsilon)\theta,n/4]$. It follows that
		\begin{eqnarray*}
			p\log_q(k+1)+B_2
			\le
			p\log_q((2+\varepsilon)\theta+1)
			+
			\left[
			p-\frac34(p-1)(2+\varepsilon)
			\right]\theta
			+(p-1)\eta,
		\end{eqnarray*} which tends to $-\infty$ as $n$ tends to infinity.
		Therefore, $(k+1)^p q^{B_2}$ goes to zero as $n$ goes to infinity, and then $S_2$ goes to zero as $n$ goes to infinity. If $n/2 \ge k > n/4$, then we have $k(n-k)/n > n/8$. Therefore, we obtain
		\begin{eqnarray*}
			p\log_q(k+1)+B_2
			\le
			p\log_q(n+1)+p\theta+(p-1)\eta-\frac{p-1}{8}n,
		\end{eqnarray*} which tends to $-\infty$ as $n$ tends to infinity. Similarly, we have $S_2$ goes to zero as $n$ goes to infinity.
		
		By the discussions above, we have
		\begin{eqnarray*}
			\sum_{\substack{
					1\neq \overline \tM\in H\\
					\operatorname{ord}(\overline \tM)=p\\
					\tP\neq \tI_n
			}}
			\Pr\left[ \cC\tM=\cC\right] \leq S_1+S_2 ,
		\end{eqnarray*} 
        which goes to zero as $n$ goes to infinity.
		
		The desired conclusion follows by combining Cases~1 and~2.
	\end{IEEEproof}
	
	\medskip
	\noindent \textbf{Summary of the proof of the Main Theorem.} When $\tP=\tI_n$, the order $\ell$ of nonidentity element $\overline{\tM}$ must satisfy $\ell \mid (q-1)$, and Theorem~\ref{thm-I} gives us 
	\[\sum_{\substack{\ell \text{ prime,}\\
			\ell~\mid~(q-1)}}
	\sum_{\substack{
			1\neq \overline \tM\in H\\
			\operatorname{ord}(\overline \tM)=\ell\\
			\tP = \tI_n
	}}
	\Pr\left[ \cC\tD=\cC\right] \to 0,
	\]
	as $n\to \infty$. In the case that $\tP \ne \tI_n$ and $\operatorname{ord}(\overline \tM)=2$, by Theorem~\ref{thm-ordertwo} we have \[
	\sum_{\substack{
			1\neq \overline \tM\in H\\
			\operatorname{ord}(\overline \tM)=2\\
			\tP\neq \tI_n}}
	\Pr\left[\cC\tM=\cC\right] \to 0,
	\]as $n\to \infty$. In the case that $\tP \ne \tI_n$ and $\operatorname{ord}(\overline \tM) \geq 3$, by Theorem~\ref{thm-primeorder} we have \[
	\sum_{\substack{\ell \text{ prime}\\ \ell \geq 3}}
	\sum_{\substack{
			1\neq \overline \tM\in H\\
			\operatorname{ord}(\overline \tM) = \ell\\
			\tP\neq \tI_n
	}}
	\Pr\left[ \cC\tM=\cC\right] \to 0,
	\] as $n\to \infty$. Combining all the results above, we now have 
	\begin{eqnarray*}
		\Pr[E_0] & \leq & \sum_{\substack{\ell \text{ prime}\\ \ell ~|~(q-1)}}
		\sum_{\substack{
				1\neq \overline \tM\in H\\
				\operatorname{ord}(\overline \tM)=\ell\\
				\tP = \tI_n
		}}
		\Pr\left[\cC\tD=\cC\right] + \sum_{\substack{
				1\neq \overline \tM\in H\\
				\operatorname{ord}(\overline \tM)=2\\
				\tP\neq \tI_n
		}}
		\Pr\left[\cC\tM=\cC\right] \\
        & & \quad  + \sum_{\substack{\ell \text{ prime}\\ \ell \geq 3}}
		\sum_{\substack{
				1\neq \overline \tM\in H\\
				\operatorname{ord}(\overline \tM) = \ell\\
				\tP\neq \tI_n
		}}
		\Pr\left[ \cC\tM=\cC\right]
	\end{eqnarray*} tends to zero as $n$ goes to infinity.
	This completes the proof of Theorem \ref{main-thm}.
	
	\section{Concluding remarks}\label{sec5}
	
	It is widely assumed in the literature that random $q$-ary linear codes have trivial automorphism groups with high probability. By carefully discussing on all possible cases of invariant subspaces under monomial isometries of prime order with respect to the quotient group, we successfully derived an affirmative result on this assumption upon the condition that $m = \min\{k, n-k\} \ge (2+ \varepsilon) \log_q n$ for any given $\varepsilon > 0$. On the other hand, we proved as well that if $m \le 2 \log_q n + C$, where $C$ is a constant independent of $n$, then the probability that the automorphism group of a random $q$-ary linear code is nontrivial is at least $\frac{1}{2} - \varepsilon$ for large enough $n$ and $\varepsilon > 0$. Both results are of importance not only in their own right, but also in their potential applications in the security analyses of cryptographic schemes based on the LCE-related problems. 
	
	If a random linear code does not admit a trivial automorphism group as assumed in algorithms utilizing the matching codewords framework, the LCE instances may have multiple solutions with non-negligible probability. The codeword-search algorithms may be possibly affected in the following ways: 
	\begin{itemize}
		\item Algorithms requiring multiple matching codeword pairs: different pairs may correspond to distinct equivalence mappings, thereby resulting in incompatible constraints.
		\item Algorithms requiring only a single pair of matching codewords: the isometry obtained through local recovery may fail to extend to a global equivalence one.
	\end{itemize}
	
	It would be naturally interesting to evaluate in a systematic way how exactly the results in this paper affect these related algorithms. In particular, the gap between the two bounds could possibly be narrowed down, which may however require a more refined yet different approach to counting the invariant subspaces. These constitute possible directions for future work.

	\appendix\label{appendix}
	
	\begin{rep}{Lemma}{lem-Jor}
		Let $V:=\gf_q^n$, $p:=\Char(\gf_q)$, and let $\tN: V \to V$ be a nilpotent linear transformation whose Jordan type is $(p^t, 1^f)$. For $1 \leq j \leq p$, define $V_j:=V\tN^{j-1}$, $V_{p+1}=\bzero$, and define the map 
		\begin{eqnarray*}
			\tN_{j} : V_j &\rightarrow& V_{j+1},\\
			\bx &\mapsto& \bx\tN.
		\end{eqnarray*} Let $\cC_j:=\cC\tN^{j-1}$ for $1 \leq j \leq p$, and $\cC_{p+1}=\bzero$. Let $K_j:=\Ker(\tN_{j})$, $\nu_j:=\dim(K_j)$, and $g_j:=\dim(\cC_j \cap K_j)$. 
		Let $T$ denote the number of $k$-dimensional subspaces $\cC \subseteq \gf_q^n$ satisfying $\cC\tN \subseteq \cC$. Then
		\[
		T \leq \sum_{\substack{
				\nu_j\ge g_j\ge g_{j+1}\ge 0\\
				\sum g_j=k
		}}
		\prod_{j=1}^p
		q^{g_{j+1}(\nu_j-g_j)}
		\left[\begin{matrix}
			\nu_j-g_{j+1}\\
			g_j-g_{j+1}
		\end{matrix}\right]_q,
		\]
		where $g_{p+1}=0$ and $\nu_j=\dim(K_j)=\left\{\begin{array}{ll}
			t+f, &  j=1, \\
			t, & 2 \leq j \leq p. 
		\end{array}\right.$
	\end{rep}
	\begin{IEEEproof}
		Since the Jordan type of $\tN$ is $(p^t, 1^f)$, $V$ can be decomposed into $V=W_1\oplus\cdots\oplus W_t\oplus W_F,$ where $\dim(W_i)=p$ and $\tN|_{W_i}$ is a nilpotent Jordan block of size $p$ under some suitable basis, $W_F$ is the $f$-dimensional subspace and $\tN|_{W_F}=\bzero$. In particular, choose a basis $\bg_{i,1},\bg_{i,2},\dots,\bg_{i,p}$ of $W_i$ such that 
		\begin{eqnarray*}
			\bg_{i,1}\tN=\bg_{i,2},\
			\bg_{i,2}\tN=\bg_{i,3},\
			\dots,\
			\bg_{i,p-1}\tN=\bg_{i,p},\
			\bg_{i,p}\tN=\bzero.
		\end{eqnarray*}
		Then we have 
		\begin{eqnarray*}
			V_1=V, \ V_j=V\tN^{j-1}=\bigoplus_{i=1}^t \langle \bg_{i,j}, \cdots, \bg_{i,p} \rangle_{\gf_q} \mbox{ for } 2 \leq j \leq p.
		\end{eqnarray*}
		Moreover, the map $\tN_j$ is surjective by the definition of $V_{j+1}=V_j\tN$. Then the kernel of $\tN_j$ on $W_i\cap V_j$ is spanned by $\bg_{i,p}$ for all $j \geq 1$. In addition, since $W_F\tN=\bzero$, we obtain $W_F \subset V_1$ and it does not appear in $V_j$ for all $j \geq 2$. Hence, we have 
		\begin{eqnarray*}
			\nu_1=\dim K_1=t+f, \ \nu_j=\dim K_j=t \mbox{ for } 2 \leq j \leq p.
		\end{eqnarray*}
		
		Let $\cC \subseteq V$ be a $k$-dimensional subspace satisfying $\cC\tN \subseteq \cC$. Define $\cC_j:=\cC\tN^{j-1}$ for $1 \leq j \leq p$, and $\cC_{p+1}=\bzero$. Then we have $\cC_j\subseteq V_j$ and
		$\cC_{j+1}=\cC_j\tN$. The map $\tN|_{\cC_j}: \cC_j \rightarrow \cC_{j+1}$ is surjective, and its kernel is $\Ker(\tN|_{\cC_j})
		=
		\cC_j\cap K_j.$ Define $g_j:=\dim(\cC_j \cap K_j)$. By the rank-nullity theorem, we obtain $g_j=\dim(\cC_j/\cC_{j+1})=\dim \cC_j-\dim \cC_{j+1}$. Summing over $j$, we obtain
		\begin{eqnarray*}
			\sum_{j=1}^p g_j=\sum_{j=1}^p (\dim \cC_j-\dim \cC_{j+1}) = \dim \cC_1=k.
		\end{eqnarray*}
		Furthermore, for each $1\le j\le p-1$, the map $\tN$ induces a surjective map 
		\begin{eqnarray*}
			\overline{\tN}_j : \cC_j/\cC_{j+1} &\rightarrow& \cC_{j+1}/\cC_{j+2},\\
			\bx + \cC_{j+1} &\mapsto& \bx\tN + \cC_{j+2}.
		\end{eqnarray*}
		Therefore, $\dim(\cC_j/\cC_{j+1})
		\ge
		\dim(\cC_{j+1}/\cC_{j+2}),$ i.e., $g_j\ge g_{j+1}.$ Thus, every $\cC$ satisfying $\cC\tN \subseteq \cC$ determines a sequence 
		\begin{eqnarray*}
			g_1\ge g_2\ge\cdots\ge g_p\ge 0,
			\
			\sum_{j=1}^p g_j=k.
		\end{eqnarray*}
		We now fix such a sequence $(g_1,\dots,g_p)$ and count the possible subspaces $\cC$ by constructing the chain $$\cC_p\subseteq \cC_{p-1}\subseteq\cdots\subseteq \cC_1.$$ 
		Suppose that $\cC_{j+1}\subseteq V_{j+1}$ has already been fixed. We estimate the number of possible $\cC_j \subseteq V_j$ satisfying $$\cC_j\tN=\cC_{j+1},
		\
		\dim(\cC_j\cap K_j)=g_j.$$
		Let $H_j:=\cC_j\cap K_j.$ Since $\cC_{j+1}\subseteq V_{j+1}\subseteq V_j$ and $K_{j+1}=K_j\cap V_{j+1}$, we have $\cC_{j+1}\cap K_j=\cC_{j+1}\cap K_{j+1}.$ Then we have $\dim(\cC_{j+1}\cap K_j)=g_{j+1}.$ Moreover, we obtain $$\cC_{j+1}\cap K_j\subseteq H_j\subseteq K_j,$$ and $\dim H_j=g_j.$ Thus, the number of possible choices for $H_j$ is 
		\begin{eqnarray*}
			\left[\begin{matrix}
				\nu_j-g_{j+1}\\
				g_j-g_{j+1}
			\end{matrix}\right]_q.
		\end{eqnarray*}
		
		After $H_j$ has been fixed, we estimate the number of possible lifts from $\cC_{j+1}$ to $\cC_j$.  Since $\cC_{j+1}\cap K_j\subseteq H_j$ and both are finite-dimensional $\gf_q$-vector spaces, one can choose a complement $H_j^0$ (in $H_j$) such that $H_j=(\cC_{j+1}\cap K_j)\oplus H_j^0$, where $\dim H_j^0=g_j-g_{j+1}.$ 
		Let $\pi_{j+1}: \cC_{j+1}\rightarrow \cC_{j+1}/\cC_{j+2}$ be the natural quotient map. Choose $\bz_1,\dots,\bz_{g_{j+1}}\in \cC_{j+1}$ such that $\pi_{j+1}(\bz_1),\dots,\pi_{j+1}(\bz_{g_{j+1}})$ form a basis of $\cC_{j+1}/\cC_{j+2}$. Equivalently, $\cC_{j+1}=\langle \bz_1,\dots,\bz_{g_{j+1}}\rangle_{\mathbb F_q}+\cC_{j+2}$. For each $\bz_s$, there exists $\bx_s\in \cC_j$ such that $\bx_s\tN=\bz_s$. Now we choose a basis $\bh_1,\dots,\bh_{g_j-g_{j+1}}$ of $H_j^0$, and choose a basis $\by_1, \dots, \by_{\dim \cC_{j+1}}$ of $\cC_{j+1}$. We claim that 
		\begin{eqnarray*}
			\mathcal{B}=\{\by_1, \cdots, \by_{\dim \cC_{j+1}}, \bh_1,\cdots,\bh_{g_j-g_{j+1}}, \bx_1,\cdots,\bx_{g_{j+1}}\}
		\end{eqnarray*}
		is a basis of $\cC_j$. In fact, for any $ \bx \in \cC_j$, we have $\bx \tN \in \cC_{j+1}$, then there exist $a_1,\dots,a_{g_{j+1}}\in \mathbb F_q$ such that $\bx\tN-\sum_{s=1}^{g_{j+1}}a_s \bz_s\in \cC_{j+2}$. Since $\cC_{j+2}=\cC_{j+1}\tN$, there exists $\by \in \cC_{j+1}$ such that $\bx\tN-\sum_{s=1}^{g_{j+1}}a_s \bz_s = \by\tN$. Thus, we have $\left(\bx-\sum_{s=1}^{g_{j+1}}a_s \bx_s -\by\right) \tN = \bzero$, and then $\bx-\sum_{s=1}^{g_{j+1}}a_s \bx_s -\by \in H_j$. Note that $H_j=(\cC_{j+1}\cap K_j)\oplus H_j^0$ and $\cC_{j+1}\cap K_j \subseteq \cC_{j+1}$. Then for any $\bx \in \cC_j$, $\bx$ can be generated by $\mathcal{B}$. Furthermore, we assume that there exist $n_1, n_2, \dots, n_{\dim \cC_{j+1}}, b_1, b_2, \dots, b_{g_j-g_{j+1}}, a_1, a_2, \dots, a_{g_{j+1}} \in \gf_q$ such that 
		\begin{eqnarray*}
			n_1\by_1+\cdots + n_{\dim \cC_{j+1}}\by_{\dim \cC_{j+1}}+b_1\bh_1 + \cdots +b_{g_j-g_{j+1}}\bh_{g_j-g_{j+1}}+ a_1\bx_1 + \cdots +a_{g_{j+1}}\bx_{g_{j+1}}=\bzero.
		\end{eqnarray*} 
		Right-multiplying both sides by $\tN$, we obtain $$\left(n_1\by_1+\cdots + n_{\dim \cC_{j+1}}\by_{\dim \cC_{j+1}}\right)\tN+\sum_{s=1}^{g_{j+1}}a_s \bz_s=\bzero.$$ Then we have $\sum_{s=1}^{g_{j+1}}a_s \bz_s \in \cC_{j+2}$. By the linear independence of $\bz_1+\cC_{j+2},\dots,\bz_{g_{j+1}}+\cC_{j+2}$, we have $a_s=0$ for $s=1, \dots, g_{j+1}$. Hence, we obtain $$n_1\by_1+\cdots + n_{\dim \cC_{j+1}}\by_{\dim \cC_{j+1}}+b_1\bh_1 + \cdots +b_{g_j-g_{j+1}}\bh_{g_j-g_{j+1}}=\bzero.$$ 
		Let $\bc:=n_1\by_1+\cdots + n_{\dim \cC_{j+1}}\by_{\dim \cC_{j+1}} \in \cC_{j+1}$, and $\bh:=b_1\bh_1 + \cdots +b_{g_j-g_{j+1}}\bh_{g_j-g_{j+1}} \in H_j^0$. Since $H_j^0 \subseteq K_j$ and $H_j=(\cC_{j+1}\cap K_j)\oplus H_j^0$, we have $H_j^0\cap \cC_{j+1}=\{\bzero\}$. Then $\bc+\bh=\bzero$ with $\bc \in \cC_{j+1}$ and $\bh \in H_j^0$, yields that $\bc=\bzero$ and $\bh=\bzero$. 
		The linear independence of $\by_1,\dots,\by_{\dim \cC_{j+1}}$ gives $n_i=0$ for $i=1,\dots,\dim \cC_{j+1}$, and the linear independence of $\bh_1,\cdots,\bh_{g_j-g_{j+1}}$ then gives $b_i=0$ for $i=1,\dots,g_j-g_{j+1}$. Therefore, $\mathcal{B}$ is a basis of $\cC_j$, i.e., $\cC_j$ can be generated by $\cC_{j+1}, H_j^0$ together with $\{\bx_s\}_{s=1}^{g_{j+1}}$. 
		
		Thus, the number of possible $\cC_j$ is determined by the number of possible choices of $\{\bx_s\}_{s=1}^{g_{j+1}}$ after fixing $\cC_{j+1}$ and $H_j$. Suppose $\bx_s^0$ is one solution of $\bx\tN=\bz_s$. Then the solution set is $\bx_s^0+K_j$. In addition, if two solutions $\bx_s$ and $\bx_s'$ satisfy $\bx_s'=\bx_s+\bh$ for some $\bh \in H_j$, then we have
		\begin{eqnarray*}
			\langle \cC_{j+1},H_j,\bx_1,\dots,\bx_s,\dots,\bx_{g_{j+1}}\rangle
			=
			\langle \cC_{j+1},H_j,\bx_1,\dots,\bx_s',\dots,\bx_{g_{j+1}}\rangle,
		\end{eqnarray*}
		i.e., they generate the same subspace $\cC_j$. Hence, the number of effective choices for each $\bx_s$ is at most
		\[
		q^{\dim(K_j/H_j)}=q^{\nu_j-g_j},
		\]
		and the number of effective choices for $\{\bx_s\}_{s=1}^{g_{j+1}}$ is at most $q^{g_{j+1}(\nu_j-g_j)}$.
		Combining this estimate with the number of choices for $H_j$, we obtain that the number of possible $\cC_j$ is at most 
		\begin{eqnarray*}
			q^{g_{j+1}(\nu_j-g_j)}
			\left[\begin{matrix}
				\nu_j-g_{j+1}\\
				g_j-g_{j+1}
			\end{matrix}\right]_q
		\end{eqnarray*} based on fixed $\cC_{j+1}$.
		
		By the above discussions, we have 
		\begin{eqnarray}\label{eq-Jor}
			T \leq \sum_{\substack{
					\nu_j\ge g_j\ge g_{j+1}\ge 0\\
					\sum g_j=k
			}}
			\prod_{j=1}^p
			q^{g_{j+1}(\nu_j-g_j)}
			\left[\begin{matrix}
				\nu_j-g_{j+1}\\
				g_j-g_{j+1}
			\end{matrix}\right]_q,
		\end{eqnarray}
		where $g_{p+1}=0$ and $\nu_j=\dim(K_j)=\left\{\begin{array}{ll}
			t+f, &  j=1, \\
			t, & 2 \leq j \leq p. 
		\end{array}\right.$
	\end{IEEEproof}
\begin{remark}
Because our proof of the main theorem needs to address the characteristic case, namely, $p=\Char(\gf_q)$, we impose this condition in Lemma \ref{lem-Jor}. In fact, the conclusions of Lemma \ref{lem-Jor} still hold for every integer $p \ge 2$.
\end{remark}
	
	\begin{rep}{Corollary}{cor-ordertwo}
		Let $\tN: \gf_q^n \to \gf_q^n$ be a nonzero linear transformation satisfying
		\[
		\tN^2=\bzero,\
		\dim (\operatorname{Im}(\tN))=t, \textrm{ and }
		\dim(\Ker (\tN))=n-t.
		\]
		Let $T$ denote the number of $k$-dimensional subspaces $\cC \subseteq \gf_q^n$ satisfying $\cC\tN \subseteq \cC$. Then
		\[
		T \leq \sum_{r=0}^{\min\{t,\lfloor k/2\rfloor\}}
		\genfrac{[}{]}{0pt}{}{t}{r}_q
		\genfrac{[}{]}{0pt}{}{n-t-r}{k-2r}_q
		q^{(n-t-k+r)r}.
		\] 
	\end{rep}
	\begin{IEEEproof}
		Since $\tN$ is a nonzero linear transformation satisfying
		$
		\tN^2=\bzero$, 
		$\dim (\operatorname{Im}(\tN))=t$, and
		$\dim(\Ker (\tN))=n-t
		$, it follows that the Jordan type of $\tN$ is $(2^t, 1^{n-2t})$.
		By letting $p=2$ in Lemma \ref{lem-Jor}, we have $\nu_1=n-t$ and $\nu_2=t$. Moreover, assume that the dimension of $\cC \cap \Ker(\tN)$ is $k-r$, i.e., $k-r=g_1=\dim(\cC \cap \Ker(\tN))$, then $g_2=r$ and $g_3=0$. It then follows from inequality (\ref{eq-Jor}) that 
		\[
		T \leq \sum_{r=0}^{\min\{t,\lfloor k/2\rfloor\}}
		\genfrac{[}{]}{0pt}{}{t}{r}_q
		\genfrac{[}{]}{0pt}{}{n-t-r}{k-2r}_q
		q^{(n-t-k+r)r}.
		\] 
	\end{IEEEproof}
\end{document}